\newenvironment{theorem}{\begin{thm}}{\end{thm}}
\newenvironment{lemma}{\begin{lem}}{\end{lem}}
\newenvironment{corollary}{\begin{cor}}{\end{cor}}
\newenvironment{proposition}{\begin{prop}}{\end{prop}}
\newenvironment{definition}{\begin{defi}}{\end{defi}}
\newenvironment{observation}{\begin{obs}}{\end{obs}}
\newenvironment{example}{\begin{exa}}{\end{exa}}
 \newcommand{\eqdef}{\mathrel{{:}{=}}}
\newcommand{\df}{\eqdef}
\newcommand{\eqnumber}[1]{\overset{\mathclap{(#1)}}{=}}
\newcommand{\set}[1]{\mathord{\{#1\}}}
  \DeclareSymbolFont{stix@largesymbols}{LS2}{stixex}{m}{n}
  \DeclareMathDelimiter{\lBrace}{\mathopen} {stix@largesymbols}{"E8}{stix@largesymbols}{"0E}
  \DeclareMathDelimiter{\rBrace}{\mathclose}{stix@largesymbols}{"E9}{stix@largesymbols}{"0F}
\newcommand{\multiset}[1]{\ensuremath{\lBrace #1 \rBrace}\xspace}
\newcommand{\join}{\bowtie}
\newcommand{\K}{\srd}
\newcommand{\srd}{\mathbb{K}}
\newcommand{\srplus}{\oplus}
\newcommand{\srtimes}{\otimes}
\newcommand{\srzero}{\overline{0}}
\newcommand{\srone}{\overline{1}}
\newcommand{\boplus}{\bigoplus}
\newcommand{\sr}{(\srd,\srplus,\srtimes,\srzero,\srone)}
\newcommand{\B}{\mathbb{B}}
\newcommand{\true}{\mathsf{true}}
\newcommand{\false}{\mathsf{false}}
\newcommand{\nat}{\ensuremath{\mathbb{N}}\xspace}
\newcommand{\rationals}{\ensuremath{\mathbb{Q}}\xspace}
\newcommand{\Q}{\rationals}
\newcommand{\tropical}{\ensuremath{\mathbb{T}}\xspace}
\newcommand{\numerical}{\ensuremath{\mathbb{Q}}\xspace}
\newcommand{\tropicalSR}{\ensuremath{\tropical = (\rationals \cup \{\infty\},\min, +, \infty,
    0)}\xspace}
\newcommand{\numericalSR}{\ensuremath{\numerical = (\rationals, +, \times, 0, 1)}\xspace}
\newcommand{\doc}{\ensuremath{d}\xspace}
\newcommand{\alphabet}{\Sigma}
\newcommand{\docs}{\ensuremath{\alphabet^*}\xspace}
\newcommand{\spanner}{\ensuremath{S}\xspace}
\newcommand{\toSpanner}[1]{\ensuremath{{\llbracket #1 \rrbracket}}}
\newcommand{\weight}[1]{{\mathsf{w}}_{#1}}
\newcommand{\repspnrk}[2]{\toSpanner{#1}_{#2}}
\newcommand{\repspnrw}[1]{\toSpanner{#1}_{\srd}}
\newcommand{\repspnr}[1]{\toSpanner{#1}}
\newcommand{\blank}{\ensuremath{\sqcup}}
\newcommand{\vals}{\ensuremath{\mathsf{D}}\xspace}
\newcommand{\vars}{\ensuremath{\mathsf{Vars}}\xspace}
\newcommand{\svars}{\vars}
\newcommand{\tup}{\ensuremath{\mathrm{t}}\xspace}
\newcommand{\tupu}{\mathrm{t^\prime}}
\newcommand{\projectTup}[2]{\ensuremath{\pi_{#2}(#1)}\xspace}
\newcommand{\emptytup}{\ensuremath{()}\xspace}
\newcommand{\spanFromTo}[2]{\ensuremath{[#1,#2\rangle}}
\newcommand{\spanij}{\spanFromTo{i}{j}\xspace}
\newcommand{\mspan}[2]{\spanFromTo{#1}{#2}}
\newcommand{\spans}{\mathsf{Spans}}
\newcommand{\lang}{\ensuremath{\mathcal{L}}\xspace}
\newcommand{\refWord}{\ensuremath{\mathrm{r}}\xspace}
\newcommand{\refWordPrime}{\ensuremath{\mathrm{r'}}\xspace}
\newcommand{\reflang}{\ensuremath{\mathcal{R}}\xspace}
\newcommand{\pre}{\mathrm{pre}}
\newcommand{\post}{\mathrm{post}}
\newcommand{\vset}{\text{VSet}}
\newcommand{\spannerClass}{\ensuremath{\mathcal S}\xspace}
\newcommand{\vsa}{\ensuremath{\mathrm{VSA}}\xspace}
\newcommand{\fvsa}{\ensuremath{\mathrm{VSA}}\xspace}
\newcommand{\ufvsa}{\ensuremath{\mathrm{uVSA}}\xspace}
\newcommand{\varop}[1]{\Gamma_{#1}}
\newcommand{\vop}[1]{\ensuremath{\triangleright_{#1}}}
\newcommand{\vcl}[1]{\ensuremath{\triangleleft_{#1}}}
\newcommand{\clr}{\ensuremath{\operatorname{doc}}\xspace}
\newcommand{\refWordFrom}[2]{\ensuremath{\operatorname{ref}(#1,#2)}\xspace}
\newcommand{\toTuple}[1]{\ensuremath{\operatorname{tup}(#1)}\xspace}
\newcommand{\fromRun}[1]{\ensuremath{\operatorname{ref}(#1)}\xspace}
\newcommand{\rn}{\rho}
\newcommand{\Rn}[2]{P({#1},{#2})}
\newcommand{\NL}{\ensuremath{\mathrm{NL}}\xspace}
\newcommand{\np}{\ensuremath{\mathrm{NP}}\xspace}
\newcommand{\conp}{\ensuremath{\mathrm{coNP}}\xspace}
\newcommand{\spanl}{\ensuremath{\mathrm{spanL}}\xspace}
\newcommand{\fp}{\ensuremath{\mathrm{FP}}\xspace}
\newcommand{\bpp}{\ensuremath{\mathrm{BPP}}\xspace}
\newcommand{\rp}{\ensuremath{\mathrm{RP}}\xspace}
\newcommand{\pitwop}{\ensuremath{\mathrm{\Pi_2^P}}\xspace}
\newcommand{\sigmatwop}{\ensuremath{\mathrm{\Sigma_2^P}}\xspace}
\newcommand{\ph}{\ensuremath{\mathrm{PH}}\xspace}
\newcommand{\fptosp}{\ensuremath{\mathrm{FP^{\#P}}}\xspace}
\newcommand{\sharpp}{\ensuremath{\mathrm{\#P}}\xspace}
\newcommand{\optp}{\ensuremath{\mathrm{OptP}}\xspace}
\newcommand{\fpras}{\ensuremath{\mathrm{FPRAS}}\xspace}
\newcommand{\fexptime}{\ensuremath{\mathrm{FEXPTIME}}\xspace}
\newcommand{\vtup}[1]{{#1}\text{-}\mathsf{Tup}}
\newcommand{\Tup}{\mathsf{Tup}}
\newcommand{\len}{\text{len}\xspace}
\newcommand{\supp}{\mathsf{Supp}}
\newcommand{\UREG}{\textsc{UReg}\xspace}
\newcommand{\REG}{\textsc{Reg}\xspace}
\newcommand{\reg}[1]{\ensuremath{\REG_{#1}}\xspace}
\newcommand{\ureg}[1]{\ensuremath{\UREG_{#1}}\xspace}
\newcommand{\regk}{\reg{\K}\xspace}
\newcommand{\uregk}{\ureg{\K}\xspace}
\newcommand{\regtrop}{\reg{\tropical}}
\newcommand{\uregtrop}{\ureg{\tropical}}
\newcommand{\regnum}{\reg{\numerical}}
\newcommand{\uregnum}{\ureg{\numerical}}
\newcommand{\w}{\ensuremath{w}\xspace}
\newcommand{\wCodom}{\text{Img}}
\newcommand{\wC}{\ensuremath{\mathcal{W}}\xspace}
\newcommand{\cW}{\wC}
\newcommand{\xvars}{\ensuremath{X}\xspace}
\newcommand{\SW}{\textsc{CWidth}\xspace}
\newcommand{\PW}{\textsc{Poly}\xspace}
\newcommand{\supts}{\ensuremath{\mathrm{S}}\xspace}
\newcommand{\suptm}{\ensuremath{\mathcal{S}}\xspace}
\newcommand{\suptsp}{\ensuremath{\pi_\xvars(\toSpanner{A})(\doc)\xspace}}
\newcommand{\suptmp}{\ensuremath{\suptm_{A,\doc}}\xspace}
\newcommand{\spcount}{\ensuremath{\operatorname{Count}}\xspace}
\newcommand{\spsum}{\ensuremath{\operatorname{Sum}}\xspace}
\newcommand{\spavg}{\ensuremath{\operatorname{Avg}}\xspace}
\newcommand{\spmin}{\ensuremath{\operatorname{Min}}\xspace}
\newcommand{\spmax}{\ensuremath{\operatorname{Max}}\xspace}
\newcommand{\quantl}{\ensuremath{\operatorname{Quantile}}\xspace}
\newcommand{\qquantl}[1]{\ensuremath{#1\text{-}\!\quantl}}
\newcommand{\spquant}{\qquantl{q}}
\newcommand{\spcountagg}{\ensuremath{\spcount(\spanner,\doc)}\xspace}
\newcommand{\spsumagg}{\ensuremath{\spsum(\spanner,\doc,\w)}\xspace}
\newcommand{\spavgagg}{\ensuremath{\spavg(\spanner,\doc,\w)}\xspace}
\newcommand{\spmaxagg}{\ensuremath{\spmax(\spanner,\doc,\w)}\xspace}
\newcommand{\spminagg}{\ensuremath{\spmin(\spanner,\doc,\w)}\xspace}
\newcommand{\spqquantagg}[1]{\ensuremath{\qquantl{#1}(\spanner,\doc,\w)}\xspace}
\newcommand{\spquantagg}{\spqquantagg{q}}
\newcommand{\spAcountagg}{\ensuremath{\spcount(\toSpanner{A},\doc)}\xspace}
\newcommand{\spAsumagg}{\ensuremath{\spsum(\toSpanner{A},\doc,\w)}\xspace}
\newcommand{\spAavgagg}{\ensuremath{\spavg(\toSpanner{A},\doc,\w)}\xspace}
\newcommand{\spAmaxagg}{\ensuremath{\spmax(\toSpanner{A},\doc,\w)}\xspace}
\newcommand{\spAminagg}{\ensuremath{\spmin(\toSpanner{A},\doc,\w)}\xspace}
\newcommand{\spAqquantagg}[1]{\ensuremath{\qquantl{#1}(\toSpanner{A},\doc,\w)}\xspace}
\newcommand{\spAquantagg}{\spAqquantagg{q}}
\newcommand{\spAcountaggapx}{\ensuremath{\spcount(\toSpanner{A},\doc,\delta)}\xspace}
\newcommand{\spAsumaggapx}{\ensuremath{\spsum(\toSpanner{A},\doc,\w,\delta)}\xspace}
\newcommand{\spAavgaggapx}{\ensuremath{\spavg(\toSpanner{A},\doc,\w,\delta)}\xspace}
\newcommand{\countp}{\textsc{Count}\xspace}
\newcommand{\aggp}{\textsc{Agg}\xspace}
\newcommand{\sump}{\textsc{Sum}\xspace}
\newcommand{\avgp}{\textsc{Average}\xspace}
\newcommand{\qquantp}[1]{#1\text{-}\textsc{Quantile}\xspace}
\newcommand{\quantp}{\ensuremath{q}\text{-}\textsc{Quantile}\xspace}
\newcommand{\minp}{\textsc{Min}\xspace}
\newcommand{\maxp}{\textsc{Max}\xspace}
\newcommand{\tsum}{\mathrm{sum}}
\newcommand{\propab}[1]{\text{Pr}(#1)}
\newcommand{\bigprobab}[1]{\ensuremath{\text{Pr}\Big( #1 \Big)}}
\newcommand{\dags}{\ensuremath{src}\xspace}
\newcommand{\dagt}{\ensuremath{snk}\xspace}
\newcommand{\dagpaths}{\ensuremath{\text{Paths}(\dags,\dagt)}\xspace}
\newlength\boxwidth
\newlength\questionwidth
\newcommand{\computeproblemWidth}[4]{
  \setlength\boxwidth{#1\linewidth}{
    \setlength\questionwidth{\boxwidth}\addtolength\questionwidth{-2cm}{
    \begin{center}
      \fbox{\parbox[t]{\boxwidth}{\centerline{#2}
          \vspace{1ex}
          \begin{tabular}{lp{\questionwidth}}
            Input: & #3\\[1pt]
            Task: & #4
          \end{tabular}}}
    \end{center}}}}
\newcommand{\enc}[1]{\|#1\|}
\keywords{Information extraction, document spanners, weighted automata, aggregation, annotated databases, provenance semirings}
\begin{document}

\title[Aggregate Queries on Extractions by Regular Expressions]{The Complexity of Aggregates over Extractions by Regular Expressions{\rsuper*}}
\titlecomment{{\lsuper*}A short version of this article has been published in a conference proceedings~\cite{DoleschalBKM21}.}

\author[J. Doleschal]{Johannes Doleschal\lmcsorcid{0000-0002-7045-7298}}[a,c]

\author[B. Kimelfeld]{Benny Kimelfeld\lmcsorcid{0000-0002-7156-1572}}[b]

\author[W. Martens]{Wim Martens\lmcsorcid{0000-0001-9480-3522}}[a]

\address{University of Bayreuth, Germany}
\email{johannes.doleschal@uni-bayreuth.de, wim.martens@uni-bayreuth.de}
\address{Technion, Haifa, Israel}
\email{bennyk@cs.technion.ac.il}
\address{Hasselt University, Belgium}

\begin{abstract}
  Regular expressions with capture variables, also known as \textquotedblleft
  regex-formulas,\textquotedblright{} extract relations of spans (intervals
  identified by their start and end indices) from text. In turn, the class of
  regular document spanners is the closure of the regex formulas under the
  Relational Algebra. We investigate the computational complexity of querying
  text by aggregate functions, such as sum, average, and quantile, on top of
  regular document spanners. To this end, we formally define aggregate functions
  over regular document spanners and analyze the computational complexity of
  exact and approximate computation. More precisely, we show that in a
  restricted case, all studied aggregate functions can be computed in polynomial
  time. In general, however, even though exact computation is intractable, some
  aggregates can still be approximated with fully polynomial-time randomized
  approximation schemes (\fpras).
\end{abstract}

\maketitle

\section{Introduction}
\label{intro}
Information extraction commonly refers to the task of extracting structured
information from text. A document spanner (or just spanner for short) is an
abstraction of an information extraction program: it states how to transform a
document into a relation over its spans. More formally, a \emph{document} is a
string $\doc$ over a finite alphabet, a \emph{span} of $\doc$ represents a
substring of $\doc$ by its start and end positions, and a \emph{spanner} is a
function that maps every document $\doc$ into a relation over the spans of
$\doc$~\cite{FaginKRV15}. The spanner framework has originally been introduced
as the theoretical basis underlying IBM's SQL-like rule system for information
extraction, namely SystemT~\cite{KrishnamurthyLRRVZ08, LiRC11}. The most studied
spanner instantiation is the class of \emph{regular spanners}, which is the closure of
regex formulas (regular expressions with capture variables) under the standard
operations of the relational algebra (projection, natural join, union, and
difference). Equivalently, the regular spanners are the ones expressible as
\emph{variable-set automata} (VSet-automata for short), which are nondeterministic
finite-state automata that can open and close capture variables. These spanners
extract from the text relations wherein the capture variables are the
attributes.

While regular spanners and natural generalizations thereof are the basis of
rule-based systems for text analytics, they are also used implicitly in other
types of systems, and particularly ones based on statistical models and machine
learning. Rules similar to regular spanners are used for \emph{feature
  generators} of graphical models (e.g., Conditional Random
Fields)~\cite{LiBC04,SuttonM12}, \emph{weak constraints} of Markov Logic
Networks~\cite{PoonD07} and extensions such as DeepDive~\cite{ShinWWSZR15}, and
the generators of \emph{noisy training data} (``labeling functions'') in the
state-of-the-art Snorkel system~\cite{RatnerBEFWR17}. Further connections to
regular spanners can potentially arise from efforts to express artificial neural
networks for natural language processing as finite-state automata~\cite{MayrY18,
  MichalenkoSVBCP19, WeissGY18}. The computational complexity of evaluating
regular spanners has been well studied from various angles, including the data
and combined complexity of answer enumeration~\cite{AmarilliBMN19,
  FlorenzanoRUVV18, FreydenbergerKP18, MaturanaRV18}, the cost of combining
spanners via relational algebra operators~\cite{PeterfreundFKK19} and recursive
programs~\cite{PeterfreundCFK19}, their dynamic
complexity~\cite{FreydenbergerT20}, evaluation in the presence of weighted
transitions~\cite{DoleschalKMP-lmcs22}, and the ability to distribute their
evaluation over fragments of the document~\cite{DoleschalKMNN19}.

In this article, we study the computational complexity of evaluating
\emph{aggregate functions} over regular spanners. These are queries that map a
document $\doc$ and a spanner $\spanner$ into a number $\alpha(\spanner(\doc))$,
where $\spanner(\doc)$ is the relation obtained by applying $\spanner$ to $\doc$
and $\alpha$ is a standard aggregate function: Count, Sum, Average, Min, Max, or
Quantile. There are various scenarios where queries that involve aggregate
functions over spanners can be useful. For example, such queries arise in the
extraction of statistics from textual resources like medical
publications~\cite{NordonKSKSR19} and news reports~\cite{SchumakerC09}. As
another example, when applying advanced text search or protein/DNA motif
matching using regular expressions~\cite{CockwellG89, NeuwaldG94}, the search
engine typically provides the (exact or approximate) number of answers, and we
would like to be able to compute this number without actually computing the
answers, especially when the number of answers is prohibitively large. Finally,
when programming feature generators or labeling functions in extractor
development, the programmer is likely to be interested in aggregate statistics
and summaries for the extractions (e.g., to get a holistic view of what is being
extracted from the dataset, such as quantiles over extracted ages and so on),
and again, we would like to be able to estimate these statistics faster than it
takes to materialize the entire set of answers.

Our main objective in this work is to understand when it is tractable to compute
$\alpha(\spanner(\doc))$. This question raises closely related questions that we
will discuss, such as when the materialization of the set of intermediate
results $\spanner(\doc)$ (which can be exponentially large) can be avoided. Furthermore, when the
exact computation of $\alpha(\spanner(\doc))$ is intractable, we study whether
it can be approximated.

At the technical level, each aggregate function (with the exception of Count)
requires a specification of how an extracted tuple of spans represents a number.
For example, the number $21$ can be represented by the span of the string
``21'', ``21.0'', ``twenty one'', ``twenty first'', ``three packs of seven'' and
so on. To abstract away from specific textual representations of numbers, we
consider several means of assigning weights to tuples. To this end, we assume
that a (representation of a) \emph{weight function} $\w$, which maps every tuple
of $\spanner(\doc)$ into a number, is part of the input of the aggregate
functions. Hence, the general form of the aggregate query we study is
$\alpha(\spanner,\doc,\w)$. The direct approach to evaluating $\alpha(\spanner,\doc,\w)$ is to
compute $\spanner(\doc)$, apply $\w$ to each tuple, and apply $\alpha$ to the resulting
sequence of numbers. This approach works well if the number of tuples in $\spanner(\doc)$
is manageable (e.g., bounded by some polynomial). However, the number of tuples
in $\spanner(\doc)$ can be exponential in the number of variables of $\spanner$, and so, the
direct approach takes exponential time in the worst case. We will identify
several cases in which $\spanner(\doc)$ is exponential, yet $\alpha(\spanner(\doc))$
can be computed in polynomial time.

It is not very surprising that, at the level of generality we adopt,
each of the aggregate functions is intractable (\#P-hard) in general.
Hence, we focus on several assumptions 
that can potentially reduce the inherent hardness of evaluation:
\begin{itemize}
\item Restricting the range of weight functions to positive numbers;
\item Restricting to weight functions that are determined by a single span
  or defined by (unambiguous) weighted VSet-automata;
\item Restricting to spanners that are represented by an unambiguous
  variant of VSet-automata;
\item Allowing for a randomized approximation (FPRAS, i.e., fully
  polynomial randomized approximation schemes).
\end{itemize}
Our analysis shows which of these assumptions brings the complexity down to
polynomial time, and which is insufficient for tractability. Importantly, we
derive an interesting and general tractable case for each of the aggregate
functions we study.

To the best of our knowledge, counting the number of tuples extracted by a
\vset-automaton (i.e., the Count aggregate function) is the only aggregation
function for document spanners which has been studied in
literature, except for Doleschal et al.~\cite{DoleschalKMP-lmcs22} who consider a variation of maximum aggregation. (Given a weighted
  \vset-automaton and a document, they study the computational complexity of
  returning a tuple with maximal weight.) Concerning counting, Florenzano et
al.~\cite{FlorenzanoRUVV18} study the problem of counting the number of
extractions of a \vset-automaton and approximation thereof is studied by Arenas
et al.~\cite{ArenasCJR19}. To be specific, Arenas et al.~\cite{ArenasCJR19} give
a polynomial-time uniform sampling algorithm from the space of words which are
accepted by an NFA and have a given length. Using that sampling, they establish
an \fpras for the Count aggregate function. Our \fpras results are based on
their results. We explain the connection between the
known results and our work in more detail throughout the article. Yet, to the best of our knowledge, this work is the first to consider aggregate functions over numerical values
extracted by document spanners.

\subsection*{Comparison to the Conference Version.} Compared to the conference version of this article~\cite{DoleschalBKM21}, the following aspects are new. We now consider constant-width weight functions, which generalize the single-variable weight functions from \cite{DoleschalBKM21};
Section \ref{sec:weight-function-expressiveness} is new; and we provide a more detailed complexity overview for regular weight functions over different semirings. Furthermore, proofs that were missing in \cite{DoleschalBKM21} are now included. On a technical level, we now use parsimonious reductions and metric reductions instead of Turing reductions for some of the results, which strengthens them.

\subsection*{Organization}
This article is organized as follows. In Section~\ref{preliminaries}, we give
preliminary definitions and notation. We summarize the main results in Section~\ref{main-results} and expand on these results in the later
sections. In Section~\ref{technical} we give some preliminary results. We
describe our investigation for constant-width weight functions, polynomial-time
weight functions and regular weight functions in Sections~\ref{cwidth},
\ref{pvar} and~\ref{reg}, respectively. Finally, we study approximation in Section~\ref{approx} and conclude in Section~\ref{conclusions}.

 \section{Preliminaries}
\label{preliminaries}
We define here the main concepts and notation that we will use throughout the article.

\subsection{Sets, Multisets, and Semirings}

The cardinality of a set $A$ is denoted by $|A|$. A \emph{multiset} over $A$ is
a function $M : A \to \nat$. We call $M(a)$ the \emph{multiplicity} of $a$ in
$M$ and say that $a \in M$ if $M(a) > 0$. The \emph{size} of $M$ denoted $|M|$,
is the sum of the multiplicities of all elements in $A$, that is $\sum_{a \in A}
M(a)$. Note that $|M|$ may be infinite. We denote multisets in brackets
$\multiset{\cdot}$ in the usual way. For example, in the multiset $M =
\multiset{1, 1, 3}$ we have $M(1) = 2$, $M(3) = 1$, and $|M| = 3$.
Furthermore, given a set $X$, we denote by $2^X$ the powerset of $X$, i.e., the set of all subsets of $X$.

A \emph{commutative monoid} $(\mathbb{M}, \ast, \text{id})$ is an algebraic
structure consisting of a set $\mathbb{M}$, a binary operation $\ast$ and an
element $\text{id} \in \mathbb{M}$, such that:
\begin{enumerate}
\item $\ast$ is associative, i.e., $(a \ast b) \ast c = a \ast (b \ast c)$ for
  all $a,b,c \in \mathbb{M}$, 
\item $\ast$ is commutative, i.e. $a \ast b = b \ast a$ for all $a,b \in
  \mathbb{M}$, and
\item $\text{id}$ is an identity, i.e., $\text{id} \ast a = a$ for all $a \in \mathbb{M}$.
\end{enumerate}
A \emph{commutative semiring} $\sr$ is an algebraic structure consisting of a
set $\srd$, containing two elements: the \emph{zero} element $\srzero$ and the
\emph{one} element $\srone$. Furthermore, it is equipped with two binary
operations, namely \emph{addition} $\srplus$ and \emph{multiplication}
$\srtimes$ such that:
\begin{enumerate}
\item $(\srd, \srplus, \srzero)$ and $(\srd, \srtimes, \srone)$ are commutative monoids,
\item multiplication distributes over addition, that is, $(a\srplus b) \srtimes
  c = (a\srtimes c) \srplus (b\srtimes c) $, for all $a,b,c \in \srd$, and
\item $\srzero$ is absorbing for $\srtimes$, that is, $\srzero \srtimes a =
  \srzero$ for all $a \in \srd$.
\end{enumerate}

\begin{example}\label{ex:semirings}The following are commutative semirings.
  \begin{enumerate}
  \item The \emph{numeric semiring} $(\Q,+,\cdot,0,1)$ over the rationals, with
    the usual addition and multiplication operators.
\item The \emph{Boolean semiring} $(\B, \vee,\wedge, \false,\true )$ where $\B
    \df \{ \false, \true\}$.
  \item The \emph{tropical semiring} $(\tropical,\min, +, \infty, 0)$ where
    $\tropical \df \Q \cup \{\infty\}$ and $\min$ stands for the binary minimum
    function.\qed
  \end{enumerate}
\end{example}

\subsection{Document Spanners and $\K$-Annotators}
This article is within the formalism of \emph{document spanners} by Fagin et
al.~\cite{FaginKRV15,FaginKRV15-sigrecord}. More precisely, we use the notion
of $\K$-annotators, as introduced by Doleschal et
al.~\cite{DoleschalKMP-lmcs22}, which enables document spanners to annotate
provenance information. 
Next, we revisit the definitions of these concepts. We
assume countably infinite and disjoint sets $\vals$ and $\svars$, containing
\emph{data values} (or simply \emph{values}) and \emph{variables}, respectively.

\subsubsection*{Documents and Spans}
Let $\Gamma$  be a finite set, disjoint from $\vals$ and $\vars$, of symbols. We refer to $\Gamma$ as an \emph{alphabet}. 
A sequence $s = \sigma_1 \cdots \sigma_n$ of
symbols where every $\sigma_i \in \Gamma$ is a \emph{string} over the set $\Gamma$. 
If $n=0$ we denote $s$ by $\varepsilon$ and call $s$
\emph{empty}. By $\Gamma^*$ we denote the set of all strings over $\Gamma$. We
denote by $|s|$ the length $n$ of a string $s \in \Gamma^*$.
In the context of Information Extraction, we will restrict ourselves to a subset of symbols of $\Gamma$, which we will always denote as $\alphabet$. We typically use the letter $\doc$ (and indexed variations thereof) to denote strings over $\alphabet$ and refer to such strings as \emph{documents}.

A \emph{span} of $\doc$ is an expression of the form \spanij with $1 \leq i \leq
j \leq n+1$. For a span \spanij of \doc, we denote by $\doc_{\spanij}$ the
string $\sigma_i \cdots \sigma_{j-1}$. A span $\spanij$ is empty if $i=j$ which
implies that $\doc_{\spanij}=\varepsilon$. Two spans $\spanFromTo{i_1}{j_1}$ and $\spanFromTo{i_2}{j_2}$ are \emph{equal}
if $i_1 = i_2$ and $j_1 = j_2$. In particular, we observe that two spans do not
have to be equal if they select the same string. That is,
$\doc_{\spanFromTo{i_1}{j_1}} = \doc_{\spanFromTo{i_2}{j_2}}$ does not imply
that $\spanFromTo{i_1}{j_1} = \spanFromTo{i_2}{j_2}$.
For a document \doc, we denote by
$\spans(\doc)$ the set of all possible spans of \doc and by $\spans$ the set of
all possible spans of all possible documents. 

\subsubsection*{$\srd$-relations and $\srd$-annotators}
Let
$V\subseteq \svars$ be a finite set of variables.
A $V$-\emph{tuple} is a function $\tup : V \to \vals$ that assigns values to
variables in $V$. We sometimes leave $V$ implicit when the precise set is not
important.
For such a tuple $\tup$, we denote the set $V$ by $\vars(\tup)$.
We denote the set of all $V$-tuples by $\vtup{V}$.  For a subset $X
\subseteq \svars$, we denote the restriction of $\tup$ to the variables in $X$
by $\pi_X(\tup)$ or simply $\pi_X\tup$.  We say that a tuple $\tup$ is \emph{empty}, denoted by $\tup =
\emptytup$, if $\vars(\tup) = \emptyset$. 

 A \emph{$\srd$-relation $R$ over $V$}
is a function $R: \vtup{V} \rightarrow \srd$ such that its \emph{support},
defined by $\supp{(R)} \df \{\tup \mid R(\tup) \neq \srzero \}$, is finite. We
will also write $\tup \in R$ to abbreviate $\tup \in \supp{(R)}$. Furthermore,
we say that two $\srd$-relations $R_1$ and $R_2$ are \emph{disjoint} if
$\supp(R_1) \cap \supp(R_2) = \emptyset$. The \emph{size} of a $\srd$-relation $R$ is
the size of its support, that is, $|R| \df |\supp(R)|$. The \emph{arity}
of a $V$-tuple $\tup$ is the cardinality $|V|$ of $V$ and, similarly, the
\emph{arity} of a $\srd$-relation over $V$ is $|V|$.

The framework focuses on functions
that extract spans from documents and assigns them to variables. Since we will
be working with relations over spans, also called \emph{span relations}, we
assume that $\vals$ is such that $\spans \subseteq \vals$. A \emph{$\doc$-tuple}
$\tup$ is a $V$-tuple which only assigns values from $\spans(\doc)$, that is,
$\tup(x) \subseteq \spans(\doc)$ for every $x \in \vars(\tup)$. If the document
$\doc$ is clear from the context, we sometimes say simply \emph{tuple} instead
of $\doc$-tuple. We denote by $\doc_{\tup}$ the tuple
$(\doc_{\tup(x_1)}, \ldots , \doc_{\tup(x_n)})$, where $\vars(\tup) =
\{x_1,\ldots,x_n\}$.

A \emph{$\K$-weighted span relation over document $\doc$ and variables $V$} is a
$\srd$-relation $R$ wherein every tuple is a $\doc$-tuple $\tup$ with $\vars(\tup) =
V$. We also denote $V$ by $\vars(R)$.
A \emph{$\K$-weighted string relation} is a
$\K$-relation $R$ wherein every tuple $\tup \in R$ assigns strings, that is,
$\tup(x) \in \docs$ for every variable $x \in \vars(\tup)$.
Note that we can associate a string relation to every span relation over a
document $\doc$ by replacing every span $\spanij$ with the string $\doc_{\spanij}$.

\def\n#1{\textsf{\tiny{#1}}}
\def\s#1{\texttt{#1}}
\begin{figure}[t]
  \centering\small
  {
    \setlength{\tabcolsep}{.3ex}
    \begin{tabular}{
          cccccccccc
          cccccccccc
          cccccccccc
          cccccccccc
          c
          }
          $\s{T}$ & $\s{h}$ & $\s{e}$  & $\s{r}$ & $\s{e}$ & $\s{\blank}$
          & $\s{a}$ & $\s{r}$ & $\s{e}$ & $\s{\blank}$                                                  
          & $\s{7}$ & $\s{\blank}$ & $\s{e}$ & $\s{v}$ & $\s{e}$ & $\s{n}$ & $\s{t}$ & $\s{s}$ & $\s{\blank}$
          & $\s{i}$ & $\s{n}$ & $\s{\blank}$  & $\s{B}$ & $\s{e}$ & $\s{l}$ & $\s{g}$ & $\s{i}$ & $\s{u}$ & $\s{m}$  & $\s{,}$  & $\s{\blank}$
          & $\s{1}$ & $\s{0}$ & $\s{-}$ & $\s{1}$ & $\s{5}$ & $\s{\blank}$ & $\s{i}$ & $\s{n}$ & $\s{\blank}$
          & 
          \\\cmidrule{1-40}
          $\n{1}$ & $\n{2}$ & $\n{3}$  & $\n{4}$ & $\n{5}$ & $\n{6}$ & $\n{7}$ & $\n{8}$ & $\n{9}$ & $\n{10}$
          & $\n{11}$ & $\n{12}$ & $\n{13}$ & $\n{14}$ & $\n{15}$ & $\n{16}$ & $\n{17}$ & $\n{18}$ & $\n{19}$ & $\n{20}$
          & $\n{21}$ & $\n{22}$ & $\n{23}$ & $\n{24}$ & $\n{25}$  & $\n{26}$ & $\n{27}$ & $\n{28}$ & $\n{29}$ & $\n{30}$
          & $\n{31}$ & $\n{32}$ & $\n{33}$ & $\n{34}$ & $\n{35}$ & $\n{36}$ & $\n{37}$ & $\n{38}$ & $\n{39}$ & $\n{40}$
          & \\\addlinespace[1.5\defaultaddspace]
          $\s{F}$ & $\s{r}$ & $\s{a}$ & $\s{n}$ & $\s{c}$ & $\s{e}$ & $\s{,}$ & $\s{\blank}$
          & $\s{4}$ & $\s{\blank}$ & $\s{i}$ & $\s{n}$ & $\s{\blank}$
          & $\s{L}$ & $\s{u}$ & $\s{x}$ & $\s{e}$ & $\s{m}$ & $\s{b}$ & $\s{o}$ & $\s{u}$ & $\s{r}$ & $\s{g}$ & $\s{,}$ & $\s{\blank}$ 
          & $\s{t}$ & $\s{h}$ & $\s{r}$ & $\s{e}$ & $\s{e}$ & $\s{\blank}$ & $\s{i}$ & $\s{n}$ & $\s{\blank}$
          & $\s{B}$ & $\s{e}$ & $\s{r}$ & $\s{l}$ & $\s{i}$ & $\s{n}$ & $\s{.}$ 
          \\\midrule
          $\n{41}$ & $\n{42}$ & $\n{43}$ & $\n{44}$ & $\n{45}$ & $\n{46}$ & $\n{47}$ & $\n{48}$ & $\n{49}$ & $\n{50}$
          & $\n{51}$ & $\n{52}$ & $\n{53}$ & $\n{54}$ & $\n{55}$ & $\n{56}$ & $\n{57}$ & $\n{58}$ & $\n{59}$ & $\n{60}$
          & $\n{61}$ & $\n{62}$ & $\n{63}$ & $\n{64}$ & $\n{65}$ & $\n{66}$ & $\n{67}$ & $\n{68}$ & $\n{69}$ & $\n{70}$
          & $\n{71}$ & $\n{72}$ & $\n{73}$ & $\n{74}$ & $\n{75}$ & $\n{76}$ & $\n{77}$ & $\n{78}$ & $\n{79}$ & $\n{80}$
          & $\n{81}$                                        
    \end{tabular}  
  }
  
  \medskip
  
   \begin{tabular}[t]{ll}
    \toprule
    $x_{\textsf{loc}}$ & $x_{\textsf{events}}$\\
    \midrule
    $\mspan{23}{30}$ & $\mspan{11}{12}$  \\
    $\mspan{41}{47}$ & $\mspan{32}{37}$  \\
    $\mspan{54}{64}$ & $\mspan{49}{50}$  \\
    $\mspan{75}{81}$ & $\mspan{66}{71}$  \\
    \bottomrule
   \end{tabular}
   \qquad
  \begin{tabular}[t]{ll}
    \toprule
    $\doc_{x_{\textsf{loc}}}$ & $\doc_{x_{\textsf{events}}}$\\
    \midrule
    Belgium & 7 \\
    France & 10-15 \\
    Luxembourg & 4 \\
    Berlin & three \\
    \bottomrule
  \end{tabular}
  \qquad
   \caption{A document $\doc$ (top), a span relation $R$ (bottom left) and the
     corresponding string relation (bottom right).
     \label{fig:MainExample}}
\end{figure}

\begin{example}\label{ex:MainExample}Consider the document in Figure~\ref{fig:MainExample}. The table on the bottom
  left depicts a ($\B$-weighted) span relation $R$, encoding a possible extraction of locations
  with the corresponding number of events. The string relation at
  the bottom right is the corresponding string relation.\qed
\end{example}

\begin{definition}A \emph{$\srd$-annotator} (or \emph{annotator} for short) is a function $\spanner$
  that is associated to a finite set $V\subseteq \svars$ of variables and maps
  each document $\doc$ into a $\srd$-weighted span relation over $V$. We denote $V$ by
  $\vars(\spanner)$. We sometimes also refer to a $\srd$-annotator as an
  \emph{annotator over $\srd$} when we want to emphasize the semiring.
\end{definition}
\begin{example}
  As an example of a $\srd$-weighted annotator, consider again the setting in Example~\ref{ex:MainExample}. A $\Q$-weighted annotator in this setting is the function $S$ that maps each document $d$ to the span relation $R$ in which the tuples are pairs, consisting of a name of a country and a number (or numeric range), and in which the weight associated to each tuple is the smallest value in the numeric range. An example of such a tuple for the document in Figure~\ref{fig:MainExample} would be $\tup_1$ with $\tup_1(x_{\textsf{loc}}) = \mspan{23}{30}$ (the span of ``Belgium'') and $\tup_1(x_\textsf{events}) = \mspan{11}{12}$ (the span of ``7'').  Another example would be $\tup_2$ with $\tup_2(x_{\textsf{loc}}) = \mspan{41}{47}$ (the span of ``France'') and $\tup_1(x_\textsf{events}) = \mspan{32}{37}$ (the span of ``10-15''). The relation $R$ would assign $R(\tup_1) = 7$ and $R(\tup_2) = 10$.
\qed\end{example}

We say that two $\srd$-annotators $S_1$ and $S_2$ are \emph{disjoint} if, for
every document $\doc \in \docs$, the $\srd$-relations $S_1(\doc)$ and
$S_2(\doc)$ are disjoint. Furthermore, we denote by $\spanner=\spanner'$ the
fact that $\spanner$ and $\spanner'$ define the same function.

Notice that $\B$-annotators, i.e., annotators over the Boolean semiring
are simply the \emph{functional document spanners} as defined by Fagin et
al.~\cite{FaginKRV15, FaginKRV15-sigrecord}. Throughout this article, we refer to
$\B$-annotators as \emph{document spanners} (also \emph{spanner} for short).

\subsection{Algebraic Operators on \texorpdfstring{$\srd$-Relations}{K-Relations} and \texorpdfstring{$\srd$-Annotators}{K-Annotators}}\label{sec:projection}
Green et al.~\cite{GreenKT07} defined a set of operators on $\srd$-relations
that naturally correspond to relational algebra operators and map
$\srd$-relations to $\srd$-relations. As in much of the work on semirings in provenance, they do not consider the \emph{difference}
  operator (which would require additive inverses). More precisely, they define the algebraic
operators \emph{union},
\emph{projection}, and \emph{natural join} for all finite sets $V_1, V_2
\subseteq \svars$ and for all $\srd$-relations $R_1$ over $V_1$ and $R_2$ over
$V_2$, as follows.
\begin{itemize}
\item \textbf{Union}: If $V_1 = V_2$ then the union $R \df R_1 \cup R_2$ is a
  function $R : \vtup{V_1} \rightarrow \srd$ defined by $R(\tup) \df R_1(\tup)
  \srplus R_2(\tup)$. (Otherwise, the union is not defined.)
\item \textbf{Projection}: For $X\subseteq V_1$, the projection $R\df \pi_X(R_1)$
  is a function $R: \vtup{X} \rightarrow \srd$ defined by
 	\[
    R(\tup) \df \boplus_{\tup = \pi_X(\tup^\prime) \text{ and
      }R_1(\tup^\prime)\ne \srzero} R_1(\tup^\prime)\;.
  \]
\item \textbf{Natural Join:} The natural join $R \df R_1 \join R_2 $ is a
  function $R : \vtup{(V_1 \cup V_2)} \rightarrow \srd$ defined by
  \[ 
    R(\tup)  \df R_1(\pi_{V_1}(\tup)) \srtimes R_2(\pi_{V_2}(\tup))\;.
  \]
\end{itemize}

\begin{proposition}[{Green et al.~\cite{GreenKT07}}]\label{prop:Kalgebra}The above operators preserve the finiteness of the supports. Therefore, they
  map $\srd$-relations into $\srd$-relations.
\end{proposition}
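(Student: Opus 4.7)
The plan is to handle the three operators separately and, in each case, exhibit an explicit finite superset of the support of the output, expressed in terms of the (finite) supports of the inputs. No semiring identities beyond those in the definition of a commutative semiring are needed; in particular, the key algebraic facts used are that $\srzero \srplus \srzero = \srzero$ and that $\srzero$ is absorbing for $\srtimes$.

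First I would dispatch \textbf{union}. Assuming $V_1 = V_2$, if $(R_1 \cup R_2)(\tup) \neq \srzero$ then $R_1(\tup) \srplus R_2(\tup) \neq \srzero$, and since $\srzero \srplus \srzero = \srzero$ we must have $R_1(\tup) \neq \srzero$ or $R_2(\tup) \neq \srzero$. Hence $\supp(R_1 \cup R_2) \subseteq \supp(R_1) \cup \supp(R_2)$, a finite union of finite sets. For \textbf{projection} onto $X \subseteq V_1$, I would first observe that the sum defining $(\pi_X R_1)(\tup)$ ranges over $\{\tup' \in \supp(R_1) \mid \pi_X(\tup') = \tup\}$, which is a finite set (hence the sum is well-defined). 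Moreover, if $(\pi_X R_1)(\tup) \neq \srzero$ then at least one such $\tup'$ must exist, giving $\supp(\pi_X R_1) \subseteq \{\pi_X(\tup') \mid \tup' \in \supp(R_1)\}$, the image of a finite set under $\pi_X$, which is finite.

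For the \textbf{natural join} $R \df R_1 \join R_2$, I would use that $\srzero$ is absorbing under $\srtimes$: if $R_1(\pi_{V_1}(\tup)) = \srzero$ or $R_2(\pi_{V_2}(\tup)) = \srzero$, then $R(\tup) = \srzero$. Contrapositively, $\tup \in \supp(R)$ forces $\pi_{V_1}(\tup) \in \supp(R_1)$ and $\pi_{V_2}(\tup) \in \supp(R_2)$. Therefore $\supp(R)$ is contained in the set of $(V_1 \cup V_2)$-tuples obtainable by combining a tuple from $\supp(R_1)$ with a \emph{compatible} tuple from $\supp(R_2)$ (i.e.~agreeing on $V_1 \cap V_2$), and this set has cardinality at most $|\supp(R_1)| \cdot |\supp(R_2)|$, hence is finite.

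There is essentially no hard step; the only subtlety is to remember to invoke the absorbing property of $\srzero$ for the join case, without which the containment of the support would not go through. Once these three containments are established, the finiteness claim is immediate, and so each of $R_1 \cup R_2$, $\pi_X R_1$, and $R_1 \join R_2$ is a $\srd$-relation over its respective variable set.
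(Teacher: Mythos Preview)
Your proof is correct. The paper does not actually supply a proof of this proposition; it simply attributes the result to Green et al.~\cite{GreenKT07} and moves on, so there is nothing to compare against beyond noting that your case-by-case argument via explicit finite supersets of the support is the standard one.
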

Hence, we obtain an algebra on $\srd$-relations.

We now lift the relational algebra operators on $\K$-relations to the level of
$\K$-annotators. For all documents $\doc$ and for all annotators $\spanner_1$ and $\spanner_2$
associated with $V_1$ and $V_2$, respectively, we define the following:
\begin{itemize}
	\item \textbf{Union:} If $V_1= V_2$ then the union $\spanner \df \spanner_1 \cup \spanner_2$ is
    defined by $\spanner(\doc) \df \spanner_1 (\doc)\cup \spanner_2(\doc)$.\footnote{Here, $\cup$
      stands for the union of two $K$-relations as was defined previously. The
      same is valid also for the other operators.} 
	\item \textbf{Projection:} For $X \subseteq V_1$, the projection $\spanner \df \pi_X
    \spanner_1$ is defined by $\spanner(\doc) \df \pi_X \spanner_1(\doc)$. 
	\item \textbf{Natural Join:} The natural join $\spanner \df \spanner_1 \join \spanner_2 $ is
    defined by $\spanner(\doc) \df  \spanner_1(\doc) \join \spanner_2(\doc)$.
\end{itemize}
Due to Proposition~\ref{prop:Kalgebra}, it follows that the above operators form
an algebra on $\srd$-annotators.

\subsection{Ref-Words}
We use \emph{weighted \vset-automata} (or simply \emph{\vset-automata} for the
Boolean semiring) in order to represent $\srd$-annotators. Following
Freydenberger~\cite{Freydenberger19}, we introduce so-called \emph{ref-words},
which connect spanner representations with regular languages. We also introduce
unambiguous and functional \vset-automata, which have properties essential to
the tractability of some problems we study.

For a finite set $V \subseteq \svars$ of variables, ref-words are defined over
the extended alphabet $\alphabet \cup \varop{V}$, where $\varop{V} \eqdef
\{\vop{x} \mid x \in V \} \cup \{\vcl{x} \mid x \in V \}$. We assume that
$\varop{V}$ is disjoint with $\alphabet$ and $\svars$. Ref-words extend strings
over $\alphabet$ by encoding opening ($\vop{x}$) and closing ($\vcl{x}$) of
variables.

A ref-word $\refWord \in (\alphabet \cup \varop{V})^*$ is \emph{valid} if every
occurring variable is opened and closed exactly once. More formally, for each $x
\in V$, the string \refWord has precisely one occurrence of $\vop{x}$ and
precisely one occurrence of $\vcl{x}$, which is after the occurrence of
$\vop{x}$. For every valid ref-word $\refWord$ over $(\alphabet \cup \varop{V})$, we
define $\vars(\refWord)$ as the set of variables $x \in V$ which occur in the
ref-word. More formally,
\[
  \vars(\refWord) \;\eqdef\; \{ x \in V \mid \exists \refWord_x^{\pre},
  \refWord_x, \refWord_x^{\post} \in (\alphabet \cup \varop{V})^* \text{ such that }
  \refWord = \refWord_x^{\pre} \cdot \vop{x} \cdot \refWord_x \cdot \vcl{x}
  \cdot \refWord_x^{\post}\}.
\]
Intuitively, each valid ref-word $\refWord$ encodes a $\doc$-tuple for some
document $\doc$, where the document is given by symbols from $\sigma$ in
$\refWord$ and the variable markers encode where the spans begin and end.
Formally, we define functions $\clr$ and $\operatorname{tup}$ that, given a
valid ref-word, output the corresponding document and
tuple.\footnote{The function \clr is sometimes also called
  $\operatorname{clr}$ in literature (cf. Freydenberger et
  al.~\cite{FreydenbergerKP18}).} The morphism $\clr \colon (\alphabet \cup
\varop{V})^* \to \alphabet^*$ is defined on single symbols as:
\[
  \clr(\sigma) \;\eqdef\; \begin{cases}
    \sigma & \text{if }\sigma \in \alphabet \\
    \varepsilon & \text{if } \sigma \in \varop{V}
  \end{cases}
\]
and we define $\clr(\sigma_1 \cdots \sigma_n) \eqdef \clr(\sigma_1)\cdots\clr(\sigma_n)$.

We now define the function tup. 
By definition, every valid ref-word $\refWord$ over
$(\alphabet \cup \varop{V})$ has a unique factorization
\[
  \refWord \;=\; \refWord_x^{\pre} \cdot \vop{x} \cdot \refWord_x \cdot \vcl{x}
  \cdot \refWord_x^{\post}
\]
for each $x \in \vars(\refWord)$. We then define the function
$\operatorname{tup}$ as
\[
  \operatorname{tup}(\refWord) \;\eqdef\; \{ x \mapsto
  \spanFromTo{i_x}{j_x} \mid x \in \vars(\refWord), i_x =
  |\clr(\refWord_x^{\pre})|, j_x=i_x + |\clr(\refWord_x)| \}\;.
\]
The usage of the \clr morphism in the above definition ensures that the indices $i_x$ and $j_x$ refer to
positions in the document and do not consider other variable operations.

A \emph{ref-word language} $\reflang$ is a language of ref-words. 
We say that $\reflang$ is \emph{functional} if every ref-word $\refWord \in
\reflang$ is valid and there is a set $V$ of variables such that
$\vars(\refWord) = V$ for each $\refWord \in \reflang$.

Given a functional ref-word language $\reflang$, the spanner
$\toSpanner{\reflang}$ represented by $\reflang$ is given by
\[
  \toSpanner{\reflang}(\doc) \;\eqdef\; \big\{\toTuple{\refWord} \mid \refWord
  \in \reflang \text{ and } \clr(\refWord) = \doc \big\}\;.
\]

\subsubsection*{The Variable Order Condition and the $\operatorname{ref}$ Function}
Let $\refWord = \vop{x_1}\vop{x_2}a \vcl{x_1}\vcl{x_2}$ and $\refWordPrime =
\vop{x_1}\vop{x_2}a \vcl{x_2}\vcl{x_1}$ be ref-words. We observe that both
ref-words encode the tuple that selects the span $\spanFromTo{1}{2}$ in both
variables $x_1,x_2$ on document $a$. Thus, the same spanner can be represented
by multiple ref-word languages. We now introduce the \emph{variable order
  condition}, in order to achieve a one-to-one mapping between ref-words (resp.,
ref-word languages) and tuples (resp., spanners). To this end, we fix a total
linear order $\prec$ on the set $\varop{\svars}$ of variable operations, such
that $\vop{v} \prec \vcl{v}$ for every variable $v \in \svars$. We say that a
ref-word $\refWord$ \emph{satisfies the variable order condition} if all
adjacent variable operations in $\refWord$ are ordered according to the fixed
linear order $\prec$. That is, the ref-word $\refWord = \sigma_1 \cdots
\sigma_n$ satisfies the variable order condition if $\sigma_i \prec
\sigma_{i+1}$ for every $1 \leq i < n$ with $\sigma_i,\sigma_{i+1} \in
\varop{\svars}$. We observe that, for every document $\doc$ and every tuple
and $\tup = \toTuple{\refWord}$ that satisfies the variable order condition. 

We define $\operatorname{ref}$ as the function that, given a document $\doc$ and a
\doc-tuple $\tup$, returns the unique ref-word that satisfies the variable
order condition.
The following observation shows the connections between the functions $\clr$,
$\operatorname{ref},$ and $\operatorname{tup}$.
\begin{observation}\label{obs:canonicalRefWord}Let $\refWord$ be a valid ref-word and let $\refWordPrime \eqdef
  \refWordFrom{\clr(\refWord)}{\toTuple{\refWord}}$. Then $\toTuple{\refWord} =
  \toTuple{\refWordPrime}$. Furthermore, $\refWord = \refWordPrime$ if and only
  if $\refWord$ satisfies the variable order condition.
\end{observation}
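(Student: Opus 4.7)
The plan is to derive both statements directly from the defining property of the function $\operatorname{ref}$, namely that for every document $\doc$ and every $\doc$-tuple $\tup$, $\refWordFrom{\doc}{\tup}$ is the unique ref-word satisfying the variable order condition with $\clr(\refWordFrom{\doc}{\tup}) = \doc$ and $\toTuple{\refWordFrom{\doc}{\tup}} = \tup$. This existence-and-uniqueness fact is precisely the observation stated just before the definition of $\operatorname{ref}$, and I will invoke it in both parts.

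For the first assertion, $\toTuple{\refWord} = \toTuple{\refWordPrime}$, I would simply unfold the definition. Since $\refWordPrime \eqdef \refWordFrom{\clr(\refWord)}{\toTuple{\refWord}}$, the defining property of $\operatorname{ref}$ directly yields $\toTuple{\refWordPrime} = \toTuple{\refWord}$ (and as a free byproduct also $\clr(\refWordPrime) = \clr(\refWord)$, which is needed next).

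For the second assertion, the forward direction is immediate: if $\refWord = \refWordPrime$, then $\refWord$ satisfies the variable order condition, because $\refWordPrime$ does so by construction. For the backward direction, suppose $\refWord$ itself satisfies the variable order condition. Then both $\refWord$ and $\refWordPrime$ are ref-words that (i) satisfy the variable order condition, (ii) have the same $\clr$-image $\clr(\refWord)$ (using the byproduct above), and (iii) induce the same tuple $\toTuple{\refWord}$ (using the first assertion). The uniqueness clause of the preceding observation then forces $\refWord = \refWordPrime$.

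The main obstacle — really the only nontrivial content in the argument — is the existence-and-uniqueness of the canonical ref-word itself, and this is settled in the observation immediately preceding the statement under consideration. Consequently the proof reduces to a transparent bookkeeping of the definitions of $\clr$, $\operatorname{tup}$, and $\operatorname{ref}$, with no need to reason explicitly about the factorization $\refWord = \refWord_x^{\pre} \cdot \vop{x} \cdot \refWord_x \cdot \vcl{x} \cdot \refWord_x^{\post}$ or about how rearranging adjacent variable operations leaves $\clr$ and $\operatorname{tup}$ invariant.
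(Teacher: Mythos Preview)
Your proof is correct. The paper states this as an observation without proof, treating it as immediate from the definition of $\operatorname{ref}$; your argument spells out precisely the unfolding of definitions that the paper leaves implicit, and there is no alternative approach to compare against.
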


Analogously to functionality, we say that a ref-word language $\reflang$
satisfies the variable order condition if every ref-word $\refWord \in \reflang$
satisfies the variable order condition.

\subsection{(Weighted) Variable Set-Automata}\label{sec:vset-automata}

In this section, we revisit the definition of \emph{weighted \vset-automaton} as a
formalism to represent $\K$-annotators~\cite{DoleschalKMP-lmcs22}. This formalism is a natural
generalization of both \vset-automata and weighted automata \cite{DrosteKV09}. Throughout the article, we will use weighted \vset-automata for two purposes: we use the \vset-automata over the Boolean semiring $\mathbb{B}$ for extracting spans from documents (as in the usual document spanner framework \cite{FaginKRV13}) and the more general $\srd$-weighted \vset-automata as one formalism for weight functions. (We discuss all considered variants for weight functions in Section~\ref{sec:weightfunctions}.)

Let $V \subseteq \vars$ be a finite set of variables. A \emph{weighted
  variable-set automaton over semiring $\K$} (alternatively, a \emph{weighted
  \vset-automaton} or a \emph{$\K$-weighted \vset-automaton}) is a tuple $A \df
(\alphabet, V,Q, I, F, \delta)$ where $\alphabet$ is a finite alphabet; $V\subseteq
\svars$ is a finite set of variables; $Q$ is a finite set of \emph{states}; $I :
Q \to \srd$ is the \emph{initial weight function}; $F: Q \to \srd$ is the
\emph{final weight function}; and $\delta : Q \times (\alphabet \cup
\{\varepsilon\} \cup \varop{V} ) \times Q \rightarrow \srd$ is a
\emph{($\K$-weighted) transition} function. 
We define the \emph{transitions} of $A$ as the set of triples $(p,o,q)$ with
$\delta(p,o,q) \ne \srzero$. Likewise, the \emph{initial} (resp.,
\emph{accepting}) states are those states $q$ with $I(q) \neq \srzero$ (resp.,
$F(q)\neq \srzero$).  For every semiring element $a \in \srd$, we denote the length of
the encoding of $a$ by $\enc{a}$. The \emph{size} of a weighted \vset-automaton
$A$ is defined by
\[
  |A| \eqdef |Q|+\sum_{q\in Q}\enc{I(q)}+\sum_{q\in Q}\enc{F(q)}+\sum_{p,q\in
    Q,\; a\in(\Sigma\cup\{\varepsilon\}\cup \varop{V})} \enc{\delta(p,a,q)}\;.
\]

Runs of $A$ are defined over ref-words. More precisely, a \emph{run} $\rn$ of $A$ on ref-word $\refWord = \sigma_1
\ldots \sigma_m$ is a sequence
$q_0 \overset{\sigma_1}{\rightarrow} \cdots \overset{\sigma_{m-1}}{\rightarrow}
  q_{m-1} \overset{\sigma_{m}}{\rightarrow} q_{m}$
where: 
\begin{itemize}
\item $I(q_0) \neq \srzero$ and $F(q_{m}) \neq \srzero$;
\item $\delta(q_i,\sigma_{i+1},q_{i+1}) \neq \srzero$ for all $0 \leq i < m$.
\end{itemize}
We say that a run $\rn$ is \emph{on a document $\doc$} if $\rn$ is a run on $\refWord$
and $\clr(\refWord) = \doc$. Furthermore, overloading notation, given a
run $\rn$ of $A$ on $\refWord$, we denote $\refWord$ by $\fromRun{\rn}$. We
define the \emph{ref-word language $\reflang(A)$} as the set of all ref-words
$\refWord$ such that $A$ has a run on $\refWord$.

The \emph{weight} of a run is obtained by $\srtimes$-multiplying the weights of
its constituent transitions. Formally, the weight $\weight{\rn}$ of $\rn$ is an
element in $\srd$ given by the expression
\[
  I(q_0) \srtimes  \delta(q_0,\sigma_1,q_1) \srtimes \cdots \srtimes
  \delta(q_{m},\sigma_{m},q_{m+1}) \srtimes F(q_{m+1})\;.
\]
We call $\rn$ \emph{nonzero} if $\weight{\rn} \neq \srzero$. Furthermore, $\rn$
is called \emph{valid} if $\fromRun{\rn}$ is valid and
\[\vars(\toTuple{\fromRun{\rn}}) = V\;.\footnote{Note that the second condition
  ensures that all valid runs are over the same set of variables. This is required, as
  $\K$-annotators map documents to annotated relations.}
\] If $\rn$ is valid we
denote the tuple $\toTuple{\fromRun{\rn}}$ by $\toTuple{\rn}$.

We say that a weighted \vset-automaton $A$ is \emph{functional} if 
every run of $A$ is valid.
We
denote the set of all valid and nonzero runs of $A$ on $\doc$ by
\[
  \Rn{A}{\doc} \eqdef \{\rn \mid \fromRun{\rn} \in \reflang(A) \text{ and }
  \doc = \clr(\fromRun{\rn})\}\;.
\]

Notice that there may be infinitely many valid and nonzero runs of a weighted
\vset-automaton on a given document, due to \emph{$\varepsilon$-cycles}, which
are states $q_1,\ldots,q_k$ such that $(q_i,\varepsilon,q_{i+1})$ is a
transition for every $i \in \{1,\ldots,k-1\}$ and $q_1 = q_k$. Following
Doleschal et al.~\cite{DoleschalKMP-lmcs22} we assume that weighted
\vset-automata do not have $\varepsilon$-cycles, unless mentioned otherwise.

As such, if $A$ does not have $\varepsilon$-cycles, then the result of applying
$A$ on a document $\doc$, denoted $\repspnrw{A}(\doc)$, is the
$\srd$-relation $R$ for which
\[
  R(\tup) \df \boplus_{\rn\in \Rn{A}{\doc} \text{ and } \tup = \toTuple{\rn}}
 \weight{\rn}\;.
\]
Note that $\Rn{A}{\doc}$ only contains runs $\rn$ that are valid and nonzero. If
$\tup$ is a $V^\prime$-tuple with $V^\prime \ne V$ then $R(\tup) = \srzero$,
because we only consider valid runs. In addition, $\repspnrw{A}$ is a well defined
$\srd$-annotator since every $V$-tuple in the support of $\repspnrw{A}(\doc)$ is a $V$-tuple over
$\spans(\doc)$. To simplify notation, we sometimes denote
$\repspnrw{A}(\doc)(\tup)$ --- the weight assigned to the \doc-tuple $\tup$ by
$A$ --- by $\repspnrw{A}(\doc,\tup)$. We say that two $\srd$-weighted
\vset-automata $A_1$ and $A_2$ are \emph{disjoint} if $\reflang(A_1) \cap
\reflang(A_2) = \emptyset$. This implies that also the corresponding
$\srd$-annotators $\repspnrw{A_1}$ and $\repspnrw{A_2}$ are disjoint.

We say that a $\K$-annotator (resp., document spanner) $S$ is \emph{regular} if
there exists a weighted \vset-automaton (resp., $\B$-weighted \vset-automaton) $A$
such that $S = \repspnrw{A}$. Note that this is an equality between functions. If $\srd$ is clear from the context, we may just write $\toSpanner{A}$ instead of $\repspnrw{A}$.

We say that two weighted \vset-automata $A$ and $A^\prime$ are
\emph{equivalent} if they define the same $\K$-annotator, that is, $\repspnrw{A} =
\repspnrw{A^\prime}$, which is the case if $\repspnrw{A}(\doc) =
\repspnrw{A^\prime}(\doc)$ for every $\doc \in \docs$. 

Similar to our
terminology on $\B$-annotators, we refer to (functional) $\B$-weighted
\vset-automata as \emph{(functional) \vset-automata}. Since \vset-automata can always be
translated into equivalent functional \vset-automata~\cite[Proposition
3.9]{Freydenberger19}, we assume in this article that \vset-automata are
functional. This is a common assumption for document spanners involving regular
languages~\cite{FaginKRV15, Freydenberger19, PeterfreundFKK19}. Furthermore, we 
assume that all weighted \vset-automata are functional as well. In the
following, we denote by \regk the class of all functional $\K$-weighted
\vset-automata and by $\fvsa$ the class of all functional \vset-automata. 

Due to the close relationship between regular expressions and $\B$-weighted automata, and since regular expressions are easy to read, we sometimes define $\B$-weighted \vset-automata using regular expressions over $\Sigma \cup \Gamma_V$. Here, we use $\cdot$ to denote concatenation, $\lor$ to denote disjunction, and $^*$ to denote Kleene star. As usual, we often omit $\cdot$ and use priority rules ($^*$ before $\cdot$ before $\lor$) for improving the readability of expressions.

\subsubsection*{Unambiguous (weighted) \vset-Automata}

We now discuss unambiguity for (weighted) \vset-automata. A
(weighted) \vset-automaton $A$ is \emph{unambiguous} if it satisfies the
following two conditions.
\begin{enumerate}[label=(C\arabic*)]
\item $\reflang(A)$ satisfies the variable order condition; \label{cond:voc}
\item for every $\refWord \in
  \reflang(A)$, there is exactly one run of $A$ on $\refWord$.\label{cond:unambig}
\end{enumerate}

We note that for Boolean spanners, i.e. spanners with no variables, the
definitions coincide with the classical unambiguity definition of finite state
automata. That is, a \vset-automaton with $\vars(A) = \emptyset$ is unambiguous
if it is a unambiguous finite state automaton. Furthermore, we note that every
\vset-automaton $A$ can 
be transformed to an equivalent unambiguous \vset-automaton $A'$. (e.g. Doleschal et al.~\cite[Lemma
4.5]{DoleschalKMNN19-journal}). However, \vset-automata
can be exponentially more succinct than equivalent unambiguous \vset-automata.\footnote{Note that, for functional
  \vset-automata, the exponential factor in the relative succinctness is caused
  by condition~\ref{cond:unambig}. That is, for every functional \vset-automaton
  $A$, there is an equivalent functional \vset-automaton $|A'|$ which satisfies
  condition~\ref{cond:voc} and is of size at most polynomial in $|A|$.}

\begin{example}The span relation on the bottom right of Figure~\ref{fig:MainExample} can be
  extracted from \doc by a spanner that matches textual representations of
  numbers (or ranges) in the variable $x_{\textsf{events}}$, followed by a city
  or country name, matched in $x_{\textsf{loc}}$.
  Figure~\ref{fig:exampleVsetAutomata} shows how two such \vset-automata may
  look like. Note that some strings, like $\text{Luxembourg}$ are the name of a
  city as well as a country. Thus, the upper automaton is ambiguous, because the
  tuple with Luxembourg is captured twice (thus, violating \ref{cond:unambig}).
  The lower automaton is unambiguous, because the sub-automaton for $\text{Loc}$
  only matches such names once.\qed
\end{example}
  
\begin{figure}
  \resizebox{\linewidth}{!}{
    \begin{tikzpicture}[>=latex]
      \node[state,initial] (q0) at (0,0){$q_0$};
      \node[state] (q1) at ($(q0)+(2,0)$) {$q_1$};
      \node[state] (q2) at ($(q1)+(2,0)$) {$q_2$};
      \node[state] (q3) at ($(q2)+(2,0)$) {$q_3$};
      \node[state] (q4) at ($(q3)+(2,0)$) {$q_4$};
      \node[state] (q5) at ($(q4)+(2,0)$) {$q_5$};
      \node[state] (q6) at ($(q5)+(2,0)$) {$q_6$};
      \node[state,accepting] (q7) at ($(q6)+(2,0)$) {$q_7$};
      
      \draw (q0) edge[->, loop above] node [above] {$\alphabet$}(q0);
      \draw (q0) edge[->] node[below] {$\vop{x_{\textsf{events}}}$}(q1);
      \draw (q1) edge[->, decorate, decoration={snake, amplitude=.5mm, segment length=2mm, post length=1mm}] node[above] {\textsf{Num}}(q2);
      \draw (q2) edge[->] node[below] {$\vcl{x_{\textsf{events}}}$}(q3);
      \draw (q3) edge[->, decorate, decoration={snake, amplitude=.5mm, segment length=2mm, post length=1mm}] node[above] {\textsf{Gap}}(q4);
      \draw (q4) edge[->] node[below] {$\vop{x_{\textsf{loc}}}$}(q5);
      \draw (q5) edge[->, bend left, decorate, decoration={snake, amplitude=.5mm, segment length=2mm, post length=1mm}] node[above] {\textsf{City}}(q6);
      \draw (q5) edge[->, bend right, decorate, decoration={snake, amplitude=.5mm, segment length=2mm, post length=1mm}] node[below] {\textsf{Country}}(q6);
      \draw (q6) edge[->]  node[below] {$\vcl{x_{\textsf{loc}}}$}(q7);
      \draw (q7) edge[->, loop above] node [above] {$\alphabet$}(q7);
    \end{tikzpicture}
  }
    
  \resizebox{\linewidth}{!}{
      \begin{tikzpicture}[>=latex]
      \node[state,initial] (q0) at (0,0){$q_0$};
      \node[state] (q1) at ($(q0)+(2,0)$) {$q_1$};
      \node[state] (q2) at ($(q1)+(2,0)$) {$q_2$};
      \node[state] (q3) at ($(q2)+(2,0)$) {$q_3$};
      \node[state] (q4) at ($(q3)+(2,0)$) {$q_4$};
      \node[state] (q5) at ($(q4)+(2,0)$) {$q_5$};
      \node[state] (q6) at ($(q5)+(2,0)$) {$q_6$};
      \node[state,accepting] (q7) at ($(q6)+(2,0)$) {$q_7$};
      
      \draw (q0) edge[->, loop above] node [above] {$\alphabet$}(q0);
      \draw (q0) edge[->] node[below] {$\vop{x_{\textsf{events}}}$}(q1);
      \draw (q1) edge[->, decorate, decoration={snake, amplitude=.5mm, segment length=2mm, post length=1mm}] node[above] {\textsf{Num}}(q2);
      \draw (q2) edge[->] node[below] {$\vcl{x_{\textsf{events}}}$}(q3);
      \draw (q3) edge[->, decorate, decoration={snake, amplitude=.5mm, segment length=2mm, post length=1mm}] node[above] {\textsf{Gap}}(q4);
      \draw (q4) edge[->] node[below] {$\vop{x_{\textsf{loc}}}$}(q5);
      \draw (q5) edge[->, decorate, decoration={snake, amplitude=.5mm, segment length=2mm, post length=1mm}] node[above] {\textsf{Loc}}(q6);
      \draw (q6) edge[->] node[below] {$\vcl{x_{\textsf{loc}}}$}(q7);
      \draw (q7) edge[->, loop above] node [above] {$\alphabet$}(q7);
    \end{tikzpicture}
  }
  \caption{Two example \vset-automata that extract the span relation $R$ on
    input $\doc$ as defined in Figure~\ref{fig:MainExample}. For the sake of
    presentation, the automata are simplified as follows: \textsf{Num} is a
    sub-automaton matching anything representing a number (of events) or range,
    \textsf{Gap} is a sub-automaton matching sequences of at most three words,
    \textsf{City} and \textsf{Country} are sub-automata matching city and
    country names respectively. \textsf{Loc} is a sub-automaton for the union of
    \textsf{City} and \textsf{Country}. All these sub-automata are assumed to be
    unambiguous.\label{fig:exampleVsetAutomata}}
\end{figure}

In the following, we denote by \uregk the class of $\K$-weighted unambiguous
functional \vset-automata and by \ufvsa the class of unambiguous functional
\vset-automata.

\subsection{Aggregate Queries}
Aggregation functions, such as $\min$, $\max$, and $\tsum$ operate on numerical
values from database tuples, whereas all the values of \doc-tuples are spans.
Yet, these spans may represent numerical values, from the document $\doc$,
encoded by the captured words (e.g., ``3,'' ``three,'' ``March'' and so on). To
connect spans to numerical values, we will use \emph{weight functions}
\begin{definition}[Weight function]
  Denote by $\Tup$ the set of all $V$-tuples for sets $V$, i.e., the union of all sets $\vtup{V}$.
  A \emph{weight function} is a function $\w: \docs \times \Tup \to \{\Q \cup \infty\}$. It maps pairs of documents $\doc$ and $\doc$-tuples $\tup$ to values in $\Q$ or to $\infty$.
\end{definition}
In the definition of weight functions, we allow the range to include $\infty$, since we will use subsets of $\Q$ and the tropical semiring $\tropical$, the latter of which contains $\infty$. 
We discuss weight
functions in more detail in Section~\ref{sec:weightfunctions}.

\def\n#1{\textsf{\tiny{#1}}}
\def\s#1{\texttt{#1}}
\begin{figure}[t]
  \centering\small
  {
    \setlength{\tabcolsep}{.3ex}
    \begin{tabular}{
          cccccccccc
          cccccccccc
          cccccccccc
          cccccccccc
          c
          }
          $\s{T}$ & $\s{h}$ & $\s{e}$  & $\s{r}$ & $\s{e}$ & $\s{\blank}$
          & $\s{a}$ & $\s{r}$ & $\s{e}$ & $\s{\blank}$                                                  
          & $\s{7}$ & $\s{\blank}$ & $\s{e}$ & $\s{v}$ & $\s{e}$ & $\s{n}$ & $\s{t}$ & $\s{s}$ & $\s{\blank}$
          & $\s{i}$ & $\s{n}$ & $\s{\blank}$  & $\s{B}$ & $\s{e}$ & $\s{l}$ & $\s{g}$ & $\s{i}$ & $\s{u}$ & $\s{m}$  & $\s{,}$  & $\s{\blank}$
          & $\s{1}$ & $\s{0}$ & $\s{-}$ & $\s{1}$ & $\s{5}$ & $\s{\blank}$ & $\s{i}$ & $\s{n}$ & $\s{\blank}$
          & 
          \\\cmidrule{1-40}
          $\n{1}$ & $\n{2}$ & $\n{3}$  & $\n{4}$ & $\n{5}$ & $\n{6}$ & $\n{7}$ & $\n{8}$ & $\n{9}$ & $\n{10}$
          & $\n{11}$ & $\n{12}$ & $\n{13}$ & $\n{14}$ & $\n{15}$ & $\n{16}$ & $\n{17}$ & $\n{18}$ & $\n{19}$ & $\n{20}$
          & $\n{21}$ & $\n{22}$ & $\n{23}$ & $\n{24}$ & $\n{25}$  & $\n{26}$ & $\n{27}$ & $\n{28}$ & $\n{29}$ & $\n{30}$
          & $\n{31}$ & $\n{32}$ & $\n{33}$ & $\n{34}$ & $\n{35}$ & $\n{36}$ & $\n{37}$ & $\n{38}$ & $\n{39}$ & $\n{40}$
          & \\\addlinespace[1.5\defaultaddspace]
          $\s{F}$ & $\s{r}$ & $\s{a}$ & $\s{n}$ & $\s{c}$ & $\s{e}$ & $\s{,}$ & $\s{\blank}$
          & $\s{4}$ & $\s{\blank}$ & $\s{i}$ & $\s{n}$ & $\s{\blank}$
          & $\s{L}$ & $\s{u}$ & $\s{x}$ & $\s{e}$ & $\s{m}$ & $\s{b}$ & $\s{o}$ & $\s{u}$ & $\s{r}$ & $\s{g}$ & $\s{,}$ & $\s{\blank}$ 
          & $\s{t}$ & $\s{h}$ & $\s{r}$ & $\s{e}$ & $\s{e}$ & $\s{\blank}$ & $\s{i}$ & $\s{n}$ & $\s{\blank}$
          & $\s{B}$ & $\s{e}$ & $\s{r}$ & $\s{l}$ & $\s{i}$ & $\s{n}$ & $\s{.}$ 
          \\\midrule
          $\n{41}$ & $\n{42}$ & $\n{43}$ & $\n{44}$ & $\n{45}$ & $\n{46}$ & $\n{47}$ & $\n{48}$ & $\n{49}$ & $\n{50}$
          & $\n{51}$ & $\n{52}$ & $\n{53}$ & $\n{54}$ & $\n{55}$ & $\n{56}$ & $\n{57}$ & $\n{58}$ & $\n{59}$ & $\n{60}$
          & $\n{61}$ & $\n{62}$ & $\n{63}$ & $\n{64}$ & $\n{65}$ & $\n{66}$ & $\n{67}$ & $\n{68}$ & $\n{69}$ & $\n{70}$
          & $\n{71}$ & $\n{72}$ & $\n{73}$ & $\n{74}$ & $\n{75}$ & $\n{76}$ & $\n{77}$ & $\n{78}$ & $\n{79}$ & $\n{80}$
          & $\n{81}$                                        
    \end{tabular}  
  }
  
  \medskip
  
   \begin{tabular}[t]{ll}
    \toprule
    $x_{\textsf{loc}}$ & $x_{\textsf{events}}$\\
    \midrule
    $\mspan{23}{30}$ & $\mspan{11}{12}$ \\
    $\mspan{41}{47}$ & $\mspan{32}{37}$ \\
    $\mspan{54}{64}$ & $\mspan{49}{50}$ \\
    $\mspan{75}{81}$ & $\mspan{66}{71}$ \\
    \bottomrule
   \end{tabular}
   \qquad
   \begin{tabular}[t]{ll}
     \toprule
     ${x_{\textsf{events}}}$ & $W({x_{\textsf{events}}})$ \\
     \midrule
     7 & 7 \\
     10-15 & 10 \\
     4 & 4 \\
     three & 3 \\
     \bottomrule
   \end{tabular}
   \qquad
   \begin{tabular}[t]{lll}
     \toprule
     $\doc_{x_{\textsf{loc}}}$ & $\doc_{x_{\textsf{events}}}$ & $W_R(\tup)$ \\
     \midrule
     Belgium & 7 & 7 \\
     France & 10-15 & 10 \\
     Luxembourg & 4 & 4 \\
     Berlin & three & 3 \\
    \bottomrule
  \end{tabular}
  \qquad
   \caption{A document $\doc$ (top), a span relation $R$ (bottom left), a
     $\Q$-weighted string relation $W$ (bottom middle) and the $\Q$-weighted string
     relation $W_R$ resulting from $W,\doc,$ and $R$ (bottom right).
     \label{fig:MainExampleWithWeights}}
\end{figure}

\begin{example}\label{ex:MainExampleAggregates}Consider the document in Figure~\ref{fig:MainExampleWithWeights} and assume
  that we want to calculate the total number of mentioned events. The relation
  $R$ at the bottom left depicts a possible extraction of locations with their
  number of events. The table in the bottom middle depicts a weighted string relation $W$ (where the weight of each string is in the rightmost column). The relation on the bottom right depicts the string relation where
  each tuple is annotated with a weight corresponding to $W,R,$ and $\doc$. To
  get an understanding of the total number of events, we may want to take the
  sum over the weights of the extracted tuples, namely $7 + 10 + 4 + 3 =
  24$.\qed
\end{example}

For a spanner $\spanner$, a document $\doc$, and weight function $\w$, we denote
by $\wCodom(\spanner,\doc,\w)$ the set of weights of output tuples of $\spanner$
on $\doc$, that is, $\wCodom(\spanner, \doc,\w) = \{\w(\doc,\tup) \mid \tup \in
\spanner(\doc)\}$. Furthermore, let $\wCodom(\w) \subseteq \Q$ be the set of
weights assigned by $\w$, that is, $k \in \wCodom(\w)$ if and only if there is a
document $\doc$ and a $\doc$-tuple $\tup$ with $\w(\doc,\tup) = k$.

\begin{definition}Let \doc be a document and $A$ be a \vset-automaton such that
  $\toSpanner{A}(\doc) \neq \emptyset$. Let $\spanner = \toSpanner{A}$, let \w be a
  weight function, and $q \in \rationals$ with $0 \leq q \leq 1$. We define the
  following spanner aggregation functions:
  \begin{align*}
    \spcountagg \eqdef& \;\;|\spanner(\doc)| \\ 
    \spminagg \eqdef& \min_{\tup\in \spanner(\doc)}\w(\doc,\tup) \\
    \spmaxagg \eqdef& \max_{\tup \in \spanner(\doc)}\w(\doc,\tup) \\
    \spsumagg \eqdef&  \sum_{\tup \in \spanner(\doc)}\w(\doc,\tup) \\
    \spavgagg \eqdef& \;\; \frac{\spsumagg}{\spcountagg} \\
    \spquantagg \eqdef& \;\;\min\bigg\{r \in \wCodom(\spanner,\doc,\w) \ \bigg| \ 
                         \frac{|\set{\tup\in\spanner(\doc)\mid \w(\doc,\tup)\leq r}|}
                        {|\spanner(\doc)|} \geq q\bigg\}
  \end{align*}
We observe that $\spminagg = \spqquantagg{0}$ and $\spmaxagg = \spqquantagg{1}$.
\end{definition}

\subsection{Main Problems}
Let $\spannerClass$ be a class of regular document spanners and $\wC$ be a class
of weight functions. We define the following problems.

\computeproblemWidth{.56}{$\countp[\spannerClass]$}{Spanner $\spanner \in \spannerClass$ and
  document $\doc \in \docs$.}{Compute $\spcount(\spanner,\doc)$.}

\computeproblemWidth{.81}{$\sump[\spannerClass,\wC]$}{Spanner $\spanner \in \spannerClass$,
  document $\doc \in \docs$, a weight function $\w \in \wC$.}{Compute
  $\spsumagg$.}

The problems $\avgp[\spannerClass,\wC], \quantp[\spannerClass,\wC],
\minp[\spannerClass,\wC],$ and $\maxp[\spannerClass,\wC]$ are defined
analogously to $\sump[\spannerClass,\wC]$. Notice that all these problems study
\emph{combined complexity}. Since the number of tuples in $\spanner(\doc)$ is
always in $O(|\doc|^{2k})$, where $k$ is the number of variables of the spanner
$\spanner$ (cf. Corlollary~\ref{cor:tuplesPerSpanner}), the \emph{data
  complexity} of all the problems is in \fp: One can just materialize
$\spanner(\doc)$ and apply the necessary aggregate. Under combined complexity,
we will therefore need to find ways to avoid materializing $\spanner(\doc)$ to
achieve tractability.

\subsection{Algorithms and Complexity Classes}
Before we discuss our main results in Section~\ref{main-results}, we provide a few definitions on computational complexity.

We first define fully polynomial-time randomized
approximation schemes (\fpras).

\begin{definition}
  Let $f$ be a function that maps inputs $x$ to rational numbers and let
  $\mathcal{A}$ be a probabilistic algorithm, which takes an input instance $x$
  and a parameter $\delta > 0$. Then $\mathcal{A}$ is called a \emph{fully
    polynomial-time randomized approximation scheme} (\fpras), if
  \begin{itemize}
  \item $\bigprobab{\big|\mathcal{A}(x,\delta) - f(x)\big| \;\leq\; \delta \cdot
      \big|f(x)\big|} \;\;\geq\;\;\frac{3}{4}\;$;
  \item the runtime of $\mathcal{A}$ is polynomial in $|x|$ and
    $\frac{1}{\delta}\;$.
  \end{itemize}
\end{definition}

The following definitions closely follow the Handbook of Theoretical
Computer Science~\cite{HandbookTCS91}. The class \emph{\fp} (respectively,
\fexptime) is the set of all functions that are computable in polynomial time
(resp., in exponential time). A \emph{counting Turing Machine} is an
non-deterministic Turing Machine whose output for a given input is the number of
accepting computations for that input. Given functions $f,g: \Sigma^* \to \nat$,
$f$ is said to be \emph{parsimoniously reducible} to $g$ in polynomial time if there is
a function $h: \Sigma^* \to \Sigma^*$, which is computable in polynomial time,
such that for every $x \in \Sigma^*$ it holds that $f(x) = g(h(x))$.
Furthermore, we say that $f$ is \emph{Turing reducible} to $g$ in polynomial time, if
$f$ can be computed by a polynomial time Turing Machine $M$, which has access to
an oracle for $g$.

The class \emph{\sharpp} is the set of all functions that are computable by
polynomial-time counting Turing Machines. A problem $X$ is \emph{\sharpp-hard}
under parsimonious reductions (resp., Turing reductions) if there are polynomial
time parsimonious reductions (resp., Turing reductions) to it from all problems
in \sharpp. If in addition $X \in \sharpp$, we say that $X$ is \sharpp-complete
under parsimonious reductions (resp., Turing reductions).

The class \emph{\fptosp} is the set of all functions that are computable in
polynomial time by an oracle Turing Machine with a \sharpp oracle. It is easy
to see that, under Turing reductions, a problem is hard for the class \sharpp if
and only if it is hard for \fptosp. We note that every problem which is
$\sharpp$-hard under parsimonious reductions is also \sharpp-hard under Turing
reductions. Therefore, unless mentioned otherwise, we always use parsimonious
reductions.

The class $\spanl$ is the class of all functions $f:\Sigma^* \to \nat$ for which
there is a nondeterministic logarithmic space Turing Machine $M$ with input
alphabet $\Sigma$ such that $f(x) = |M(x)|$.

The class \emph{\optp} is the set of all functions computable by taking the
maximum output value over all accepting computations of a polynomial-time
non-deterministic Turing Machine that outputs natural numbers. Assume that
$\Gamma$ is the Turing Machine alphabet. Let $f,g: \Gamma^* \to \nat$ be
functions. A \emph{metric reduction}, as introduced by Krentel~\cite{Krentel88},
from $f$ to $g$ is a pair of polynomial-time computable functions $T_1, T_2$,
where $T_1: \Gamma^* \to \Gamma^*$ and $T_2: \Gamma^* \times \nat \to \nat$,
such that $f(x) = T_2(x,g(T_1(x)))$ for all $x \in \Gamma^*$.

The class \emph{\bpp} is the set of all decision problems solvable in polynomial
time by a probabilistic Turing Machine in which the answer always has
probability at least $\frac{1}{2} + \delta$ of being correct for some fixed
$\delta > 0$.

 \section{Main Results}
\label{main-results}
In this section we present the main results of this article.

\subsection{Known Results}
We begin by giving an overview of the results on \countp, which are known from
the literature. 

\begin{thmC}[Arenas et al.~{\cite{ArenasCJR19}, Florenzano et
    al.~\cite{FlorenzanoRUVV18}}]\label{thm:countResults}$\countp[\ufvsa]$ is in \fp and $\countp[\fvsa]$ is \spanl-complete.
  Furthermore, $\countp[\fvsa]$ can be approximated by an \fpras.
\end{thmC}
\begin{proof} 
  Follows from Arenas et al.~\cite[Corollaries~4.1 and~4.2]{ArenasCJR19}, and
  Florenzano et al.~\cite[Theorem 5.2]{FlorenzanoRUVV18}.
\end{proof}

The $\spanl$ lower bound by Florenzano et al.~\cite[Theorem
5.2]{FlorenzanoRUVV18} is due to a parsimonious reduction from the
\#NFA$(n)$-problem\footnote{Given an NFA $A$ and a natural number $n$, encoded
  in binary, the \#NFA$(n)$ problem asks for the number of words $w \in
  \lang(A)$ of length $n$. The \#NFA$(n)$ problem is sometimes also called
  Census Problem.} which is known to be \sharpp-complete under Turing reductions (cf.
Kannan et al.~\cite{KannanSM95}). As every parsimonious reduction is also a
Turing reduction, the following corollary follows immediately.

\begin{corollary}\label{cor:countSharpp}$\countp[\fvsa]$ is \sharpp-hard under Turing reductions.
\end{corollary}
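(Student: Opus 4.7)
The plan is to chain together two facts: the known \sharpp-hardness of $\#\text{NFA}(n)$ under Turing reductions (Kannan et al.), and the parsimonious reduction from $\#\text{NFA}(n)$ to $\countp[\fvsa]$ that underlies the \spanl-lower bound of Florenzano et al. The whole argument then hinges on two elementary observations about reductions: first, any parsimonious reduction is in particular a (very degenerate) Turing reduction --- the oracle is called exactly once on the transformed input and its answer is returned verbatim; second, Turing reducibility is transitive, i.e.\ if $f$ Turing-reduces to $g$ and $g$ Turing-reduces to $h$ in polynomial time, then $f$ Turing-reduces to $h$ in polynomial time (the composed machine simulates the first and, whenever it would query the $g$-oracle, runs the polynomial-time oracle machine that reduces $g$ to $h$).

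Concretely, I would structure the proof in three short steps. First, cite Kannan et al.~\cite{KannanSM95} to record that every problem in \sharpp\ is Turing-reducible in polynomial time to $\#\text{NFA}(n)$. Second, invoke Florenzano et al.~\cite{FlorenzanoRUVV18} to obtain a polynomial-time parsimonious reduction $h : \Sigma^* \to \Sigma^*$ such that for every instance $(A,n)$ of $\#\text{NFA}(n)$ the equality $\#\text{NFA}(n)(A,n) = \countp[\fvsa](h(A,n))$ holds, and remark that this parsimonious reduction is trivially a polynomial-time Turing reduction (on input $x$, query the $\countp[\fvsa]$-oracle on $h(x)$ once and return the result). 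Third, compose the two reductions: given any $f \in \sharpp$, the polynomial-time oracle machine for $f$ with oracle $\#\text{NFA}(n)$ can have each of its oracle calls replaced by the parsimonious reduction followed by a single call to the $\countp[\fvsa]$-oracle, yielding a polynomial-time oracle machine for $f$ with oracle $\countp[\fvsa]$.

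There is essentially no obstacle here; the corollary is a bookkeeping observation on top of Theorem~\ref{thm:countResults} and the cited hardness result for $\#\text{NFA}(n)$. The only mild subtlety worth spelling out is the first observation above, namely that a parsimonious reduction gives a Turing reduction ``for free'', because this is precisely the sentence in the paragraph preceding the corollary and is what the author invokes to finish the argument. The proof can therefore be written in just a couple of lines and will end with a remark that, combined with the containment $\countp[\fvsa] \in \spanl \subseteq \fptosp$, this places $\countp[\fvsa]$ at the Turing-hardness level of \sharpp\ without any claim of parsimonious \sharpp-hardness.
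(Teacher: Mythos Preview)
Your proposal is correct and follows exactly the same approach as the paper: the corollary is derived immediately from the parsimonious reduction of Florenzano et al.\ from \#NFA$(n)$ to $\countp[\fvsa]$, the Turing \sharpp-hardness of \#NFA$(n)$ due to Kannan et al., and the observation that parsimonious reductions are Turing reductions. The paper states this in the sentence preceding the corollary rather than as a separate proof, but the content is identical.
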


Two observations can be made from these results. First, \countp requires the
input spanner to be \emph{unambiguous} for tractability. This tractability
implies that \countp can be computed without materializing the possibly
exponentially large set $\spanner(\doc)$ if the spanner is unambiguous.
Furthermore, if the spanner is not unambiguous then, due to \spanl-completeness
of \countp, we do not know an efficient algorithm for its exact computation (and
therefore may have to materialize $\spanner(\doc)$), but \countp can be
\emph{approximated} by an \fpras. We will explore to which extent this picture
generalizes to other aggregates.

\subsection{Overview of New Results}

\begin{table}[t]
 \resizebox{\textwidth}{!}{
    \centering
    \begin{tabular}{cllll} 
      \toprule
      Aggregate & Spanner & Weights & Complexity & Approximation\\
      \midrule
      \multirow{2}{*}{$\countp$} & $\ufvsa$ & - & in \fp & - \\
      \cmidrule(lr){2-5}
                & $\fvsa$ & - & \sharpp-hard$^\dagger$ & \fpras \\
      \cmidrule(lr){1-5}
      \multirow{2}{*}{$\minp$} 
                & \multirow{2}{*}{$\ufvsa, \fvsa$} & $\SW, \UREG, \regtrop$ & in \fp (\ref{thm:minmaxfp},\ref{thm:minmaxRegTractable}) & - \\
      \cmidrule(lr){3-5}
                &  & $\regnum, \PW$  & \optp-hard (\ref{thm:pwintractableKarp},\ref{thm:minmaxRegintractable}) & no \fpras (\ref{thm:minapx})\\
      \cmidrule(lr){1-5}
      \multirow{2}{*}{$\maxp$}
                & \multirow{2}{*}{$\ufvsa, \fvsa$} & $\SW,\UREG$ & in \fp (\ref{thm:minmaxfp},\ref{thm:minmaxRegTractable}) & - \\
      \cmidrule(lr){3-5}
                & & $\regtrop, \regnum, \PW$  & \optp-hard (\ref{thm:pwintractableKarp},\ref{thm:minmaxRegintractable}) & no \fpras (\ref{thm:minapx},\ref{thm:maxapx})\\
      \cmidrule(lr){1-5}
      \multirow{4}{*}{$\sump$} & \multirow{2}{*}{$\ufvsa$} & $\SW,\UREG, \regnum$ & in \fp (\ref{thm:aggregatefp},\ref{thm:sumREGnumerical},\ref{thm:sumUregFP}) & - \\
      \cmidrule(lr){3-5}
                & & $\regtrop, \PW$ & $\sharpp$-hard (\ref{thm:pwintractableKarp},\ref{thm:sumTropicalHard}) & no \fpras (\ref{cor:sumavgNoFpras}) \\
      \cmidrule(lr){2-5} 
                & \multirow{2}{*}{$\fvsa$} & $\SW_{\nat}$ & \spanl-complete (\ref{thm:sw-sum-spanl}) & \fpras (\ref{cor:sumSpanl})\\
      \cmidrule(lr){3-5}
                & & $\SW,\UREG,\REG,\PW$ & $\sharpp$-hard (\ref{thm:sw-sum-cnf}) & no \fpras (\ref{cor:sumavgNoFpras}) \\
      \cmidrule(lr){1-5}
      \multirow{4}{*}{$\avgp$} & \multirow{2}{*}{$\ufvsa$} & $\SW,\UREG, \regnum$ & in \fp (\ref{thm:aggregatefp},\ref{cor:avgregfp}) & - \\
      \cmidrule(lr){3-5}
                & & $\regtrop, \PW$ & $\sharpp$-hard (\ref{thm:pwintractableKarp}) & no \fpras (\ref{cor:sumavgNoFpras}) \\
      \cmidrule(lr){2-5} 
                & \multirow{2}{*}{$\fvsa$} & $\SW_{\Q^+}$ & \sharpp-hard$^\dagger$ (\ref{thm:sw-quant-avg-hard}) & \fpras (\ref{thm:fprasSumAvg}) \\
      \cmidrule(lr){3-5}
                & & $\SW,\UREG,\REG,\PW$ & $\sharpp$-hard$^\dagger$ (\ref{thm:sw-quant-avg-hard},\ref{thm:sumTropicalHard}) & no \fpras (\ref{thm:negSumAvg})\\
      \cmidrule(lr){1-5}
      \multirow{3}{*}{$\quantp$} & \multirow{2}{*}{$\ufvsa$} & $\SW$ & in \fp (\ref{thm:aggregatefp})& - \\
      \cmidrule(lr){3-5}
                & & $\ureg,\REG,\PW$ & $\sharpp$-hard$^\dagger$ (\ref{thm:pwintractableCook},\ref{thm:quantUregHard}) & no \fpras (\ref{thm:noUregFprasQuant}) \\
      \cmidrule(lr){2-5} 
                & $\fvsa$ & $\SW,\UREG,\REG,\PW$ & \sharpp-hard$^\dagger$ (\ref{thm:sw-quant-avg-hard}) & no \fpras (\ref{thm:noFprasQuant})\\
      \cmidrule(lr){1-5}
      $\quantp$  & \multirow{2}{*}{$\fvsa$} & \multirow{2}{*}{$\PW$} & - & \fpras-like \\[-1mm]
      (positional)&&&&approx.\ (\ref{thm:epsQuantApproximation})\\
      \bottomrule
    \end{tabular}
 }
  \caption{Detailed overview of complexities of aggregate problems for document
    spanners. All problems are in \fexptime. The ``no FPRAS'' claims either
    assume that $\rp \neq \np$ or assume that the polynomial hierarchy does not
    collapse. The $\sharpp$-hardness results, marked with $^\dagger$ rely on
    Turing reductions. The numbers refer to the numbers of new results.}\label{tab:overview}
\end{table}
The complexity results are summarized in Table~\ref{tab:overview}. By now the
reader is familiar with the aggregate problems and the types of spanners we
study.
We obtain different results for different representations of weight functions, which we denote here as 
\SW, \PW, and \REG (resp., \UREG) and define formally in Section~\ref{sec:weightfunctions}.
Intuitively, \SW are \emph{constant-width} weight functions that assign values based on strings selected by a constant number of variables; \PW are polynomial-time computable weight
functions, and \REG (resp., \UREG) are weight functions represented by weighted
(resp., unambiguous weighted) \vset-automata. Furthermore, we sometimes restrict these classes based on their range. For instance, $\SW_{\nat}$ and $\SW_{{\Q^+}}$ are the constant-width weight functions that map to natural numbers and positive rational numbers, respectively.

Entries in the table should be read from left to right. For instance, the third
row states that the \minp problem, for both spanner classes \ufvsa and \fvsa,
and for all three classes \SW, \uregtrop, and \regtrop of weight functions is in
\fp. Likewise, the fourth row states that the same problems with $\regnum$ or
$\PW$ weight functions become \optp-hard and that the existence of an \fpras
would contradict commonly believed conjectures.

In general, the table gives a detailed overview of the impact of (1) unambiguity
of spanners and (2) different weight function representations on the complexity
of computing aggregates.

\subsection{Results for Different Weight Functions}\label{sec:weightfunctions}

We formalize how we represent the weight functions for our new results. Recall
that weight functions $\w$ map pairs consisting of a document $\doc$ and
$\doc$-tuple $\tup$ to values in $\Q \cup \{\infty\}$.

\subsubsection{Constant-Width Weight Functions} 
The simplest type of weight functions we consider are the \emph{constant-width
  weight functions}.\footnote{These generalize the
  single-variable weight functions of Doleschal et
  al.~\cite{DoleschalBKM21}.} Let $1 \leq c \in \nat$ be a constant. A
\emph{constant-width weight function (\SW)} $\w$ assigns values based on the
strings selected by at most $c$ variables. 
A constant-width weight function \SW
is given in the input as a $\Q$-weighted string relation, i.e., a string relation $R$ over the numerical semiring \numericalSR
and the variables $\xvars$, where $\xvars \subseteq \svars$, is a set of at most
$c$ variables. 
Recall that $\doc_{\tup}$ denotes the tuple $(\doc_{\tup(x_1)},
\ldots , \doc_{\tup(x_n)})$, where $\vars(\tup) = \{x_1,\ldots,x_n\}$. To
facilitate presentation, we assume that the variables in $X$ are always present
in $\tup$, that is, $X \subseteq \vars(\tup)$. The weight function
$\w(\doc,\tup)$ is defined as
\[
  \w(\doc,\tup) = R(\doc_{\pi_{\xvars}\tup})\;.
\]

As we will see in Section~\ref{cwidth}, the problems $\maxp[\fvsa,\SW]$ and
$\minp[\fvsa,\SW]$ are in \fp (Theorem~\ref{thm:minmaxfp}). Furthermore, we show
that the problems $\sump[\spannerClass,\SW]$, $\avgp[\spannerClass,\SW]$, and
$\quantp[\spannerClass,\SW]$ behave similarly to $\countp[\spannerClass]$, that
is, they are in \fp if $\spannerClass = \ufvsa$ (Theorem~\ref{thm:aggregatefp})
and intractable if $\spannerClass = \fvsa$
(Theorems~\ref{thm:sw-sum-cnf},~\ref{thm:sw-sum-spanl},
and~\ref{thm:sw-quant-avg-hard}).

\subsubsection{Polynomial-Time Weight Functions}
How far can we push our tractability results? Next, we consider more general
ways of mapping $\doc$-tuples into numbers. The most general class of weight
functions we consider is the set of \emph{polynomial-time weight functions}
  (\PW). A function $\w$ from \PW is given in the input as a polynomial-time
Turing Machine $M$ that maps $(\doc,\tup)$-pairs to values in $\Q$ and defines
$\w(\doc,\tup) = M(\doc,\tup)$. Not surprisingly there are multiple drawbacks of
having arbitrary polynomial time weight functions. The first is that all
considered aggregates become intractable, even if we only consider unambiguous
\vset-automata (Theorems~\ref{thm:pwintractableKarp},
and~\ref{thm:pwintractableCook}). However, all aggregates can at least be
computed in exponential time (Theorem~\ref{thm:pwfexptime}).

\subsubsection{Regular Weight Functions}
As the class of polynomial-time weight functions quickly leads to
intractability, we focus on a restricted class \REG that we introduce next and is less restrictive than \SW
but not as general as \PW such that we can understand the structure of the
representation towards efficient algorithms.\footnote{We prove in Section~\ref{sec:weight-function-expressiveness} that $\SW \subseteq \REG \subseteq \PW$; also see Figure~\ref{fig:weight-inclusion-structure}.} Our final classes of weight
functions are based on $\K$-Annotators. More precisely, we consider
weighted \vset-automata and unambiguous weighted \vset-automata over the tropical semiring
\tropicalSR and the numerical semiring \numericalSR.\footnote{One can also
  consider the tropical semiring with max/plus, in which case the complexity
  results are analogous to the ones we have for the tropical semiring with
  min/plus, with \minp and \maxp interchanged.} Formally, let $\REG \eqdef
\regtrop \cup \regnum$ be the class of all annotators over the tropical or
numerical semiring. A \emph{regular} (\REG) weight function $\w$ is represented
by a weighted \vset-automaton $W$ and defines $\w(\doc,\tup)
= \repspnr{W}(\doc,\projectTup{\tup}{\vars(W)})$. Furthermore, as for constant
width weight functions, we assume that the variables used by $W$ are always
present in $\tup$, that is, $\vars(W) \subseteq \vars(\tup)$.

The set of \emph{unambiguous regular} (\UREG) weight functions is the subset of
\REG that is represented by unambiguous weighted \vset-automata, that
is $\UREG \eqdef \uregtrop \cup \uregnum$.

\begin{example}Figure~\ref{fig:threeDigitAutomaton} gives an unambiguous weighted
  \vset-automaton over the tropical semiring that extracts the values of
  three-digit natural numbers from text. It can easily be extended to extract
  natural numbers of up to a constant number of digits by adding nondeterminism.
  Likewise, it is possible to extend it to extract weights as in
  Example~\ref{ex:MainExampleAggregates}. If a single variable captures a list
  of numbers, similar to $\doc_{\mspan{32}{37}} = 10{-}15$, one may use
  ambiguity to extract the minimal number represented in this range.
\qed\end{example}

\begin{figure}
\centering
    \begin{tikzpicture}[>=latex]
\node[state, initial] (q0) at (0,0){$q_0$};
      \node[state] (q1) at ($(q0)+(2.2,0)$) {$q_1$};
      \node[state] (q2) at ($(q1)+(2.2,0)$) {$q_2$};
      \node[state] (q3) at ($(q2)+(2.2,0)$) {$q_3$};
      \node[state] (q4) at ($(q3)+(2.2,0)$) {$q_4$};
      \node[state, accepting] (q5) at ($(q4)+(2.2,0)$) {$q_5$};

      \draw (q0) edge[->, loop above] node[above] {$\alphabet;0$}(q0);
      \draw (q0) edge[->] node[above] {$\vop{x};0$}(q1);
      
      \draw (q1) edge[->, bend left=30] node[above] {$1;100$}(q2);
      \node (q12) at ($(q1)+(1,0.1)$) {$\vdots$};
      \draw (q1) edge[->, bend right=30] node[below] {$8;800$}(q2);
      \draw (q1) edge[->, bend right=75] node[below] {$9;900$}(q2);

      \draw (q2) edge[->, bend left=75] node[above] {$0;0$}(q3);
      \draw (q2) edge[->, bend left=30] node[above] {$1;10$}(q3);
      \node (q23) at ($(q2)+(1,0.1)$) {$\vdots$};
      \draw (q2) edge[->, bend right=30] node[below] {$8;80$}(q3);
      \draw (q2) edge[->, bend right=75] node[below] {$9;90$}(q3);

      \draw (q3) edge[->, bend left=75] node[above] {$0;0$}(q4);
      \draw (q3) edge[->, bend left=30] node[above] {$1;1$}(q4);
      \node (q34) at ($(q3)+(1,0.1)$) {$\vdots$};
      \draw (q3) edge[->, bend right=30] node[below] {$8;8$}(q4);
      \draw (q3) edge[->, bend right=75] node[below] {$9;9$}(q4);
      
      \draw (q4) edge[->] node[above] {$\vcl{x};0$}(q5);
      \draw (q5) edge[->, loop above] node[above] {$\alphabet;0$}(q5);
      
    \end{tikzpicture}
    \caption{An unambiguous weighted \vset-automaton over the tropical
      semiring with initial state $q_0$ (with weight $0$) and accepting state
      $q_5$ (with weight $0$), extracting three-digit natural numbers captured
      in variable $x$. Recall that, over the tropical semiring, the weight of a
      run is the sum of all its edge weights.\label{fig:threeDigitAutomaton} }
\end{figure}

Our results for regular and unambiguous regular weight functions are similar to \SW when it comes to \minp, \maxp, \sump, and \avgp. The
main difference is that, depending on the semiring, we require more unambiguity.
For instance, for the tropical semiring, one needs unambiguity of the regular
weight function for \maxp. For \sump and \avgp one needs unambiguity for
\emph{both} the spanner and the regular weight function to achieve tractability.
Contrary, over the numerical semiring, one needs unambiguity of the regular
weight function for \minp and \maxp, whereas for \sump and \avgp unambiguity of
the spanner is sufficient for tractability. For \quantp, the situation is
different from \SW in the sense that regular weight functions render the problem
intractable. We refer to Table~\ref{tab:overview} for an overview.

\subsection{Approximation}
In the cases where exact computation of the aggregate problem is intractable, we
consider the question of approximation. It turns out that there exist FPRAS's in
two settings that we believe to be interesting. Firstly, in the case of \sump
and \avgp and constant-width weight functions, the restriction of unambiguity in
the spanner can be dropped if the weight function uses only nonnegative weights.
Secondly, although \quantp is \sharpp-hard under Turing reductions for general
\fvsa, it is possible to \emph{positionally} approximate the \quantl element in
an \fpras-like fashion, even with the very general polynomial-time weight
functions. We discuss this problem in more detail in Section~\ref{approx}.

 \section{Preliminary Results}
\label{technical}
In this section, we give basic results for document spanners and weight
functions that we use throughout this article.

\subsection{Known Results on \texorpdfstring{$\srd$-Annotators}{K-Annotators}}
We begin by recalling some known results on $\srd$-annotators.

\begin{proposition}[{Doleschal et al.~\cite[Proposition
  6.1]{DoleschalKMP-lmcs22}}]\label{prop:epsilonRemoval} 
  For every weighted \vset-automaton $A$ there is an equivalent weighted
  \vset-automaton $A^\prime$ that has no $\varepsilon$-transitions. This
  automaton $A^\prime$ can be constructed from $A$ in polynomial time.
  Furthermore, $A$ is functional if and only if $A^\prime$ is functional.
\end{proposition}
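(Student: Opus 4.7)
The plan is to adapt the standard $\varepsilon$-removal construction for weighted automata to the variable-set setting, exploiting the assumption that $A$ has no $\varepsilon$-cycles. The first step is to compute the $\varepsilon$-closure matrix: for every pair $(p,q)$ of states, let
\[
  E(p,q) \df \boplus_{\pi \colon p \to q \text{ using only } \varepsilon\text{-transitions}} \weight{\pi},
\]
with the empty path contributing $\srone$, so in particular $E(p,p) \supseteq \srone$. Because there are no $\varepsilon$-cycles, the subgraph of $\varepsilon$-transitions is a DAG, so the sum above is finite and the entire matrix $E$ can be computed in polynomial time by dynamic programming in reverse topological order.

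Next, I would define $A^\prime \df (\alphabet, V, Q, I, F^\prime, \delta^\prime)$ with the same states and initial function, and set
\begin{align*}
  \delta^\prime(p, \sigma, q) &\df \boplus_{r \in Q} E(p,r) \srtimes \delta(r, \sigma, q)
  \qquad \text{for every } \sigma \in \alphabet \cup \varop{V},\\
  F^\prime(p) &\df \boplus_{q \in Q} E(p,q) \srtimes F(q),
\end{align*}
with $\delta^\prime(p,\varepsilon,q) \df \srzero$. The intuition is that every run of $A$ decomposes uniquely into alternating blocks of $\varepsilon$-transitions and non-$\varepsilon$ transitions; I attach each $\varepsilon$-block to the following non-$\varepsilon$ transition (absorbing it into $\delta^\prime$), while the trailing $\varepsilon$-block is absorbed into $F^\prime$. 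Expanding the product $I(q_0) \srtimes \delta^\prime(q_0, \sigma_1, q_1) \srtimes \cdots \srtimes \delta^\prime(q_{m-1}, \sigma_m, q_m) \srtimes F^\prime(q_m)$ by distributivity turns the sum over runs of $A^\prime$ on $\refWord = \sigma_1 \cdots \sigma_m$ into the sum over all runs of $A$ whose non-$\varepsilon$ labels spell out exactly $\refWord$. Since $\varepsilon$-transitions neither carry variable operations nor contribute symbols, such runs map to the same tuple $\toTuple{\refWord}$, and hence $\repspnrw{A^\prime}(\doc,\tup) = \repspnrw{A}(\doc,\tup)$ for every document $\doc$ and $\doc$-tuple $\tup$.

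Polynomiality is then straightforward: $E$ has $|Q|^2$ entries computable in polynomial time, and each of the polynomially many new transition and final weights is a polynomial-size sum of products in $\srd$. For functionality, the key point is that $\reflang(A) = \reflang(A^\prime)$: a run in $A^\prime$ is also a run in $A$ (taking the underlying non-$\varepsilon$ transitions) and conversely every run in $A$ collapses to a run in $A^\prime$ with the identical ref-word, because erasing $\varepsilon$ labels preserves the surface sequence of alphabet symbols and variable operations; in particular validity and $\vars(\toTuple{\refWord}) = V$ are preserved. The main obstacle I foresee is the bookkeeping in the equivalence argument — one must match each choice of intermediate states $(r_0, \ldots, r_m)$ appearing when $\delta^\prime$ and $F^\prime$ are expanded to exactly one run of $A$ that produces $\refWord$, and verify that weights line up modulo $\srplus$/$\srtimes$ associativity, distributivity, and commutativity of $\K$. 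Because all $\varepsilon$-transitions are label-neutral and the variable-order condition is a property only of the non-$\varepsilon$ labels in $\refWord$, no additional complication arises in that respect.
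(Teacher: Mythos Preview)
The paper does not prove this proposition; it is quoted verbatim as a cited result from Doleschal et al.~\cite{DoleschalKMP20-journal}, so there is no in-paper argument to compare against. Your construction is the standard $\varepsilon$-elimination for weighted automata (precompute the $\varepsilon$-closure on the acyclic $\varepsilon$-subgraph, then fold each $\varepsilon$-block into the next visible transition and the trailing block into the final weights), and it transfers to the \vset setting essentially unchanged because $\varepsilon$-transitions carry no variable operations. This is exactly the approach one would expect the cited source to take.

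One small point of imprecision: your functionality argument asserts that ``a run in $A^\prime$ is also a run in $A$,'' but this is not literally true at the level of transitions---$\delta^\prime(p,\sigma,q)$ can be nonzero while $\delta(p,\sigma,q)=\srzero$, provided some $E(p,r)\srtimes\delta(r,\sigma,q)\neq\srzero$. What you actually need (and what your parenthetical correctly gestures at) is that the \emph{$\varepsilon$-erased} ref-word languages coincide: every $A^\prime$-run on $\refWord$ unfolds to an $A$-run whose label sequence is $\refWord$ with $\varepsilon$'s interleaved, and conversely every $A$-run collapses to an $A^\prime$-run on the same visible ref-word. Since validity and the condition $\vars(\toTuple{\refWord})=V$ depend only on the visible symbols, functionality transfers in both directions. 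With that clarification the argument is complete. (The notation $E(p,p)\supseteq\srone$ is also a bit off; you presumably mean that the empty path contributes a summand $\srone$ to $E(p,p)$.)
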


The following Theorem follows directly from
Doleschal et al.~\cite[Theorem 6.4]{DoleschalKMP-lmcs22} and
Doleschal~\cite[Theorem~5.5.4, Lemma~5.5.5 and Lemma~5.5.9]{DoleschalThesis}.
\begin{theorem}\label{theo:closed-algebra}
  Let $A_1,A_2 \in \regk$ be $\K$-weighted functional \vset-automata and $X \subseteq
  \vars(A_1)$. Then, $A_\pi, A_\cup,A_{\join} \in \regk$ can be constructed in
  polynomial time, such that
  \begin{align*}
    \repspnrw{A_\cup} &= \repspnrw{A_1} \cup \repspnrw{A_2}\\
    \repspnrw{A_\pi} &= \pi_X \repspnrw{A_1}\\
    \repspnrw{A_{\join}} &= \repspnrw{A_1} \join \repspnrw{A_2}.
  \end{align*}
  Furthermore, $A_{\join} \in \uregk$, if $A_1, A_2 \in \uregk$, and
  $A_\cup\in\uregk$ if $A_1,A_2 \in \uregk$ and $\reflang(A_1) \cap
  \reflang(A_2) = \emptyset$.
\end{theorem}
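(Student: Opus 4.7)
The plan is to construct each of the three automata by adapting standard closure constructions for weighted automata to the VSet-automaton setting. Since Proposition~\ref{prop:epsilonRemoval} lets me assume the inputs are $\varepsilon$-free without loss of generality (and without losing unambiguity or functionality), I would work with that normal form and only introduce fresh $\varepsilon$-transitions when strictly necessary, eliminating them again at the end.

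For union, I would use the standard disjoint-sum construction: $A_\cup$ has state set $Q_1 \sqcup Q_2$, retains $I_i, F_i, \delta_i$ on each component and assigns $\srzero$ across the split. Since $V_1 = V_2$ is required for union, functionality transfers directly from the two inputs. Every run of $A_\cup$ lies entirely in one component, so $\repspnrw{A_\cup}(\doc,\tup) = \repspnrw{A_1}(\doc,\tup) \srplus \repspnrw{A_2}(\doc,\tup)$, which matches the semiring definition of union of $\K$-relations. The variable order condition is inherited componentwise. For the unambiguity claim, each accepted $\refWord$ belongs to exactly one of $\reflang(A_1), \reflang(A_2)$ by the assumed disjointness, and within its component it has a unique run.

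For projection, I would relabel every transition $(p,\vop{x},q)$ and $(p,\vcl{x},q)$ with $x \notin X$ to an $\varepsilon$-transition of the same weight, then invoke Proposition~\ref{prop:epsilonRemoval} to eliminate $\varepsilon$-transitions in polynomial time. The weighted-automaton semantics already sums the weights of all runs producing the same residual ref-word, and this matches exactly the $\boplus$ in the definition of $\pi_X$ on $\K$-relations. Notice that projection may destroy the variable order condition in general, which is consistent with the theorem claiming no unambiguity preservation for projection.

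The main obstacle will be the natural join, which I would handle by a product construction on $Q_1 \times Q_2$ that is carefully orchestrated with respect to variable markers. On an alphabet symbol $\sigma \in \Sigma$, both components step synchronously, and the new transition weight is the $\srtimes$-product of their weights. On a marker for a shared variable $v \in \vars(A_1) \cap \vars(A_2)$, both components must step synchronously on the same marker. On a marker for $v \in \vars(A_1) \setminus \vars(A_2)$, only the $A_1$-component moves while the $A_2$-component stays (and symmetrically for $v \in \vars(A_2)\setminus\vars(A_1)$); these one-sided moves inherit the corresponding single weight. Initial and final weights are the $\srtimes$-products of componentwise values. The three things to verify are that (i) $A_{\join}$ is functional over $V_1 \cup V_2$, which reduces to the fact that each accepted ref-word in $\reflang(A_\join)$ decomposes into a matching pair of accepted ref-words in $\reflang(A_1)$ and $\reflang(A_2)$ that agree on shared variable positions; (ii) the sum of run weights over the product equals $\repspnrw{A_1}(\doc,\pi_{V_1}\tup) \srtimes \repspnrw{A_2}(\doc,\pi_{V_2}\tup)$ by distributivity of $\srtimes$ over $\srplus$ in the commutative semiring $\K$; and (iii) unambiguity is preserved when both inputs are unambiguous, because the unique decomposition into a pair of ref-words, each with a unique run, yields a unique run of the product. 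Polynomial-time constructibility is immediate for all three operators, since union is linear, projection is linear followed by the polynomial $\varepsilon$-removal, and the product has size $O(|A_1|\cdot|A_2|)$.
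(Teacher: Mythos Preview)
The paper does not actually prove this theorem; it simply cites Doleschal et al.\ and Doleschal's thesis for the constructions. Your union and projection constructions are the standard ones and are fine (modulo trimming before $\varepsilon$-removal in the projection case, so that relabeled variable-marker transitions cannot form $\varepsilon$-cycles).

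Your natural join construction, however, has a genuine gap in how it handles variable markers that are \emph{not} shared. As you describe it, from a product state $(p,q)$ the automaton can take a one-sided $\vop{x}$-move for $x\in V_1\setminus V_2$ or a one-sided $\vop{y}$-move for $y\in V_2\setminus V_1$ in either order. Hence a single pair of component runs $(\rho_1,\rho_2)$ generally gives rise to \emph{many} product runs, one for each interleaving of the non-shared markers. This breaks both claims you need. First, the weight semantics: if there are $k>1$ interleavings, the product $\srplus$-sums the same term $\weight{\rho_1}\srtimes\weight{\rho_2}$ exactly $k$ times, so over a non-idempotent semiring such as $\numerical$ you do not obtain $\repspnrw{A_1}(\doc,\pi_{V_1}\tup)\srtimes\repspnrw{A_2}(\doc,\pi_{V_2}\tup)$ but a $k$-fold inflation of it. (Concretely: with $V_1=\{x\}$, $V_2=\{y\}$, each $A_i$ accepting only the empty-span tuple with weight $1$, your product assigns weight $6$ instead of $1$.) Second, the unambiguity claim: your argument covers condition~(C2) (one run per ref-word), but the product also accepts ref-words whose adjacent markers are not $\prec$-ordered, so condition~(C1) fails and $A_{\join}\notin\uregk$.

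The standard remedy, which is what the cited constructions do, is to force a \emph{canonical} interleaving: at each document position, only allow the product to emit markers from $\varop{V_1\cup V_2}$ in the global $\prec$-order. This makes the correspondence between pairs $(\rho_1,\rho_2)$ and product runs a bijection, so distributivity gives the correct weight, and (C1) holds by construction. Implementing this requires a small amount of extra control state (or a preprocessing step putting both inputs into a normal form where blocks of markers are handled atomically), but it is still polynomial.
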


\subsection{Relative Expressiveness of Weight Functions}\label{sec:weight-function-expressiveness}
We first show that every constant-width weight function is also an
unambiguous regular weight function.

\begin{proposition}\label{prop:swInUreg}$\SW \subseteq \uregnum \cap \uregtrop$.
\end{proposition}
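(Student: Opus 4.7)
The plan is to construct, for the given constant-width weight function $\w \in \SW$ specified by a relation $R : \vtup{\xvars} \to \Q$ with $|\xvars| \le c$, an unambiguous functional $\K$-weighted \vset-automaton $W$ with $\vars(W) = \xvars$ such that $\repspnr{W}(\doc, \tup') = R(\doc_{\tup'})$ for every document $\doc$ and every $\doc$-tuple $\tup'$ over $\xvars$, doing this uniformly for $\K \in \{\numerical, \tropical\}$. Since $\supp(R)$ is finite, I would build, for each $\tupu \in \supp(R)$, a sub-automaton $A_{\tupu}$ whose ref-word language consists of exactly the canonical ref-words over $\xvars$ that capture the string-tuple $\tupu$, and arrange for each such ref-word to have one run of total weight $R(\tupu)$; the final automaton is then $W \df \bigcup_{\tupu \in \supp(R)} A_{\tupu}$.

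The states of $A_{\tupu}$ would be triples $(P, f, \ell)$, where $P : \xvars \to \{\mathrm{fresh}, \mathrm{open}, \mathrm{closed}\}$ records the status of each variable, $f$ maps each currently open $x$ to the number of characters of $\tupu(x)$ already consumed, and $\ell \in \{\bot\} \cup \varop{\xvars}$ stores the last variable operation read (or $\bot$ if the previous symbol was in $\alphabet$). An alphabet transition on $\sigma$ is enabled precisely when every open $x$ expects $\sigma$ as the next character of $\tupu(x)$; it advances the $f(x)$ values and resets $\ell$ to $\bot$. A variable-opening transition $\vop{x}$ is enabled precisely when $P(x) = \mathrm{fresh}$ and $\vop{x} \succ \ell$, which enforces the variable order condition; a closing transition $\vcl{x}$ is similar but additionally requires $f(x) = |\tupu(x)|$. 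The initial state has all variables fresh and $\ell = \bot$, and the unique accepting state has all variables closed. Since the state after any prefix of a ref-word is forced by this construction, $A_{\tupu}$ has exactly one run on each ref-word in its language, and every such ref-word satisfies the variable order condition, so $A_{\tupu}$ is unambiguous and functional. I would then place the entire weight $R(\tupu)$ on the initial weight function and set all remaining initial, transition, and final weights to $\srone$, so that the weight of the unique accepting run telescopes to $R(\tupu)$ both when $\srtimes$ is numerical multiplication and when it is tropical addition.

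Distinct $\tupu \neq \tupu'$ in $\supp(R)$ differ on some $x \in \xvars$, so no single ref-word can simultaneously capture $\tupu(x)$ and $\tupu'(x)$; hence the languages $\reflang(A_{\tupu})$ are pairwise disjoint, and iterated application of Theorem~\ref{theo:closed-algebra} yields $W \in \uregk$. Correctness is then immediate: for any $\doc$ and any $\doc$-tuple $\tup'$ over $\xvars$, the canonical ref-word $\refWordFrom{\doc}{\tup'}$ belongs to $\reflang(A_{\tupu})$ if and only if $\tupu = \doc_{\tup'} \in \supp(R)$, in which case it contributes a single run of weight $R(\tupu)$; otherwise no sub-automaton accepts and $\repspnr{W}(\doc, \tup') = \srzero$, which coincides with the SW value on tuples outside the support. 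The main obstacle I anticipate is designing the state of $A_{\tupu}$ so that it simultaneously tracks potentially overlapping or nested open variables, enforces the variable order condition on adjacent variable operations, and admits a unique accepting run per ref-word; the triple $(P, f, \ell)$ is tailored to resolve all three requirements at once, after which the remaining verification is routine.
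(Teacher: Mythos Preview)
Your construction works for the numerical semiring but has a genuine gap in the tropical case. The claim ``otherwise no sub-automaton accepts and $\repspnr{W}(\doc, \tup') = \srzero$, which coincides with the SW value on tuples outside the support'' is only correct when $\srzero = 0$. Over the tropical semiring, $\srzero = \infty$, whereas the $\SW$ weight function is defined via a $\Q$-relation $R$ (over the numerical semiring), so $\w(\doc,\tup) = R(\doc_{\pi_\xvars\tup}) = 0$ whenever $\doc_{\pi_\xvars\tup} \notin \supp(R)$. Your automaton $W$ therefore returns $\infty$ on precisely those tuples where it should return $0$, and the two functions disagree.

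This is not a minor patch: you need the automaton to have an accepting run of weight $0$ for \emph{every} $(\doc,\tup')$ with $\doc_{\tup'} \notin \supp(R)$, and the set of such string-tuples is the complement of a finite relation, hence infinite. The paper handles this by constructing an additional unambiguous component $W_{\overline{R}}$ that accepts exactly the tuples whose string-projection lies outside $R$ and assigns them tropical weight $0$; it obtains this component by observing that $R$ is a recognizable string relation, taking the spanner for its complement, and converting it into an unambiguous $\tropical$-weighted automaton with all weights $\srone = 0$. The final automaton is then the union of your $\bigcup_{\tupu} A_{\tupu}$ with this complement piece, and disjointness of the ref-word languages again yields unambiguity via Theorem~\ref{theo:closed-algebra}. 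Your proposal is missing this entire ingredient for $\uregtrop$.

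As a secondary remark, your explicit state-machine for $A_{\tupu}$ is more hands-on than the paper's, which simply writes $A_{\tupu}$ as the join $\bigbowtie_{x \in \xvars} \alphabet^* \cdot x\{\tupu(x)\} \cdot \alphabet^*$ and invokes closure of $\uregk$ under join. Either route is fine for the individual pieces, but the join-based one avoids the bookkeeping you anticipate as the ``main obstacle.''
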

\begin{proof}
  Let $\w \in \SW$ be a constant-width weight function, represented by a
  $\rationals$-weighted string relation $R$ over $X$, that is, tuples in $R$ map variables to strings. We begin by showing that $\w \in
  \uregnum$. Let $X = \{x_1,\ldots,x_n\}$. We construct a $\numerical$-annotator
  $W$ representing $\w$.  We define an unambiguous \vset-automaton
  $A_\tup$, for every tuple $\tup \in R$, such that $\tupu \in
  \repspnrk{A_\tup}{\B}(\doc)$ if and only if $\doc_\tupu = \tup$. Let $\tup \in
  R$. For every $x \in X$, let $w_x$ be the word $\tup(x)$ and let
  \[
    A^x_\tup \eqdef \alphabet^* \cdot \vop{x} w_x \vcl{x} \cdot \alphabet^*\;,
  \]
  that is, $A^x_\tup$ matches the string $\tup(x)$ in variable $x$ and outputs the corresponding $\{x\}$-tuple with the span. Since our definition of unambiguity requires one run per \emph{ref-word} in the language, it is easy to see that such an unambiguous $A^x_\tup$ exists.
  Furthermore, 
  \[
    A_\tup \eqdef A_\tup^{x_1} \join \cdots \join A_\tup^{x_n} \;,
  \]
  which is unambiguous due to Theorem~\ref{theo:closed-algebra}.  
  
  We define $W_\tup$ as the unambiguous $\numerical$-weighted
  \vset-automaton such that
  \[
    \repspnrk{W_\tup}{\numerical}(\doc,\tupu) =
    \begin{cases}
      R(\tup) & \text{if } \doc_\tupu = \tup \\
      \srzero & \text{otherwise.}
    \end{cases}
  \]
  This can be achieved by interpreting $A_\tup$ as a $\numerical$-weighted
  \vset-automaton, where all edges have weight $\srone$, the final weight
  function assigns weight $\srone$ to all accepting states, and the initial
  weight function assigns weight $R(\tup)$ to the initial state of $A_\tup$. We
  finally define $W$ as the union of all $W_\tup$. That is,
  \[
    W = \bigcup\limits_{\tup \in R} W_\tup\;.
  \]
  We observe that, by Theorem~\ref{theo:closed-algebra}, $W$ must be unambiguous,
  as all $W_\tup$ are unambiguous and the ref-word languages of the automata
  $W_\tup$ are pairwise disjoint.

  Recall that $\repspnrk{W}{\numerical}(\doc,\tup) = \srzero = 0$ if there is no
  run of $W$ on $\refWordFrom{\doc}{\tup}$, i.e. $\doc_\tup \notin R$.
  Therefore, $\repspnrk{W}{\numerical}(\doc,\tup) = R(\doc_\tup)$ as desired.

  The proof for $\SW \subseteq \uregtrop$ follows the same lines. However, the
  zero element of the tropical semiring is $\infty$, which implies that the
  automaton $W$ must have exactly one run $\rn$ for every tuple $\tup$, even if
  $\w(\doc,\tup) = 0$. To this end, let $W_\tup$ be as defined before, but
  interpreted over the tropical semiring. We construct an unambiguous 
  $\tropical$-weighted \vset-automaton $W_{\overline{R}}$, such that
  $\repspnrk{W_{\overline{R}}}{\tropical}(\doc,\tup) = 0$ if $\doc_{\tup} \notin R$
  and $W_{\overline{R}}$ has no run for $\tup$ otherwise. We observe that
  $R$ is a recognizable string relation.\footnote{A $k$-ary string relation is
    recognizable if it is a finite union of Cartesian products $L_1 \times
    \cdots \times L_k$, where each $L_i$ is a regular language. Note that $R$ is
    recognizable as it is the union over all tuples $\tup \in R$, where each
    tuple is represented by the Cartesian product $\{\tup(x_1)\} \times \cdots
    \times \{\tup(x_n)\}$ with $\vars(\tup) = \{x_1,\ldots,x_n\}$.} Therefore,
  due to Doleschal et al.~\cite[Theorem 6.11]{DoleschalKMP-lmcs22}, there is
  a document spanner $A_{R}$, with $\tup \in \toSpanner{A_R}(\doc)$ if and only
  if $\doc_\tup \in R$. Furthermore, let $A_{\overline{R}}$ be the complement of
  $A_R$, that is, $\tup \in \toSpanner{A_{\overline{R}}}(\doc)$ if and only if
  $\doc_\tup \notin R$. Note that $A_{\overline{R}} \in \vsa$ as regular document
  spanners are closed under difference (cf. Fagin et al.~\cite[Theorem
  5.1]{FaginKRV15}).
  By Doleschal et al.~\cite[Lemma 4.5]{DoleschalKMNN19-journal},
  we can assume \mbox{w.l.o.g.}
  that $A_{\overline{R}} \in \ufvsa$. Let $W_{\overline{R}}$ be $A_{\overline{R}}$,
  interpreted as $\tropical$-weighted \vset-automaton, that is, each transition,
  initial and final state gets weight $\srone = 0$. Note that, due to $A_{\overline
    R} \in\ufvsa$, $W_{\overline{R}}$ is unambiguous. It follows that
  $\repspnrk{W_{\overline{R}}}{\tropical}(\doc,\tup) = 0$ if $\doc_{\tup} \notin R$
  and $W_{\overline{R}}$ has no run for $\tup$ otherwise. Let
  \[
    W = W_{\overline{R}} \cup \bigcup\limits_{\tup \in R} W_\tup\;.
  \]
  Again, we observe that, by Theorem~\ref{theo:closed-algebra}, $W$ must be
  unambiguous as all involved automata are unambiguous and their ref-word
  languages are pairwise disjoint.
  Furthermore,
  \[
    \repspnrk{W}{\tropical}(\doc,\tup) =
    \begin{cases}
      R(\doc_\tup) & \text{if } \doc_\tup \in R \\
      0 & \text{otherwise.}
    \end{cases}
  \]
  Therefore, $\repspnrk{W}{\tropical}(\doc,\tup) = R(\doc_\tup)$ as desired.
\end{proof}

We now observe that every regular weight function is a polynomial-time weight function. Indeed, 
given a document $\doc$ and a \doc-tuple $\tup$, the weight $\w(\doc,\tup)$ for a regular weight function $w$ can
be computed in polynomial time (cf. Doleschal~\cite[Theorem
5.6.1]{DoleschalThesis}).
\begin{observation}\label{obs:reg-poly}$\REG \subseteq \PW$.
\end{observation}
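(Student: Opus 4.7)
The plan is to exhibit, for any $\w \in \REG$ represented by a functional weighted $\vset$-automaton $W$ (over either $\tropical$ or $\numerical$), a uniform polynomial-time procedure $M$ that computes $\w(\doc,\tup) = \repspnr{W}(\doc, \pi_{\vars(W)}(\tup))$ given $(\doc,\tup)$. Such a machine $M$ witnesses $\w \in \PW$, and since the construction is uniform in $W$, this yields $\REG \subseteq \PW$.

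First I would preprocess $W$ by applying Proposition~\ref{prop:epsilonRemoval} to obtain, in polynomial time, an equivalent functional weighted $\vset$-automaton $W'$ without $\varepsilon$-transitions. Next, given the input pair $(\doc,\tup)$, I compute the unique ref-word $\refWord \eqdef \refWordFrom{\doc}{\pi_{\vars(W)}(\tup)}$ satisfying the variable order condition; this takes time linear in $|\doc|$ plus $|\vars(W)|$ since one simply inserts the variable markers $\vop{x}, \vcl{x}$ at the positions prescribed by $\pi_{\vars(W)}(\tup)$.

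The core step is then to evaluate the total weight that $W'$ assigns to $\refWord$, that is,
\[
  \repspnr{W'}(\doc,\pi_{\vars(W)}(\tup)) \;=\; \boplus_{\rn \in P(W',\doc),\ \toTuple{\rn} = \pi_{\vars(W)}(\tup)} \weight{\rn}.
\]
Since $\refWord$ is a fixed string of length polynomial in $|\doc|+|W|$ and $W'$ has no $\varepsilon$-transitions, this sum equals the sum of weights of all runs of $W'$ whose transition labels spell $\refWord$. It can be computed by the standard forward dynamic program for weighted automata: iterate over the symbols of $\refWord$ while maintaining, for each state $q$, the semiring-weighted sum of all partial runs ending in $q$. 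The number of $\srplus$/$\srtimes$ operations is $O(|\refWord| \cdot |W'|^2)$, and each semiring operation on $\tropical$ or $\numerical$ is polynomial in the bit-length of the rationals involved (which in turn grows only polynomially through the iterations). Finally, combining the forward values at the end of $\refWord$ with the final weight function $F$ yields $\w(\doc,\tup)$.

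There is no real obstacle here; the essential content is already provided by the cited evaluation algorithm (Doleschal~\cite[Theorem 5.6.1]{DoleschalThesis}), and what the observation adds is the translation: a regular weight function is evaluated by running this polynomial-time algorithm on the canonical ref-word of $(\doc,\tup)$. The only minor point to address is to confirm that restricting $\tup$ to $\vars(W)$ before forming the ref-word matches the definition of $\w(\doc,\tup)$ from Section~\ref{sec:weightfunctions}, which it does by construction.
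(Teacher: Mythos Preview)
Your approach matches the paper's: both reduce the observation to the fact that $\repspnr{W}(\doc,\tup')$ is computable in polynomial time, and both ultimately rest on Doleschal~\cite[Theorem~5.6.1]{DoleschalThesis}. The paper's argument is in fact nothing more than that one-line citation; your write-up simply unpacks what such an evaluation procedure looks like.

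There is one small technical slip in your unpacking. You assert that
\[
\boplus_{\rn \in P(W',\doc),\ \toTuple{\rn} = \tup'} \weight{\rn}
\]
equals the sum of weights of runs of $W'$ whose labels spell the single canonical ref-word $\refWord = \refWordFrom{\doc}{\tup'}$. This need not hold: a functional weighted $\vset$-automaton in $\REG$ is not required to satisfy the variable order condition, so $W'$ may have accepting runs on other ref-words $\refWord'$ with $\clr(\refWord') = \doc$ and $\toTuple{\refWord'} = \tup'$, differing from $\refWord$ only in the order of adjacent variable markers at the same document position. Those runs contribute to the left-hand side but are missed by your DP on $\refWord$ alone. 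The fix is immediate in the $\PW$ setting, since $W$ is fixed in the representation of $\w$: the number of such ref-words is bounded by a constant depending only on $|\vars(W)|$, so you can run the forward DP on each of them and $\srplus$-sum the results (or, alternatively, first normalise $W$ to satisfy the variable order condition). Either way the overall argument stands.
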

To summarize, we provide the inclusion structure of the classes of weight functions we consider in Figure~\ref{fig:weight-inclusion-structure}. All inclusions that do not have a number hold by definition.

\begin{figure}[t]
    \centering
    \resizebox{\linewidth}{!}{
    \begin{tikzpicture}
    \node (cN) at (0,0) {$\SW_{\nat}$};
    \node (cQ) at (2.5,0) {$\SW_{\Q^+}$};
    \node (c) at (5,0) {$\SW$};
    \node (uregQ) at (7,1) {$\uregnum$};
    \node (uregT) at (7,-1) {$\uregtrop$};
    \node (ureg) at (9,0) {$\UREG$};
    \node (regQ) at (10,1) {$\regnum$};
    \node (regT) at (10,-1) {$\regtrop$};
    \node (reg) at (12,0) {$\REG$};
    \node (poly) at (14,0) {$\PW$};
    
\path 
    (cN) edge[draw=none] node {$\subseteq$} (cQ)
    (cQ) edge[draw=none] node {$\subseteq$} (c)
    (c) edge[draw=none] node (c-uregQ) [sloped] {$\subseteq$} (uregQ)
    (c) edge[draw=none] node (c-uregT) [sloped] {$\subseteq$} (uregT)
    (uregQ) edge[draw=none] node[sloped] {$\subseteq$} (ureg)
    (uregT) edge[draw=none] node[sloped] {$\subseteq$} (ureg)
    (ureg)  edge[draw=none] node[near start,sloped] {$\subseteq$} (reg)
    (uregQ) edge[draw=none] node {$\subseteq$} (regQ)
    (uregT) edge[draw=none] node {$\subseteq$} (regT)
    (regQ) edge[draw=none] node[sloped] {$\subseteq$} (reg)
    (regT) edge[draw=none] node[sloped] {$\subseteq$} (reg)
    (reg) edge[draw=none] node (reg-poly) {$\subseteq$} (poly)
    ;
    
    \node at ($(c-uregQ)+(-.35,.35)$) {\tiny \eqref{prop:swInUreg}};
    \node at ($(c-uregT)+(-.35,-.35)$) {\tiny \eqref{prop:swInUreg}};
    \node at ($(reg-poly)+(0,.35)$) {\tiny \eqref{obs:reg-poly}};
    \end{tikzpicture}}
    \caption{Inclusion structure of our considered weight functions}
    \label{fig:weight-inclusion-structure}
\end{figure}

\subsection{Preliminary Results on Document Spanners} 
We will also need some preliminary results concerning the number of possible
spans over a document $\doc$.

\begin{lemma}\label{lemma:countSpans}Given a document $\doc$, the number of spans over $\doc$ is polynomial in the
  size of $\doc$. In particular, $|\spans(\doc)| = \frac{(|\doc|+1) \cdot
    (|\doc|+2)}{2}$, for every $\doc \in \docs$.
\end{lemma}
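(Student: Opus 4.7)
The plan is to prove this lemma by direct combinatorial counting, since the definition of a span of $\doc$ is syntactic: a span of $\doc$ is an expression $\spanFromTo{i}{j}$ with $1 \le i \le j \le n+1$, where $n = |\doc|$. Thus $|\spans(\doc)|$ is simply the number of such integer pairs $(i,j)$, and no properties of the document's actual symbols enter.

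First I would fix $n \eqdef |\doc|$ and count pairs by their starting index. For each $i \in \{1, \ldots, n+1\}$, the valid choices of $j$ are exactly $i, i+1, \ldots, n+1$, giving $n+2-i$ spans with left endpoint $i$. Summing,
\[
  |\spans(\doc)| \;=\; \sum_{i=1}^{n+1} (n+2-i) \;=\; \sum_{k=1}^{n+1} k \;=\; \frac{(n+1)(n+2)}{2},
\]
where the second equality is the substitution $k = n+2-i$.

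Since the expression $\frac{(|\doc|+1)(|\doc|+2)}{2}$ is quadratic in $|\doc|$, the count is in particular polynomial in $|\doc|$, establishing both claims of the lemma. There is no real obstacle here; the only thing worth being careful about is the inclusive upper bound $j \le n+1$ on the end position (which permits empty spans $\spanFromTo{i}{i}$ and the span $\spanFromTo{n+1}{n+1}$ just past the end of $\doc$). Missing either of these conventions would yield $\binom{n+1}{2}$ or $\binom{n+2}{2}-1$ instead of the stated $\binom{n+2}{2}$, so I would state the bounds explicitly before doing the sum.
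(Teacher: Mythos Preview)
Your proof is correct and takes essentially the same approach as the paper: a direct combinatorial count that reduces to the triangular number $\sum_{k=1}^{n+1} k$. The only cosmetic difference is that you group spans by their starting index $i$, whereas the paper groups them by their length $\ell = j-i$; both groupings yield the same sum $1 + 2 + \cdots + (|\doc|+1)$.
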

\begin{proof}
  For a span $\spanij,$ let $\ell = j-i$ be the length of the span. It is easy
  to see that for every document $\doc$, there is exactly one span of length
  $|\doc|$, two spans of length $|\doc|-1$, three spans of length $|\doc|-2$,
  etc. Thus, there are $1 + 2 + \dots + (|\doc|+1) = \frac{(|\doc|+1) \cdot
    (|\doc|+2)}{2}$ spans over a document $\doc$, concluding the proof.
\end{proof}

It follows directly that the maximal number of tuples, extracted by a document
spanner is exponential in the size of the spanner. 

\begin{corollary}\label{cor:tuplesPerSpanner}Let $A \in \fvsa$ be a \vset-automaton and $\doc \in \docs$ be a document.
  Then $\spcountagg \leq |\spans(\doc)|^{|\vars(A)|} = \Big(\frac{(|\doc|+1)\cdot
    (|\doc|+2)}{2}\Big)^{|\vars(A)|}\;$.
\end{corollary}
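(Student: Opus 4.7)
The plan is to argue directly from the definitions and combine with Lemma~\ref{lemma:countSpans}. Let $\spanner \eqdef \toSpanner{A}$. Since $A$ is functional, every tuple $\tup \in \spanner(\doc)$ is a $\doc$-tuple with $\vars(\tup) = \vars(A)$, i.e., a function $\tup : \vars(A) \to \spans(\doc)$. Hence $\spanner(\doc)$ is a subset of the set of all such functions, whose cardinality is exactly $|\spans(\doc)|^{|\vars(A)|}$.

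From there I would observe that $\spcountagg = |\spanner(\doc)|$ by definition of the count aggregate, so $\spcountagg \leq |\spans(\doc)|^{|\vars(A)|}$. Finally, I would substitute the closed-form expression $|\spans(\doc)| = \frac{(|\doc|+1)\cdot(|\doc|+2)}{2}$ given by Lemma~\ref{lemma:countSpans} to obtain the stated bound.

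There is no genuine obstacle here: the statement is a one-line counting argument combined with the previous lemma. The only care to take is to invoke functionality of $A$ (so that every extracted tuple has domain exactly $\vars(A)$ and thus lives in $\spans(\doc)^{\vars(A)}$), which is justified by the standing assumption, made earlier in Section~\ref{preliminaries}, that all \vset-automata we consider are functional.
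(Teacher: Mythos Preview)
Your proposal is correct and matches the paper's approach: the paper does not give an explicit proof but simply states that the bound follows directly from Lemma~\ref{lemma:countSpans}, which is exactly the counting argument you spell out.
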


As we see next, given a number of variables, a document $\doc$, and a number $k$
of tuples, we can construct an unambiguous  \vset-automaton $A$ and a
document $\doc'$ such that $A$ extracts exactly $k$ tuples on $\doc'$.

\begin{lemma}\label{lem:spannerWithKtuples}Let $X \eqdef \{x_1,\ldots,x_v\} \in \svars$ be a set of variables, $\doc \in
  \docs$ be a document, and $0 \leq k \leq |\spans(\doc)|^{|X|}$. Then there is
  a \vset-automaton $A \in \ufvsa$ with $\vars(A) = X$ and a document
  $\doc^\prime \in \docs$ such that $|\toSpanner{A}(\doc^\prime)| = k$.
  Furthermore, $A$ and $\doc^\prime$ can be constructed in time polynomial in
  $|X|$ and $\doc$.
\end{lemma}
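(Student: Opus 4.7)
The plan is to set $\doc^\prime \eqdef \doc^v$ (the $v$-fold concatenation of $\doc$), identify each variable $x_j$ with a capture inside the $j$-th copy of $\doc$, and build an unambiguous functional \vset-automaton whose outputs are exactly the lex-first $k$ elements of the natural bijection between $\spans(\doc)^v$ and those captures. The point of placing the captures into $v$ disjoint parts of $\doc^\prime$ is that each block automaton can then be realized by a sequential concatenation of $v$ single-variable gadgets reading one copy of $\doc$ each; this avoids the exponential blowup that would arise from a general $v$-fold natural-join construction. Since the bijection above ranges over $n^v$ tuples where $n \eqdef |\spans(\doc)|$, every value $k \leq n^v$ is realizable.

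Concretely, fix a canonical total order $\tau_1 \prec \cdots \prec \tau_n$ of $\spans(\doc)$, and for $0 < k < n^v$ write $k = \sum_{j^* = 0}^{v-1} d_{j^*} \cdot n^{v-1-j^*}$ with digits $d_{j^*} \in \{0, \ldots, n-1\}$; the boundary cases $k = 0$ and $k = n^v$ are covered directly by, respectively, an empty automaton over variable set $X$ and a ``full product'' automaton in which each $x_j$ captures an arbitrary span in its copy. For each $j^*$ with $d_{j^*} > 0$ I build $A_{j^*}$ as a sequential composition of $v$ phases, where phase $j$ reads the $j$-th copy of $\doc$ and captures $x_j$ at (i)~the fixed span $\tau_{d_j+1}$ if $j < j^*$, (ii)~any one of $\tau_1, \ldots, \tau_{d_{j^*}}$ if $j = j^*$, and (iii)~any span if $j > j^*$. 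Each phase is a simple unambiguous sub-automaton of size polynomial in $|\doc|$ (a line for cases (i) and (iii), and a disjoint union of $d_{j^*}$ single-span gadgets for case (ii), which is unambiguous because the gadgets have pairwise disjoint ref-word languages). Because the decomposition of $\doc^\prime$ into $v$ equal-length blocks is unique, the sequential composition inherits unambiguity, and $A_{j^*}$ produces exactly $d_{j^*} \cdot n^{v-1-j^*}$ tuples.

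Finally, set $A \eqdef \bigcup_{j^* : d_{j^*} > 0} A_{j^*}$. The blocks are pairwise disjoint: for $j^* < j^{**}$, any output of $A_{j^*}$ assigns to $x_{j^*}$ one of $\tau_1, \ldots, \tau_{d_{j^*}}$, whereas any output of $A_{j^{**}}$ assigns to $x_{j^*}$ the distinct span $\tau_{d_{j^*}+1}$, so the canonical ref-words are all distinct. Theorem~\ref{theo:closed-algebra} then yields $A \in \ufvsa$ with $\vars(A) = X$, of size polynomial in $|X|$ and $|\doc|$, and with $|\toSpanner{A}(\doc^\prime)| = \sum_{j^*} d_{j^*} \cdot n^{v-1-j^*} = k$. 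The main obstacle I expect, conceptually straightforward but a little bookkeeping-heavy, is the choice of $\doc^\prime$: using $\doc^\prime = \doc$ directly would force a genuine $v$-fold natural join over a single document and hence an exponential state space, whereas placing each variable in its own private copy of $\doc$ makes the captures for different variables non-overlapping and keeps the whole construction polynomial.
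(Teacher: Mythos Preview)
Your proposal is correct and follows essentially the same route as the paper: expand $k$ in base $n=|\spans(\doc)|$, park each variable in its own copy of $\doc$, build one unambiguous block automaton per nonzero digit (prefix variables fixed, the $j^*$-th variable ranging over $d_{j^*}$ options, suffix variables free), and take the disjoint union. The paper differs only in low-level choices: it separates the copies with a fresh symbol $\#$ (so $\doc'=(\doc\cdot\#)^v$) rather than relying on length-counting, and it realises the ``exactly $m$ spans'' gadget by taking all spans of a few fixed lengths (via the triangular decomposition of Lemma~\ref{lemma:countSpans}) rather than as a union of $m$ single-span lines. One small slip in your write-up: the phase for case~(iii) (``any span'') is not literally a line automaton---it needs the before/inside/after branch---but it is still an unambiguous polynomial-size DAG, so your size and unambiguity claims are unaffected.
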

\begin{proof}
  We observe that the statement holds for $k = 0$. Therefore we assume,
  \mbox{w.l.o.g.}, that $1 \leq k \leq |\spans(\doc)|^{v}$. 

  We begin by proving the statement for $|X| = 1$. Let $1 \leq k \leq
  |\spans(\doc)|$. Recalling the proof of Lemma~\ref{lemma:countSpans}, we
  observe that $k$ can be written as a sum $k = k_1 + \cdots + k_n$ of $n \leq
  |\doc|+1$ different natural numbers with $0 \leq k_1 < \cdots < k_n \leq
  |\doc|+1$. We construct an automaton $A_k \in \ufvsa$, which consists of $n$
  branches, corresponding to $k_1,\ldots,k_n$. On document $\doc$, the branch
  corresponding to $k_i$ selects all spans of length $\ell_i \eqdef |\doc|+1 -
  k_i$. Each of these branches can be constructed
  as an unambiguous \vset-automaton $A_{k_i} \eqdef \alphabet^* \cdot
  \vop{x} \alphabet^{\ell_i} \vcl{x} \cdot \alphabet^*$. We observe that there are exactly
  $k_i$ spans over $\doc$ with length $\ell_i$, and therefore
  $|\toSpanner{A_{k_i}}(\doc)| = k_i$. The automaton $A_k$ is defined as
  \[
    A_k \eqdef A_{k_1} \cup \dots \cup A_{k_n}\;.
  \]
  It is straightforward to verify that all automata $A_{k_i}$ are unambiguous. Thus, since the ref-word languages of all $A_{k_i}$ are pairwise disjoint, it holds that $A_k\in \ufvsa$ (cf.\ 
  Theorem~\ref{theo:closed-algebra}). Furthermore, we observe that
  \[
    |\toSpanner{A_k}(\doc)| = |\toSpanner{A_{k_1}}(\doc)| + \cdots +
    |\toSpanner{A_{k_n}}(\doc)| = k_1 + \cdots + k_n = k\;.
  \]

  It remains to show the statement for $v \eqdef |X| > 1$. Let $\# \notin
  \alphabet$ be a new alphabet symbol. We build upon the encoding for $|X| = 1$.
  That is, for every $1 \leq k \leq |\spans(\doc)|$, let $A_k^x$ be the automaton
  $A_k$, using variable $x$, as defined previously. We observe that every $1 \leq
  k \leq |\spans(\doc)|^v$ has an encoding $k = k_1 \cdots k_{v}$ in base
  $|\spans(\doc)|$ of length $v$. The document $\doc^\prime$ consists of $v$
  copies of $\doc \cdot \#$, more formally,
  \[
    \doc^\prime \eqdef (\doc \cdot \#)^v\;.
  \]

  For every $1 \leq i \leq v$, we construct an automaton $A^\prime_{k_i}$,
  which selects exactly $k_i \cdot |\spans(\doc)|^{v-i}$ tuples over document
  $\doc'$. More formally, 
  \[
    A^\prime_{k_i} \eqdef \doc \cdot \vop{x_1} \# \vcl{x_1} \cdot \doc \cdot
    \vop{x_2} \# \vcl{x_2} \cdots \doc \cdot \vop{x_{i-1}} \# \vcl{x_{i-1}} \cdot A_{k_i}^{x_i} \cdot \# \cdot
    A_{|\spans(\doc)|}^{x_{i+1}} \cdot \# \cdots \# \cdot
    A_{|\spans(\doc)|}^{x_v} \cdot \# \;.
  \]
  The automaton $A^\prime_k$ is then defined as the union of all
  $A^\prime_{k_i}$, that is,
  \[
    A^\prime_k \eqdef A^\prime_{k_1} \cup \dots \cup A^\prime_{k_v}\;.
  \]
  We observe that $A^\prime_{k_i} \in \ufvsa$ and due to the ref-word languages
  of all $A^\prime_{k_i}$ being pairwise disjoint, $A^\prime_k \in \ufvsa$ (cf.
  Theorem~\ref{theo:closed-algebra}). Furthermore, we observe that
  \[
    |\toSpanner{A^\prime_k}(\doc^\prime)| =
    |\toSpanner{A^\prime_{k_1}}(\doc^\prime)| + \cdots +
    |\toSpanner{A^\prime_{v}}(\doc)^\prime| = k_1 + \cdots + k_n = k\;.
  \]
  This concludes the proof. 
\end{proof}

 \section{Constant-Width Weight Functions}
\label{cwidth}
We begin this section by showing that \minp and \maxp are tractable for
constant-width weight functions. The reason for their tractability is that, for
a constant number of variables $\xvars \subseteq \vars(A)$, the spans
associated to $\xvars$ in output tuples can be computed in polynomial time.
Building upon Corollary~\ref{cor:tuplesPerSpanner}, we show that \minp and
\maxp are in \fp for constant-width weight functions and 
\vset-automata. We immediately have:
\begin{theorem}\label{thm:minmaxfp}$\minp[\fvsa,\SW]$ and $\maxp[\fvsa,\SW]$ are in \fp.
\end{theorem}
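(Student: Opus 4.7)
The plan is to leverage Corollary~\ref{cor:tuplesPerSpanner} in the special case where we look only at a constant number $c$ of variables. A constant-width weight function $\w$ is given by a $\rationals$-relation $R$ over variables $X = \{x_1,\ldots,x_c\}$ with $c$ fixed, and $\w(\doc,\tup)$ depends only on the strings that $\tup$ assigns to $X$. Hence the set of achievable weights coincides with
\[
\bigset{R(\doc_{\tau_1},\ldots,\doc_{\tau_c}) \;\mid\; (\tau_1,\ldots,\tau_c) \in \pi_X\toSpanner{A}(\doc)}\;,
\]
and by Lemma~\ref{lemma:countSpans} the ambient set $\spans(\doc)^c$ has only $O(|\doc|^{2c})=\poly(|\doc|)$ span tuples. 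So we can afford to enumerate all candidate $X$-projections explicitly.

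The algorithm I would run is the following. First, by Theorem~\ref{theo:closed-algebra}, construct a functional \vset-automaton $A_X\in\fvsa$ in polynomial time with $\toSpanner{A_X}=\pi_X\toSpanner{A}$. Then iterate over every span tuple $(\tau_1,\ldots,\tau_c)\in\spans(\doc)^c$ and test whether the $X$-tuple that maps $x_i\mapsto\tau_i$ lies in $\toSpanner{A_X}(\doc)$. This is a membership test on a functional \vset-automaton, which is decidable in polynomial time by a standard product of $A_X$ with a position-tracking automaton over $\doc$ that enforces opening and closing each $x_i$ exactly at the prescribed positions (accepting reduces to reachability). Whenever the test succeeds, compute the value $R(\doc_{\tau_1},\ldots,\doc_{\tau_c})$ via a direct lookup in $R$ (equal to $0$ when the string tuple is outside $\supp(R)$). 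Return the minimum (respectively, maximum) of all such values, which, by the observation above, equals $\spminagg$ (respectively, $\spmaxagg$).

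Correctness is immediate from the identity displayed above: every $\tup\in\spanner(\doc)$ has an $X$-projection that is enumerated, and every enumerated $X$-projection that passes the membership test corresponds to at least one such $\tup$ contributing the same weight. For complexity, the enumeration performs $\poly(|\doc|)$ iterations, each doing polynomial work (projection and membership), so the overall procedure is in $\fp$. There is no real obstacle: the only thing to verify is that polynomial-time projection and membership for functional \vset-automata are available, and both are standard. The key conceptual point is that, even though $|\spanner(\doc)|$ may be exponential in $|\vars(A)|$, the weight depends only on a constant-size sub-projection, collapsing the output into polynomially many weight-equivalence classes.
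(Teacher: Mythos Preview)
Your proposal is correct and follows essentially the same approach as the paper: project onto the constant-width variable set $X$, use Corollary~\ref{cor:tuplesPerSpanner} to bound the projected relation by a polynomial, materialize it, and scan for the min/max. The only cosmetic difference is that the paper directly materializes $\pi_X(\toSpanner{A})(\doc)$ (citing known polynomial-time evaluation of functional spanners), whereas you enumerate all of $\spans(\doc)^c$ and filter by membership testing; both are valid polynomial-time realizations of the same idea.
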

\begin{proof}
  Let $A \in \fvsa$, $\doc\in\docs$, $\xvars \subseteq \vars(A)$ with $|\xvars|
  \leq c$, and $\w \in \SW$ be given as a $\rationals$-weighted string relation $R$ over
  $\xvars$. We first show that the set $\{\pi_\xvars\tup \mid \tup \in
  \toSpanner{A}(\doc)\}$ can be computed in time polynomial in the sizes of $A$
  and $\doc$.
  
  We observe that, per definition of projection for document spanners (Section~\ref{sec:projection}),
  $\{\pi_\xvars\tup \mid \tup \in \toSpanner{A}(\doc)\} =
  \big(\pi_\xvars(\toSpanner{A})\big)(\doc)$. Since $A$ is functional (which we assume for \vset-automata throughout this article),  a
  \vset-automaton for $\pi_\xvars(\toSpanner{A})$ can be computed in polynomial
  time (cf. Freydenberger et al.~\cite[Lemma 3.8]{FreydenbergerKP18}). Due to
  $|\xvars| \leq c$, it follows from Corollary~\ref{cor:tuplesPerSpanner} that
  there are at most polynomially many tuples in
  $\big(\pi_\xvars(\toSpanner{A})\big)(\doc)$. Thus, the set $\{\pi_\xvars\tup
  \mid \tup \in \toSpanner{A}(\doc)\}$ can be materialized in polynomial time.
  
  In order to compute $\minp$ and $\maxp$, a polynomial time algorithm can
  iterate over all tuples $\tup$ in $\{\pi_\xvars\tup \mid \tup \in
  \toSpanner{A}(\doc)\}$, evaluate $R(\doc,\tup)$ and maintain the minimum and
  the maximum of these numbers.
\end{proof}

In order to calculate aggregates like $\spsum, \spavg,$ or $\spquant$, it is not
sufficient to know which weights are assigned, but also the multiplicity of each
weight is necessary. Recall that counting the number of output tuples is
tractable if the \vset-automaton is unambiguous
(Theorem~\ref{thm:countResults}) and $\spanl$-complete in general.
We now show that we can achieve tractability of the mentioned
aggregate problems if the \vset-automaton is unambiguous. The
reason is that we can compute in polynomial time the multiset $\suptmp \eqdef
\multiset{\pi_\xvars\tup \mid \tup \in \toSpanner{A}(\doc)}$, where we represent
the multiplicity of each tuple $\tupu$ (i.e., the number of tuples $\tup \in
\toSpanner{A}(\doc)$ such that $\pi_\xvars\tup = \tupu$) in binary.

\begin{lemma}\label{lemma:wfpconstufvsamultiset}Given a \vset-automaton $A$ and a document \doc, the multiset $\suptmp$ can be
  computed in $\fp$ if $A \in \ufvsa$.
\end{lemma}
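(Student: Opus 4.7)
The plan is to reduce computing the multiset $\suptmp$ to a polynomial number of unambiguous counting problems, each of which can be solved in \fp by Theorem~\ref{thm:countResults}.

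First, I would materialize the support $T \eqdef \{\pi_\xvars\tup \mid \tup \in \toSpanner{A}(\doc)\}$ of $\suptmp$ in polynomial time. This follows exactly as in the proof of Theorem~\ref{thm:minmaxfp}: since $|\xvars|\leq c$ is a constant, a functional \vset-automaton for $\pi_\xvars(\toSpanner{A})$ can be constructed in polynomial time (cf.\ Freydenberger et al.), and Corollary~\ref{cor:tuplesPerSpanner} bounds the number of its output tuples by $O(|\doc|^{2c})$, which is polynomial.

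Second, for each $\tupu \in T$, I need to compute its multiplicity $M(\tupu) = |\{\tup \in \toSpanner{A}(\doc) \mid \pi_\xvars\tup = \tupu\}|$. For this, I would construct an auxiliary spanner $A_{\tupu}$ that, on input $\doc$, extracts exactly the single tuple $\tupu$, and then take the natural join with $A$. Concretely, since $\tupu$ fixes a span $\spanFromTo{i_x}{j_x}$ for every $x \in \xvars$, $A_{\tupu}$ can be realized as a deterministic \vset-automaton with one state per position of $\doc$ that inserts the variable-opening and variable-closing transitions at the prescribed positions, ordered according to $\prec$ to satisfy the variable order condition. This automaton is unambiguous (indeed deterministic) and lies in \ufvsa, and its size is linear in $|\doc|$ and $|\xvars|$. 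By Theorem~\ref{theo:closed-algebra}, the join $A \join A_{\tupu}$ is an unambiguous functional \vset-automaton of polynomial size, and by construction $\toSpanner{A\join A_{\tupu}}(\doc) = \{\tup \in \toSpanner{A}(\doc) \mid \pi_\xvars\tup = \tupu\}$. Its cardinality, which equals $M(\tupu)$, can then be computed in \fp by Theorem~\ref{thm:countResults}.

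Repeating this for each of the polynomially many $\tupu \in T$ yields all multiplicities in polynomial time; each $M(\tupu)$ is at most $|\spans(\doc)|^{|\vars(A)|}$ by Corollary~\ref{cor:tuplesPerSpanner} and thus has a polynomial-size binary representation, so the whole multiset $\suptmp$ is produced in polynomial time. The only subtle point is making sure that $A_{\tupu}$ remains unambiguous when joined with $A$; this is handled cleanly by Theorem~\ref{theo:closed-algebra} together with the observation that $A_{\tupu}$ is deterministic, so the main obstacle lies just in the careful construction of $A_{\tupu}$ (in particular, respecting the variable order condition when several variables of $\xvars$ share endpoints in $\tupu$).
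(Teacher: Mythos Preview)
Your proposal is correct and follows essentially the same approach as the paper: materialize the polynomial-size support of $\suptmp$ via $\pi_\xvars(\toSpanner{A})(\doc)$, then for each $\tupu$ in the support build an unambiguous auxiliary automaton fixing the $\xvars$-spans to $\tupu$, join it with $A$ (preserving unambiguity by Theorem~\ref{theo:closed-algebra}), and obtain the multiplicity via $\countp[\ufvsa]$ (Theorem~\ref{thm:countResults}). The paper phrases the auxiliary automaton as the \ufvsa accepting only the single ref-word $\refWordFrom{\doc}{\tupu}$, which amounts to exactly the deterministic position-by-position construction you describe.
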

\begin{proof}
  The procedure is given as Algorithm~\ref{alg:weightoccurences}. It is
  straightforward to verify that the algorithm is correct. Due to 
  Corollary~\ref{cor:tuplesPerSpanner}, the set
  $\pi_\xvars(\toSpanner{A})(\doc)$ is at most of polynomial size. Furthermore,
  the automaton $A_{\refWordFrom{\doc}{\tup}} \eqdef \refWordFrom{\doc}{\tup}
  \in \ufvsa$ can be constructed in polynomial time and due to
  Theorem~\ref{theo:closed-algebra} an unambiguous \vset-automaton
  for $A_\tup$ can be computed in polynomial time as well. By
  Theorem~\ref{thm:countResults}, each iteration of the for-loop also only
  requires polynomial time. Thus, the whole algorithm terminates after
  polynomially many steps.
\end{proof}

\begin{algorithm}[t]
\DontPrintSemicolon \KwIn{An unambiguous \vset-automaton $A \in
    \ufvsa$, a document $\doc \in \docs$.}
\KwOut{The multiset $\suptmp$.} $\suptm \gets \multiset{}$\; $\supts \gets
  \suptsp$\; \For{$\tup \in \supts$}{ $A_\tup \gets A \join
    A_{\refWordFrom{\doc}{\tup}}$
    \Comment*[r]{\textrm{$A_{\refWordFrom{\doc}{\tup}}$ is the \ufvsa that only accepts
        $\refWordFrom{\doc}{\tup}$.}} $\suptm(\pi_\xvars\tup) \gets
    \spcount(\toSpanner{A_\tup},\doc)$\label{alg:weightoccurencesCt}\; }
  \textbf{output} $\suptm$\;
\caption{Calculate the multiset $\suptmp$.}
\label{alg:weightoccurences}
\end{algorithm}

It follows that all remaining aggregate functions can be efficiently computed if
the spanner is given as an unambiguous \vset-automaton.

\begin{theorem}\label{thm:aggregatefp}For every $0 \leq q \leq 1$, $\sump[\ufvsa,\SW]$, $\avgp[\ufvsa,\SW]$, and $\quantp[\ufvsa, \SW]$ are in \fp.
\end{theorem}
\begin{proof}
  Let $A \in \ufvsa$ be a \vset-automaton, $\doc \in \docs$ be a document, $\w
  \in \SW$ be a weight function, represented by a $\rationals$-weighted string relation $R$ over
  $\xvars$. Due to Lemma~\ref{lemma:wfpconstufvsamultiset} the multiset
  $\suptmp$ can be computed in polynomial time. Thus one can compute the
  multiset $W \eqdef \multiset{R(\doc_\tup) \mid \tup \in \suptmp}$ in
  polynomial time. It is straightforward to compute the aggregates in
  polynomial time from $W$.
\end{proof}

We conclude this section by showing that $\spsum$, $\spavg$, and $\spquant$ are
not tractable, if the spanner is given as a \vset-automaton.

\begin{theorem}\label{thm:sw-sum-cnf}$\sump[\fvsa,\SW]$ is \sharpp-hard, even if $\w$ is represented by the
  $\rationals$-Relation $R$ over $\{x\}$ with
  \[
    R(\doc) \eqdef
    \begin{cases}
      1 & \text{if } \doc = 1\\
      -1 & \text{if } \doc = -1\\
      0 & \text{otherwise.}
    \end{cases} 
  \]
\end{theorem}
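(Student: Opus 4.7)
The plan is to give a parsimonious reduction from $\countp[\fvsa]$ to $\sump[\fvsa,\SW]$, using precisely the weight function $R$ specified in the statement. By Corollary~\ref{cor:countSharpp}, $\countp[\fvsa]$ is $\sharpp$-hard; it is in fact also $\sharpp$-hard under parsimonious reductions via the standard encoding of $\#\text{SAT}$ as a spanner counting problem (truth assignments become choices of spans, clauses are checked by a product construction), so this strengthens to the parsimonious setting.

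Given an instance $(A,\doc)$ of $\countp[\fvsa]$ with $A \in \fvsa$ over alphabet $\alphabet$, I would construct $(A',\doc',\w)$ as follows. Extend $\alphabet$ with the character $1$ if it is not already present, set $\doc' \eqdef \doc \cdot 1$, and take $\w$ to be the weight function represented by $R$. Obtain $A'$ from $A$ by adding a fresh variable $x$ and three fresh states $q_1,q_2,q_3$, declaring $q_3$ the unique accepting state of $A'$. For every state $q$ that was accepting in $A$, add the transitions $q \xrightarrow{\vop{x}} q_1$, $q_1 \xrightarrow{1} q_2$, and $q_2 \xrightarrow{\vcl{x}} q_3$, and remove the accepting status of $q$ in $A'$.

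To analyze this, observe that $A'$ is functional: every accepted ref-word of $A'$ has the shape $\refWord \cdot \vop{x} \cdot 1 \cdot \vcl{x}$ with $\refWord \in \reflang(A)$, so it is valid and has variable set $\vars(A) \cup \{x\}$. Applying $\clr$ yields $\clr(\refWord) \cdot 1$, which matches $\doc'$ iff $\clr(\refWord) = \doc$. Hence the map $\tup \mapsto \tup \cup \{x \mapsto \mspan{|\doc|+1}{|\doc|+2}\}$ is a bijection from $\toSpanner{A}(\doc)$ onto $\toSpanner{A'}(\doc')$. Since every $\tup' \in \toSpanner{A'}(\doc')$ satisfies $\doc'_{\tup'(x)} = 1$, we have $\w(\doc',\tup') = R(1) = 1$, and therefore
\[
  \spsum(\toSpanner{A'},\doc',\w) \;=\; |\toSpanner{A'}(\doc')| \;=\; |\toSpanner{A}(\doc)| \;=\; \spcount(\toSpanner{A},\doc).
\]

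I do not anticipate a serious obstacle; the construction is essentially an append. The one point to verify is that the forced suffix $\vop{x} \cdot 1 \cdot \vcl{x}$ pins $x$ to the final character of $\doc'$, so that occurrences of $1$ inside $\doc$ cannot create spurious runs of $A'$ on $\doc'$. Notably, the reduction only ever triggers the $R(1)=1$ branch of $R$; the $-1$ and $0$ branches are never exercised, which shows that intractability already sets in for extremely restricted weight functions with a single nonzero string value.
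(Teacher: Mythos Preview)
Your reduction from $\countp[\fvsa]$ is correct as stated, but it does not establish the theorem as the paper intends it. The paper's convention (Section~\ref{preliminaries}) is that \sharpp-hardness means hardness under \emph{parsimonious} reductions unless marked otherwise, and this theorem carries no such marker (cf.\ Table~\ref{tab:overview}). Your source problem $\countp[\fvsa]$, however, is only known to be \sharpp-hard under \emph{Turing} reductions (Corollary~\ref{cor:countSharpp}). Your one-line claim that it is also \sharpp-hard under parsimonious reductions via a ``product construction'' is exactly the missing step, and the sketch you give does not go through: to check that \emph{every} clause of a CNF formula is satisfied, a \vset-automaton reading $a^n$ left to right would have to track, per clause, whether it has already been satisfied, which is exponential in the number of clauses. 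The join of one automaton per clause has the same blowup.

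There is a telling symptom. Your construction only ever uses the value $R(1)=1$, so what you have actually written is a reduction from $\countp[\fvsa]$ to $\sump[\fvsa,\SW_{\nat}]$. But the very next theorem in the paper (Theorem~\ref{thm:sw-sum-spanl}) shows that $\sump[\fvsa,\SW_{\nat}]$ is merely \spanl-complete, and its lower-bound proof is \emph{literally} your reduction (append a $1$ and capture it in a fresh variable $x$). You have rediscovered the \spanl-hardness of the nonnegative case, not the \sharpp-hardness of the signed case.

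The paper's proof instead reduces directly from $\#\text{CNF}$ and makes essential use of the weight $-1$. An automaton $A_1$ extracts $2^n$ tuples, each with $\doc_{\tup(x)}=1$; an automaton $A_{-1}$ extracts one tuple with $\doc_{\tup(x)}=-1$ for each \emph{non}-satisfying assignment. The latter is easy to express as a polynomial-size spanner because it is a union over clauses: a tuple is non-satisfying iff it falsifies \emph{some} clause. Taking $A = A_1 \cup A_{-1}$ gives $\spAsumagg = 2^n - (2^n - c) = c$, a parsimonious reduction from a problem that is \sharpp-complete under parsimonious reductions. The negative weight is not incidental; it is what lets the reduction sidestep the need to express ``satisfies all clauses'' directly.
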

\begin{proof}
  We will give a reduction from the $\#\text{CNF}$ problem, which is \sharpp-complete under
  parsimonious reductions. To this end, let $\phi$ be a Boolean formula in CNF
  over variables $x_1,\ldots,x_n$ and let $\w \in \SW$ be the weight function
  which is represented by the $\rationals$-Relation $R$, which is as defined in
  the theorem statement.
  
  We construct a \vset-automaton $A \in \fvsa$ and a document $\doc \eqdef a^n
  \cdot - \cdot 1$, such that $\spsum(\toSpanner{A}, \doc, \w) = c$, where $c$ is the
  number of variable assignments which satisfy $\phi$.

  We begin by defining two \vset-automata $A_1,A_{-1}$, with $\vars(A_1) =
  \vars(A_{-1}) = \{x_1,\ldots,x_n,x\}$. Slightly overloading notation, we
  define both automata by regex formulas.

  The automaton $A_{1}$ selects exactly $2^n$ tuples on document $\doc$, all of
  which get assigned weight $1$ by $\w$. More formally (using $\lor$ to denote regular expression disjunction),
\[
    A_1 \eqdef (\vop{x_1}a\vcl{x_1} \; \lor \; \vop{x_1}\varepsilon\vcl{x_1} a)\cdots (\vop{x_n} a \vcl{x_n} \; \lor
    \; \vop{x_n} \varepsilon \vcl{x_n} a) \; - \; \vop{x} 1 \vcl{x}\;.
  \]
  Therefore, $\spsum(\toSpanner{A_1},\doc, \w) =
  \spcount(\toSpanner{A_1},\doc) = 2^n$.

  We use a similar encoding as Doleschal et al.~\cite[Theorem
  5.4]{DoleschalKMP-lmcs22} to encode variable assignments into tuples. That
  is, each variable $x_i$ of $\phi$ is associated with a corresponding capture
  variable $x_i$ of $A_{-1}$. With each assignment $\tau$ we associate the tuple
  $\tup_\tau$, such that
  \[
    \tup_\tau(x_i) \eqdef
    \begin{cases}
      \mspan{i}{i} & \text{ if } \tau(x_i) = 0, \text{ and}\\
      \mspan{i}{i+1} &\text{ if } \tau(x_i) = 1\;.\\
    \end{cases}
  \]
  We construct the automaton $A_{-1}$ as a regex formula $\alpha$, such that
  there is a one-to-one correspondence between the non-satisfying assignments
  for $\phi$ and tuples in $\toSpanner{\alpha}(\doc)$. More formally, for each
  clause $C_j$ of $\phi$ and each variable $x_i$, we construct a regex-formula
  \[
    \alpha_{i,j} \eqdef
    \begin{cases}
      x_i\{\varepsilon\}\cdot a& \text{ if $x_i $ appears in $C_j$,} \\
      x_i\{a\} & \text{ if $\neg x_i$ appears in $ C_j$,} \\
      (x_i\{\varepsilon\}\cdot a) \lor x_i\{a\} & \text{ otherwise.} \\
    \end{cases}
  \]
  Consequently, we define 
$\alpha_j \eqdef \alpha_{1,j}\cdots\alpha_{n,j}\cdot \vop{x}-1\vcl{x}$.

  For example, if we use variables $x_1,x_2,x_3,x_4$ and $C_j=x_1\lor
  x_3\lor\neg x_4$ is a clause, then
\[
    \alpha_j=\vop{x_1}\varepsilon\vcl{x_1} \; a \; (\vop{x_2}\varepsilon \vcl{x_2}\; a \lor
    \vop{x_2}a\vcl{x_2}) \; \vop{x_3}\varepsilon\vcl{x_3}\; a\; \vop{x_4}a\vcl{x_4} \; \vop{x} -1\vcl{x}.
  \]

  We observe that $\tup \in \toSpanner{\alpha_j}(\doc)$ if and only if the
  variable assignment $\tau$ of $\phi$ with $\tup = \tup_\tau$ does not satisfy
  clause $C_j$.

  We finally define $\alpha\eqdef\alpha_1 \lor \cdots \lor \alpha_{m}$, that is,
  the disjunction of all $\alpha_{i}$ and $A_{-1}$ as the \vset-automaton
  corresponding to $\alpha$.\footnote{It is easy to verify that the automaton
    $A_{-1} \in \fvsa$ can be constructed in polynomial time from $\alpha$.}
  Therefore, $\spcount(\toSpanner{A_{-1}},\doc) = s$, where $s = 2^n-c$ is the
  number of variable assignments which do not satisfy $\phi$. Furthermore, per
  definition of $A_{-1}$ and $\w$, it follows that
  \[
    \spsum(\toSpanner{A_{-1}}, \doc, \w) = -1 \cdot s = -s\;.
  \]
  We finally define the \vset-automaton $A$ as the union of $A_1$ and $A_{-1}$.
  We observe that every tuple $\tup \in \toSpanner{A}(\doc)$ is either selected
  by $A_1$ (if $\doc_{\tup(x)} = 1$) or by $A_{-1}$ (if $\doc_{\tup(x)} = -1$),
  but never by both automata. Recall that $c$ is the number of assignments which
  satisfy $\phi$ and $s = 2^n - c$ is the number non-satisfying assignments of
  $\phi$. Therefore, we have that 
  \[
    \spAsumagg = \spsum(A_1,\doc,\w) + \spsum(A_{-1},\doc,\w) = 2^n + (-s) = 2^n
    - (2^n - c)= c\;.
  \]
  This concludes the proof.
\end{proof}

If the weights are restricted to natural numbers, \sump becomes \spanl-complete.
Note that we restrict weight functions to natural numbers, because \spanl is a
class of functions that return natural numbers. Allowing positive rational
numbers does not fundamentally change the complexity of the problems though. We
will see in Section~\ref{approx} that this enables us to approximate $\sump$
aggregates.

\begin{theorem}\label{thm:sw-sum-spanl}$\sump[\fvsa,\SW_{\nat}]$ is \spanl-complete, even if $\w$ is represented by the
  $\rationals$-Relation $R$ over $\{x\}$ with
  \[
    R(\doc) \eqdef
    \begin{cases}
      1 & \text{if } \doc = 1\\
      0 & \text{otherwise.}
    \end{cases} 
  \]
\end{theorem}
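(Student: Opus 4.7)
I plan to give a parsimonious reduction from $\countp[\fvsa]$, which is \spanl-complete by Theorem~\ref{thm:countResults}. Given an instance $(A,\doc)$, I rename alphabet symbols and variables if necessary so that the symbol $1$ does not occur in $\doc$ and the variable $x$ is fresh with respect to $\vars(A)$. I then construct $A' \in \fvsa$ that simulates $A$ and afterwards performs the variable operations $\vop{x}\cdot 1\cdot\vcl{x}$, and set $\doc' \eqdef \doc\cdot 1$. The tuples of $\toSpanner{A'}(\doc')$ are in bijection with those of $\toSpanner{A}(\doc)$, each extended by mapping $x$ to the final one-character span, so every such extended tuple $\tup'$ satisfies $\doc'_{\tup'(x)} = 1$ and hence $\w(\doc',\tup') = R(1) = 1$. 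Therefore $\spsum(\toSpanner{A'},\doc',\w) = |\toSpanner{A'}(\doc')| = |\toSpanner{A}(\doc)|$, completing the reduction.

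\textbf{Membership.} I plan to construct a nondeterministic logspace machine $M$ whose number of accepting computations on $(A,\doc,\w)$ equals $\spsum(\toSpanner{A},\doc,\w)$. The key observation is that $\w$ depends only on the constant-size projection $\pi_\xvars\tup$ of each tuple, so the set of possible projections $\tau$ has polynomial size. The machine $M$ operates in three phases. In the first, it nondeterministically guesses a $\xvars$-tuple $\tau$ of spans over $\doc$ using $O(\log|\doc|)$ space, since $|\xvars|$ is constant. In the second, it invokes the \spanl procedure for $\countp[\fvsa]$ of Theorem~\ref{thm:countResults} on the natural join $A \join A_\tau$, where $A_\tau$ is a small \vset-automaton that forces the $\xvars$-projection to equal $\tau$; this phase contributes exactly one accepting branch per tuple $\tup \in \toSpanner{A}(\doc)$ with $\pi_\xvars\tup = \tau$. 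In the third, $M$ retrieves $w \eqdef R(\doc_\tau)$ by scanning the input relation $R$ in log space, and then accepts on exactly $w$ branches via the standard logspace gadget that guesses an integer $i$ bit by bit while maintaining a three-valued flag comparing prefixes of $i$ with those of the binary encoding of $w$, accepting iff $i<w$. Multiplying the accepting counts across the three phases, the total number of accepting computations of $M$ equals $\sum_{\tup\in\toSpanner{A}(\doc)}\w(\doc,\tup) = \spsum(\toSpanner{A},\doc,\w)$.

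\textbf{Main obstacle.} The delicate point is invoking the \spanl procedure for $\countp[\fvsa]$ in the second phase on a spanner $A \join A_\tau$ that depends on the already-guessed tuple $\tau$. Because $A_\tau$ has a fixed shape parameterised by the constantly many spans in $\tau$ (which are kept on the work tape in log space), the joined \vset-automaton can be simulated by $M$ on the fly, and the reduction to the already-established $\countp[\fvsa] \in \spanl$ is clean. A secondary minor concern is that the value $w$ may have superlogarithmic bit length, but the bit-by-bit prefix-comparison gadget handles this within the logarithmic work-tape budget.
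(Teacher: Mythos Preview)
Your hardness reduction is exactly the paper's: append the fresh symbol $1$ to $\doc$, append $x\{1\}$ to $A$, and the resulting sum equals the original count.

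For membership the paper argues more directly: guess a full $\doc$-tuple $\tup$, check $\tup\in\toSpanner{A}(\doc)$, compute $\w(\doc,\tup)$ in \NL, and branch into that many accepting computations. Your route is more modular---guess only the constant-size projection $\tau$, invoke the $\countp[\fvsa]\in\spanl$ machine of Theorem~\ref{thm:countResults} on $A\join A_\tau$ as a black box, then branch on the weight---and you are more explicit about the log-space bookkeeping (in particular the prefix-comparison gadget for weights of super-logarithmic bit length, which the paper simply asserts ``can be done in \NL''). Both yield the same accepting count $\sum_{\tup}\w(\doc,\tup)$; your decomposition buys a cleaner reduction to the already-established $\countp$ result, while the paper's terser phrasing relies on reading ``guess a $\doc$-tuple and check membership'' as shorthand for the same underlying machinery.
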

\begin{proof}
  Recall that a function $f$ is in \spanl, if there is an \NL Turing machine $M$
  such that $f(x) = |M(x)|$. Let $A \in \fvsa$ be a \vset-automaton, $\doc \in
  \docs$ be a document, and $\w \in \SW_{\Q_+}$ be a weight function. We define
  $M$ as the Turing machine, which guesses a \doc-tuple $\tup$ and checks
  whether $\tup \in \toSpanner{A}(\doc)$. If yes, $M$ computes the weight
  $\w(\doc,\tup)$, which can be done in \NL, since $\w$ is given by a
  $\rationals$-Relation. The Turing machine $M$ then branches into
  $\w(\doc,\tup)$ accepting branches. If $\tup \notin \toSpanner{A}(\doc)$, $M$
  rejects. Thus, $|M(A,\doc)| = \spsumagg$, and therefore
  $\sump[\fvsa,\SW_{\nat}]$ is in \spanl.

  For the lower bound, we give a reduction from $\countp[\fvsa]$, which is
  \spanl-complete (cf. Theorem~\ref{thm:countResults}). Let $A \in \fvsa$, $\doc \in
  \docs$. We assume, \mbox{w.l.o.g.}, that $1 \notin \alphabet$ and $x \notin
  \vars(A)$. We construct a document $\doc^\prime \eqdef \doc \cdot 1$ and a
  \vset-automaton $A^\prime \eqdef A \cdot \vop{x}1\vcl{x}.$ We observe that
  $\spsum(\toSpanner{A^\prime}, \doc^\prime, \w) = \spAcountagg$, concluding the
  proof.
\end{proof}

We conclude this section by showing that \avgp and \quantp are \sharpp-hard
under Turing reductions.
\begin{theorem}\label{thm:sw-quant-avg-hard}Let $0 < q < 1$ be a fixed number. The problems $\avgp[\fvsa,\SW_{\Q^+}]$ and
  $\quantp[\fvsa,\SW]$ are \sharpp-hard under Turing reductions, even if $\w$ is
  represented by the $\rationals$-Relation $R$ over $\{x\}$ with
  \[
    R(\doc) \eqdef
    \begin{cases}
      1 & \text{if } \doc = 1\\
      0 & \text{otherwise.}
    \end{cases} 
  \]
\end{theorem}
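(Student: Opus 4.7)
My plan is to reduce $\countp[\fvsa]$---which is $\sharpp$-hard under Turing reductions by Corollary~\ref{cor:countSharpp}---to each of the two problems via polynomial-time Turing reductions. Throughout, the weight function $\w$ from the statement assigns weight $1$ to a tuple exactly when its variable $x$ captures the string ``1'' (otherwise $0$).

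For \avgp the reduction will be almost immediate. Given $(A,\doc)$ with unknown count $c = \spcount(\toSpanner{A},\doc)$, I will construct in polynomial time a $B\in\fvsa$ with a fresh variable $x\notin\vars(A)$ and a document $\doc' = \doc\cdot 0\cdot 1$ so that $\toSpanner{B}(\doc')$ consists of the $c$ tuples of $\toSpanner{A}(\doc)$, each extended by $x$ capturing the ``0'' (weight~$0$), together with exactly one additional tuple in which $x$ captures the ``1'' (weight~$1$). This can be realized as the disjoint union of a branch that simulates $A$ on the prefix and then forces $x$ to capture the ``0'', and a branch that closes all variables of $A$ at empty spans and captures the trailing ``1''. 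A single oracle call returns $\spavg(\toSpanner{B},\doc',\w) = 1/(c+1)$, which determines $c$ exactly.

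For \quantp, fix $q=p/r$ in lowest terms with $0<p<r$. For each candidate $k\in\nat$ I will construct in polynomial time a $B_k\in\fvsa$ with exactly $pc$ tuples of weight~$0$ and exactly $k$ tuples of weight~$1$: the $pc$ weight-$0$ tuples come from $p$ ref-word-disjoint marked copies of $A$ that each force $x$ to capture a fixed ``0'', while the $k$ weight-$1$ tuples come from invoking Lemma~\ref{lem:spannerWithKtuples} to build an unambiguous functional \vset-automaton producing exactly $k$ tuples over auxiliary variables, combined with a deterministic capture of a single ``1'' in $x$. Since the weights of $B_k$ lie in $\{0,1\}$, a short computation shows that the $q$-quantile of $B_k$ equals $0$ iff $pc/(pc+k) \ge p/r$, i.e.\ iff $k \le c(r-p)$, and equals $1$ otherwise. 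Binary search on $k$---polynomially many queries, since by Corollary~\ref{cor:tuplesPerSpanner} it suffices to search a range of size at most $|\spans(\doc)|^{|\vars(A)|}\cdot r$---pinpoints the threshold $k^\star = c(r-p)$, from which $c = k^\star/(r-p)$ is recovered exactly.

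The hard part will be the \quantp case: because the weights take only two values, each oracle query returns just one bit, so no single call can suffice. The essential trick is the scaling by the numerator $p$, which makes the threshold a clean integer multiple of $r-p$; without it, the unscaled threshold $\lfloor c(r-p)/p\rfloor$ pins $c$ down only up to a constant additive window, which is insufficient for an exact Turing reduction from \countp. Ensuring that the two components of $B_k$ have disjoint ref-word languages (so that Theorem~\ref{theo:closed-algebra} produces a functional union automaton of polynomial size) is a standard but necessary bookkeeping step.
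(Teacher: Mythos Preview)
Your proposal is correct and follows essentially the same approach as the paper: a Turing reduction from $\countp[\fvsa]$ that, for \avgp, attaches one extra tuple with the opposite weight so the average determines $c$ algebraically, and for \quantp, uses Lemma~\ref{lem:spannerWithKtuples} to build an auxiliary block of controllable size and binary-searches the threshold. The only cosmetic differences are that the paper swaps the roles of the weight-$0$ and weight-$1$ tuples in the \avgp construction (obtaining $c/(c+1)$ rather than $1/(c+1)$), and in the \quantp construction scales both sides---$a$ copies of $A$ and $b-a$ copies of the auxiliary block---so that the binary search parameter equals $c$ directly rather than $c(r-p)$.
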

\begin{proof}
  Recall that $\countp[\fvsa]$ is \sharpp-hard under Turing reductions (Corollary~\ref{cor:countSharpp}). We begin
  by giving a Turing reduction from $\countp[\fvsa]$ to $\avgp[\fvsa,\SW]$. Let
  $A,\doc,$ and $\doc^\prime$ be as defined in the proof of
  Theorem~\ref{thm:sw-sum-spanl}. The \vset-automaton $A^\prime$ builds upon $A$
  but selects a single additional tuple $\tup$ with $\tup(x) =
  \spanFromTo{|\doc|+2}{|\doc|+2}$ for all variables. As we will see later, this
  tuple is used to calculate $\spAcountagg$ from
  $\spavg(\toSpanner{A^\prime},\doc^\prime)$. Let $\vars(A) =
  \{x_1,\ldots,x_n\}$. We define
\[
    A^\prime \;\eqdef\; (A \cdot \vop{x}1\vcl{x}) \;\lor\; (\doc\cdot 1\cdot \vop{x_1} \vop{x_2}
    \cdots \vop{x_n}  \vop{x} \varepsilon \vcl{x} \vcl{x_n}\cdots \vcl{x_2} \vcl{x_1})\;.
  \]
  Observe that, for all $\tup \in A^\prime(\doc^\prime)$ it holds that
  $\doc_{\tup(x)} = 1$ if and only if $\pi_{\vars(A)}\tup \in
  \toSpanner{A}(\doc)$. Thus, per definition of $A^\prime$ and $\w$,
  $\spsum(\toSpanner{A^\prime}, \doc^\prime, \w) = \spAcountagg$ and
  $\spcount(\toSpanner{A^\prime}, \doc^\prime) = \spAcountagg + 1$. Therefore,
  it holds that
  \[
    \spavg(\toSpanner{A^\prime},\doc^\prime, \w) =
    \frac{\spAcountagg}{\spAcountagg + 1}\;.
  \]
  Solving the equation for
  $\spAcountagg,$ we have that
  \[
    \spAcountagg = \frac{\spavg(\toSpanner{A^\prime},\doc^\prime,
      \w)}{1-\spavg(\toSpanner{A^\prime},\doc^\prime, \w)} \;.
  \]
  This concludes the proof that $\avgp[\fvsa,\SW_{\Q^+}]$ is \sharpp-hard under Turing
  reductions.

  It remains to show that $\quantp[\fvsa,\SW]$ is also \sharpp-hard under Turing
  reductions. Let $A \in \fvsa$ be a \vset-automaton and $\doc \in
  \docs$ be a document. We will show the lower bound for $q = \frac{1}{2}$ first
  and study the general case of $0 < q < 1$ afterwards. Let $x \notin \vars(A)$
  be a new variable. Let $0 \leq r \leq |\spans(\doc)|^{|\vars(A)|}$. By
  Lemma~\ref{lem:spannerWithKtuples} there is a \vset-automaton $A^\prime$ and a
  document $\doc^\prime$ with $\spcount(\toSpanner{A^\prime},\doc^\prime) =
  |\toSpanner{A^\prime}(\doc^\prime)| = r$. Let $0, 1 \notin \alphabet$ be a new
  alphabet symbol. Let $\doc_r = 0 \cdot \doc \cdot 1 \cdot \doc^\prime$ and
\[
    A_r = \big(\vop{x}0\vcl{x} \cdot A \cdot 1 \cdot \doc^\prime\big) \lor
    \big(0 \cdot \doc \cdot \vop{x}1\vcl{x} \cdot A^\prime\big)\;.
  \]
  Thus, $\spcount(\toSpanner{A_r},\doc_r) = \spcount(\toSpanner{A},\doc) +
  \spcount(\toSpanner{A^\prime},\doc^\prime).$ Recalling the definition of $\w$
  it holds, for every tuple $\tup \in \toSpanner{A_r}$, that $\w(\doc_r,\tup) = 1$
  if $\tup$ was selected by $A^\prime$ and $\w(\doc_r,\tup) = 0$ otherwise,
  i.e., $\tup$ was selected by $A$. Therefore,
  $\qquantl{\frac{1}{2}}(\toSpanner{A_r},\doc_r,\w) = 0$ if and only if
  $\spcount(\toSpanner{A},\doc) \geq \spcount(\toSpanner{A^\prime},\doc^\prime)
  = r$. Let $r_{\text{max}}$ be the biggest $r$ such that we have
  $\qquantl{\frac{1}{2}}(\toSpanner{A_r},\doc_r,\w) = 0$. Using binary search,
  we can calculate $r_{\text{max}}$ with a polynomial number of calls to an
  $\qquantp{\frac{1}{2}}$ oracle. Furthermore, due to
  $\spcount(\toSpanner{A},\doc) \in \nat$ and $R_{\text{max}}$ being maximal, it
  must hold that $\spcount(\toSpanner{A},\doc) = r_{\text{max}}$, concluding
  this part of the proof.
 
  The general case of $0 < q < 1$ follows by slightly adopting the above
  reduction. Let $q = \frac{a}{b}$ with $a,b \in \nat$ be given by its numerator
  and denominator. Observe that $b > a$ as $0 < \frac{a}{b} < 1$. Let
  $A^\prime,\doc^\prime$ be as above and let $c \eqdef
  \spcount(\toSpanner{A},\doc)$. The document $\doc_r$ consists of $a$ copies of
  $\doc$, separated by $0's$ and $(b-a)$ copies of $\doc^\prime$ separated by
  $1's$. Formally, $\doc_r = 0\cdot \doc_1 \cdot 0 \cdot \doc_2 \cdot 0 \cdots
  \doc_a \cdot 0\cdot 1 \cdot \doc^\prime_1 \cdot 1 \cdot \doc^\prime_2 \cdot 1
  \cdots \doc^\prime_{b-a}\cdot 1$, where each $\doc_i$ (resp. $\doc^\prime_i$)
  is a copy of $\doc$ (resp. $\doc^\prime$). Furthermore, let
\[
    A_r = \big(\alphabet_0^* \cdot \vop{x}0\vcl{x} \cdot A \cdot 0 \cdot \alphabet_0^*
    \cdot \alphabet_1^* \big) \lor \big(\alphabet_1^* \cdot \alphabet_1^* \cdot
    \vop{x}1\vcl{x} \cdot A^\prime \cdot 1 \cdot \alphabet_1^*\big)\;,
  \]
  where $\alphabet_{0} \eqdef \alphabet \cup \{0\}$ (resp. $\alphabet_1 \eqdef
  \alphabet \cup \{1\}$). Observe that $\w$ assigns $0$ to exactly $c \cdot a$
  tuples in $\toSpanner{A_r}(\doc_r)$ and $\spcount(\toSpanner{A_r},\doc_r) =
  c\cdot a + r \cdot (b-a)$. Thus, $\qquantl{\frac{a}{b}}(A_r,\doc_r,\w) = 0$
  if and only if $\frac{c \cdot a}{c\cdot a + r \cdot (b-a)} \geq
  \frac{a}{b}$. We now show that $c \geq r$ if and only if
  $\qquantl{\frac{a}{b}}(\toSpanner{A_r},\doc_r,\w) = 0$. Assume that $c \geq
  r$. Then,
  \[
    \frac{c \cdot a}{c \cdot a + r \cdot (b-a)} \geq \frac{c \cdot a}{c
      \cdot a + c \cdot (b-a)} = \frac{c \cdot a}{c \cdot b} = \frac{a}{b}\;.
  \]
  Therefore, $\qquantl{\frac{a}{b}}(\toSpanner{A_r},\doc_r,\w) = 0$. On the
  other hand, if $c < r$,
  \[
    \frac{c \cdot a}{c \cdot a + r \cdot (b-a)} < \frac{c \cdot a}{c
      \cdot a + c \cdot (b-a)} = \frac{c \cdot a}{c \cdot b} = \frac{a}{b}\;.
  \]
  Thus, $\qquantl{\frac{a}{b}}(\toSpanner{A_r},\doc_r,\w) = 1$.

  Recall that $c = \spcount(\toSpanner{A},\doc)$. As for $q = \frac{1}{2}$, let
  $r_{\text{max}}$ be the biggest $r$ such that
  $\qquantl{\frac{a}{b}}(\toSpanner{A_r},\doc_r,\w) = 0$. Using binary search,
  we can calculate $r_{\text{max}}$ with a polynomial number of calls to an
  $\qquantp{\frac{a}{b}}$ oracle. Again it holds that
  $\spcount(\toSpanner{A},\doc) = r_{\text{max}}$, concluding the proof.
\end{proof}

 \section{Polynomial-Time Weight Functions}
\label{pvar}

Before we study regular weight functions, we make a few observations on the very
general polynomial-time computable weight functions. For weight functions $\w
\in \PW$, we assume that $\w$ is represented as a Turing Machine $A$ that
returns a value $A(\doc,\tup)$ in polynomially many steps for some fixed
polynomial of choice (e.g., $n^2$).\footnote{Our complexity results are
  independent of the choice of this polynomial.} Furthermore, to avoid
complexity due to the need to verify whether $A$ is indeed a valid input (i.e., 
timely termination), we will assume that $\w(\doc,\tup) = 0$, if $A$ does not
produce a value within the allocated time.

We first observe that polynomial-time weight functions make all our aggregation
problems intractable, which is not surprising. In fact, all the lower bounds already hold for regular weight functions. 
\begin{theorem}\label{thm:pwintractableKarp}The problems $\minp[\ufvsa,\PW]$ and $\maxp[\ufvsa,\PW]$
  are \optp-hard. Furthermore, $\sump[\ufvsa,\PW]$ and $\avgp[\ufvsa,\PW]$ are
  \sharpp-hard.
\end{theorem}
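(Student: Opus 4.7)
The plan is to give four reductions, one for each aggregate, all sharing the same unambiguous ``bit-enumerator'' spanner that already appears in the proof of Theorem~\ref{thm:sw-sum-cnf}. Fix an input $x$ to the hard problem, let $n = p(|x|)$ for the relevant polynomial, set $\doc \eqdef a^n$, and take
\[
  A \;\eqdef\; (x_1\{a\} \lor x_1\{\varepsilon\} a)\cdots(x_n\{a\} \lor x_n\{\varepsilon\} a).
\]
This $A$ is functional and unambiguous (each $\doc$-tuple determines the disjunct chosen at each position), and $\toSpanner{A}(\doc)$ is in bijection with $\{0,1\}^n$ via $y_i = 1 \iff \tup(x_i) = \mspan{i}{i+1}$. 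The heavy lifting is moved entirely into the \PW weight function, which may be any polynomial-time Turing machine.

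For \textbf{\maxp}, reduce (metrically) from an \optp-hard problem computed by an NPTM $M$ whose nondeterministic choices form a bit-string of length $n$. The machine $\w(\doc,\tup)$ decodes $\tup$ to $y \in \{0,1\}^n$, simulates $M$ on $(x,y)$, and returns $M(x,y)$ if the branch accepts, $0$ otherwise. Then $\spmax(\toSpanner{A},\doc,\w)$ equals the \optp value (possibly $0$ on inputs with no accepting path, which a metric reduction absorbs into $T_2$). For \textbf{\minp}, reuse the same $A,\doc$ and let $N \eqdef 2^{q(|x|)}$ be an a priori upper bound on outputs of $M$; define $\w$ to return $N - M(x,y)$ on accepting branches and $N$ on rejecting ones, so that $\spmin = N - \optp(x)$, from which $\optp(x)$ is recoverable by a metric-reduction postprocessor.

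For \textbf{\sump}, reduce parsimoniously from \#SAT: given a CNF $\phi$ over $n$ variables, use the same $A,\doc$ and define $\w(\doc,\tup) = 1$ if the assignment encoded by $\tup$ satisfies $\phi$ and $0$ otherwise; then $\spsum(\toSpanner{A},\doc,\w) = \#\phi$. For \textbf{\avgp}, keep $A,\doc$ and let $\w$ return $2^n$ on satisfying assignments and $0$ otherwise; since $|\toSpanner{A}(\doc)| = 2^n$ exactly, we obtain $\spavg(\toSpanner{A},\doc,\w) = 2^n \cdot \#\phi / 2^n = \#\phi$, again a parsimonious reduction.

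I expect the only subtle point to be verifying unambiguity (\ref{cond:voc} and \ref{cond:unambig}) of $A$ and confirming that the encoded weight machine meets the fixed polynomial time bound assumed for \PW; both are routine (the simulation of $M$ on a supplied choice sequence is deterministic polynomial time, and the bit-enumerator is essentially the unambiguous regex formula used in Theorem~\ref{thm:sw-sum-cnf}). The rest is bookkeeping, and no step needs to invoke more machinery than Theorem~\ref{theo:closed-algebra} and Corollary~\ref{cor:tuplesPerSpanner}.
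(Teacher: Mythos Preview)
Your argument is correct, but it takes a different route from the paper. The paper's proof of Theorem~\ref{thm:pwintractableKarp} is a one-liner: it simply invokes Theorems~\ref{thm:minmaxRegintractable} and~\ref{thm:sumTropicalHard}, which establish \optp-hardness of \minp/\maxp and \sharpp-hardness of \sump/\avgp already for the more restricted class \REG of regular weight functions, and then uses the inclusion $\REG \subseteq \PW$ to conclude. In other words, the paper proves the \emph{stronger} hardness results for regular weight functions first (which requires encoding the CNF structure into a weighted \vset-automaton, as in the gadget constructions of Theorem~\ref{thm:minmaxRegintractable}), and obtains the \PW statement for free.

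Your approach bypasses the automaton-level encoding entirely: since a \PW weight function may be an arbitrary polynomial-time machine, you push all the work into $\w$, letting it evaluate a CNF formula or deterministically simulate a fixed NPTM branch. This is a cleaner and more self-contained argument for this particular theorem, and it makes transparent why \PW is ``obviously'' too powerful. The trade-off is that your proof does not yield the sharper \REG lower bounds that the paper needs elsewhere (e.g., for Table~\ref{tab:overview} and for the inapproximability results in Section~\ref{approx}); the paper's detour through \REG is doing double duty. Both arguments are valid; yours is the natural direct one, the paper's is the natural corollary of results it had to prove anyway.
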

\begin{proof}
  We will see later that these problems are already hard for weight functions in \REG, which are a subclass of \PW (Theorems~\ref{thm:minmaxRegintractable} and~\ref{thm:sumTropicalHard}).
\end{proof}

\begin{theorem}\label{thm:pwintractableCook}Let $0 < q < 1$. Then $\quantp[\ufvsa,\PW]$ is \sharpp-hard under Turing reductions.
\end{theorem}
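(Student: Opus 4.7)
The plan is to give a polynomial-time Turing reduction from $\#$SAT (which is $\sharpp$-complete) to $\quantp[\ufvsa,\PW]$, closely paralleling the binary-search argument of Theorem~\ref{thm:sw-quant-avg-hard} but pushing all combinatorial complexity into the polynomial-time weight function so the spanner itself can remain unambiguous. Fix $q=a/b$ in lowest terms with $1\leq a<b$. Given a CNF formula $\phi$ on $n$ variables with satisfying-assignment count $c\in\{0,\dots,2^n\}$, the goal is to recover $c$ using $O(n)$ calls to a $\qquantp{q}$ oracle.

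The first ingredient is the unambiguous enumeration spanner already used in the proof of Theorem~\ref{thm:sw-sum-cnf}: the regex
\[
A_{\text{enum}}\eqdef(x_1\{a\}\vee x_1\{\varepsilon\}\cdot a)\cdots(x_n\{a\}\vee x_n\{\varepsilon\}\cdot a)
\]
extracts on $\doc_\phi\eqdef a^n$ exactly one tuple per assignment $\tau\in\{0,1\}^n$, and since the $2^n$ resulting tuples have pairwise distinct canonical ref-words, $A_{\text{enum}}\in\ufvsa$.

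Next, for every threshold $t\in\{1,\dots,2^n\}$ I would construct an instance $(B_t,e_t,\w_t)$ whose $q$-quantile equals $0$ iff $c\geq t$. Let $K\eqdef\max\{\lceil t/a\rceil,\lceil(2^n-t)/(b-a)\rceil\}$ and set $F_-\eqdef aK-t$ and $F_+\eqdef(b-a)K+t-2^n$, both nonnegative by choice of $K$ and of order $O(2^n)$. Invoking Lemma~\ref{lem:spannerWithKtuples} with variable set $\{x_1,\dots,x_n\}$ and a constant-length document, I obtain $\ufvsa$ filler spanners $A_-,A_+$ on polynomial-size documents $d_-,d_+$ that extract exactly $F_-$ and $F_+$ tuples respectively. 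The document is $e_t\eqdef\#\,\doc_\phi\,\#\,d_-\,\#\,d_+\,\#$, and $B_t$ is the union of three branches using a fresh tag variable $y$ captured at the first, second, and third $\#$-position respectively: branch~(i) runs $A_{\text{enum}}$ on $\doc_\phi$, branch~(ii) runs $A_-$ on $d_-$, and branch~(iii) runs $A_+$ on $d_+$. Since all three branches share the variable set $\{x_1,\dots,x_n,y\}$ and are distinguished by the span of $y$, their ref-word languages are pairwise disjoint and Theorem~\ref{theo:closed-algebra} yields $B_t\in\ufvsa$.

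The $\PW$ weight function $\w_t$ inspects $\tup(y)$ to identify the branch and returns, on branch~(i), $0$ if the assignment encoded by $\pi_{\{x_1,\dots,x_n\}}\tup$ satisfies $\phi$ and $1$ otherwise; on branch~(ii), $-1$; and on branch~(iii), $1$; all computable in polynomial time, so $\w_t\in\PW$. The resulting weight multiset is $\multiset{(-1)^{F_-},\,0^{c},\,1^{(2^n-c)+F_+}}$ of total size $T=bK$, and using the identity $qT=aK$ a one-line calculation shows $\qquantl{q}=0$ iff $c\geq t$. A binary search over $t\in\{1,\dots,2^n\}$ then pins down the largest $t^\star$ with $\qquantl{q}=0$, which equals $c$, in $O(n)$ oracle queries on polynomial-size instances. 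I expect the main obstacle to be bookkeeping rather than conceptual: aligning the three branches to the common variable set $\{x_1,\dots,x_n,y\}$ with truly disjoint ref-word languages, and checking that the filler counts $F_\pm$, which can reach $\Theta(2^n)$, remain realizable by Lemma~\ref{lem:spannerWithKtuples} on polynomial-size filler documents for every $t$ in the search range.
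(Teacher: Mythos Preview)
Your argument is correct: the binary-search reduction from $\#$SAT works, the filler counts $F_\pm$ stay within the range of Lemma~\ref{lem:spannerWithKtuples} once you take the base document of constant length roughly $b$, and the $q$-quantile of the multiset $\multiset{(-1)^{F_-},0^{c},1^{2^n-c+F_+}}$ is indeed $0$ iff $c\geq t$ (with the edge case $c=0$ handled by the search returning no admissible $t$).

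However, your route is quite different from the paper's. The paper does not build a fresh reduction here at all; it observes that $\UREG\subseteq\PW$, so the statement follows in one line from Theorem~\ref{thm:quantUregHard}, which establishes $\sharpp$-hardness already for \emph{unambiguous regular} weight functions via a reduction from $\#$Partition (through $\countp_{<k}$). Your construction exploits the full power of $\PW$ --- the weight function evaluates $\phi$ on the encoded assignment --- which makes the reduction from $\#$SAT very clean but says nothing about weaker weight classes. The paper's approach is more delicate (it must encode partition sums inside a \UREG automaton rather than an arbitrary polynomial-time machine) but yields the strictly stronger $\UREG$ lower bound, from which the $\PW$ case is immediate. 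In short: your proof is a valid self-contained argument for the theorem as stated, while the paper derives it as a corollary of a harder result proved elsewhere in the text.
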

\begin{proof}
  We will see later that the problem is already hard for $\UREG$ weight functions (Theorem~\ref{thm:quantUregHard}).
\end{proof}

We note
that all studied problems can be solved in exponential time, by first
constructing the relation $\toSpanner{A}(\doc)$, which might be of exponential
size, computing the weights associated to all tuples, and finally computing the
desired aggregate. 

\begin{theorem}\label{thm:pwfexptime}Let $0 < q < 1$. Then $\aggp[\fvsa,\PW]$ is in \fexptime for every $\aggp \in
  \{\minp,\maxp,\sump,\avgp,\quantp\}$.
\end{theorem}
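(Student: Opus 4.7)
The plan is the straightforward brute-force approach: since every intermediate object we might need has at most exponential size, we can compute everything explicitly. First, I would enumerate the set $\toSpanner{A}(\doc)$ tuple by tuple. By Lemma~\ref{lemma:countSpans}, $|\spans(\doc)|$ is polynomial in $|\doc|$, so there are at most $|\spans(\doc)|^{|\vars(A)|}$ candidate $\doc$-tuples, which is exponential in the input size. For each candidate tuple $\tup$, checking whether $\tup \in \toSpanner{A}(\doc)$ amounts to testing whether $A$ has a valid nonzero run on $\refWordFrom{\doc}{\tup}$, which is polynomial-time decidable by standard (weighted) automaton reachability. Hence $\toSpanner{A}(\doc)$ can be materialized in exponential time.

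Next, for every materialized tuple $\tup$, I would invoke the Turing machine representing $\w \in \PW$ on the pair $(\doc,\tup)$; by definition of \PW this produces $\w(\doc,\tup) \in \Q$ in polynomial time (and we take $\w(\doc,\tup)=0$ if no value is returned in time, as discussed in Section~\ref{pvar}). This yields in exponential time the full multiset of weights $\multiset{\w(\doc,\tup)\mid \tup\in \toSpanner{A}(\doc)}$, whose elements each have polynomially bounded bit length.

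From this explicit multiset all five aggregates are computable in time polynomial in its size, hence exponential overall: \minp and \maxp by a linear scan; \spsumagg by summing all weights (the result has polynomially bounded numerator and denominator per term, and only exponentially many terms, so the total encoding is still polynomial times an exponential factor, which is fine); \spavgagg by dividing $\spsumagg$ by $\spAcountagg$ (the latter being just the cardinality of the materialized set); and \spquantagg by sorting the multiset and returning the smallest value $r$ such that the fraction of weights not exceeding $r$ is at least $q$, directly from the definition.

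There is no real obstacle: the only point worth checking is that membership in $\toSpanner{A}(\doc)$ is polynomial-time decidable per tuple so that the materialization step does not inflate beyond exponential time, and that arithmetic on the rationals returned by $\w$ does not blow up. Both are routine. Consequently every problem in $\{\minp,\maxp,\sump,\avgp,\quantp\}$ over $\fvsa$ and $\PW$ lies in \fexptime.
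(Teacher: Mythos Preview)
Your proposal is correct and follows essentially the same approach as the paper: materialize the multiset $\multiset{\w(\doc,\tup)\mid \tup\in \toSpanner{A}(\doc)}$ in exponential time and then read off each aggregate. The paper's own proof is terser, simply asserting that this multiset can be computed in exponential time, whereas you spell out the enumeration of candidate $\doc$-tuples via Lemma~\ref{lemma:countSpans} and the per-tuple membership test; both routes are the same brute-force idea.
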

\begin{proof}
  Let $A \in \fvsa$, $\doc \in \docs$, and $\w \in \PW$.
  The algorithm first computes the multiset
  \[
    W_{A,\doc,\w} \eqdef \multiset{\w(\doc,\tup) \mid \tup \in
      \toSpanner{A}(\doc)}\;,
  \]
  which might be exponentially large. It is easy to see that $W_{A,\doc,\w}$ can
  be computed in exponential time. Furthermore, it follows directly that
  $\aggp[\fvsa,\PW]$ is in \fexptime for every $\aggp \in
  \{\minp,\maxp,\sump,\avgp,\quantp\}$.
\end{proof}

Throughout this section, we do not study excessively whether we can give a more
precise upper bound than the general \fexptime upper bound. However, we
sometimes give such bounds. For instance, we are able to provide \optp and
\fptosp upper bounds if the weight functions return natural numbers (or integers
in the case of the \fptosp upper bounds).

\begin{theorem}\label{thm:maxInOptp}$\minp[\fvsa,\PW]$ and $\maxp[\fvsa,\PW]$ are in \optp if the weight function
  only assigns natural numbers.
\end{theorem}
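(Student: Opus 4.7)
The plan is to exhibit, for each of the two problems, a nondeterministic polynomial-time Turing machine whose accepting computations produce natural numbers and whose maximum output encodes the desired aggregate (possibly after a standard polynomial-time post-processing). By Proposition~\ref{prop:epsilonRemoval} I may assume that the input automaton $A$ has no $\varepsilon$-transitions, so every valid nonzero run of $A$ on $\doc$ has at most $|\doc|+2|\vars(A)|$ transitions and thus polynomial size; this bounds the size of a guessed run.

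For $\maxp[\fvsa,\PW]$, the NTM $N$ on input $(A,\doc,\w)$ nondeterministically guesses a run $\rn$ of $A$ as a sequence of states and transition labels. It verifies in polynomial time that $\rn$ starts in an initial state, ends in an accepting state, uses only actual transitions of $A$, and that $\clr(\fromRun{\rn})=\doc$. Since $A\in\fvsa$, any run passing these checks is valid, so $\tup\df\toTuple{\rn}\in\toSpanner{A}(\doc)$. On each accepting branch, $N$ computes $\tup$ from $\fromRun{\rn}$ and then evaluates $w\df\w(\doc,\tup)\in\nat$ in polynomial time by simulating the machine representing $\w$, outputting $w$. Because the valid runs on $\doc$ surject onto $\toSpanner{A}(\doc)$, the maximum output over accepting paths is $\max_{\tup\in\toSpanner{A}(\doc)}\w(\doc,\tup)=\spAmaxagg$, giving $\maxp[\fvsa,\PW]\in\optp$.

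For $\minp[\fvsa,\PW]$, let $p$ be a polynomial bounding the running time of the machine representing $\w$; then $\w(\doc,\tup)\le M\df 2^{p(n)}$ for every tuple, where $n=|(A,\doc,\w)|$. Build $N'$ exactly as $N$ above, but on an accepting branch output $M-w\in\nat$ instead of $w$. The maximum output of $N'$ is then $M-\spAminagg$. Since $M$ depends only on $n$ and is polynomial-time computable from the input, recovering $\spAminagg$ amounts to a single polynomial-time subtraction, which is exactly what the metric reductions defined in Section~\ref{preliminaries} allow; equivalently, $\optp$ is customarily closed under such a post-processing of the output, and so $\minp[\fvsa,\PW]\in\optp$.

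The main delicate point is the min-to-max conversion. Strictly, the definition of $\optp$ in the preliminaries takes the maximum over accepting computations, so $\minp$ does not fit the syntactic form of $\optp$ directly. However, the transformation $x\mapsto M-x$ is precisely the kind of polynomial-time output post-processing captured by a metric reduction, under which $\optp$ is closed; this is the step that needs to be invoked (and the only one beyond routine simulation) to complete the argument.
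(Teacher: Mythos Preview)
Your argument for $\maxp$ is correct and essentially the paper's proof (you guess a run rather than a tuple, a cosmetic difference). For $\minp$ you are right to flag the issue, since the paper's definition of $\optp$ takes only the \emph{maximum} over accepting branches; but your resolution is flawed. The class $\optp$ is \emph{not} closed under metric reductions, and not even under the specific post-processing $z\mapsto M-z$ you use. For a concrete obstruction, let $g(x)=1$ if the CNF formula $x$ is satisfiable and $g(x)=0$ otherwise; then $g$ is in (max-)$\optp$ via the obvious machine, yet $1-g$ is the characteristic function of $\overline{\mathrm{SAT}}$, and placing it in (max-)$\optp$ would put $\overline{\mathrm{SAT}}\in\np$, i.e.\ $\np=\conp$. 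So ``$\optp$ is customarily closed under such post-processing'' is false as stated; the closure of $\optp$ under metric reductions is $\fptonp$, not $\optp$.

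The correct fix is definitional: Krentel's original $\optp$ allows the machine to be declared as either a maximizer \emph{or} a minimizer, and the paper's preliminaries only recorded the max half. Under Krentel's definition the ``analogous'' machine (guess a tuple, output $\w(\doc,\tup)$, take the \emph{minimum}) directly witnesses $\minp[\fvsa,\PW]\in\optp$, which is what the paper's one-line proof is invoking. Your $M-w$ trick is then just the standard observation that the max- and min-variants of $\optp$ are exchanged by $f\mapsto M-f$, so together they form Krentel's $\optp$; it does not, however, put $\spAminagg$ into the max-only class.
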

\begin{proof}
  We only give the upper bound for \maxp. The proof for \minp is analogous. To
  this end, let $A \in \fvsa$, $\doc \in 
  \docs$, and $\w \in \PW$ be a weight function which only assigns natural
  numbers. The Turing Machine $N$ guesses a \doc-tuple
  \tup and accepts with output $0$ if $\tup \notin A(\doc)$. Otherwise, $N$
  computes the weight $\w(\doc,\tup)$ and accepts with output $\w(\doc, \tup)$.
  It is easy to see that the maximum output value of $N$ is exactly
  $\spmax(\toSpanner{A}, \doc, \w)$.
\end{proof}

In the following theorem we show that \sump, \avgp, and \quantp can be computed
in \fptosp if all weights are integers. The key idea is that, due to the
restriction to integer weights, we can compute the aggregates by multiple calls
to a \sharpp oracle. For instance for \sump, we define two weight functions,
$\w^+$ and $\w^-$, such that $\w^+$ computes the sum of all positive and $w^-$
the sum of all negative weights. Each of these sums can be computed by a single
call to a \sharpp oracle.

\begin{theorem}\label{thm:fpospUpperBounds}For every $0 \leq q \leq 1$, the problems $\sump[\fvsa,\PW]$, $\avgp[\fvsa,\PW]$, and
  $\quantp[\fvsa,\PW]$ are in $\fptosp$ if the weight function only assigns
  integers.
\end{theorem}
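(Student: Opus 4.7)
The plan is to reduce each aggregate to polynomially many queries to a $\sharpp$ oracle. Fix inputs $A \in \fvsa$, $\doc \in \docs$, and an integer-valued weight function $\w \in \PW$ represented by a polynomial-time TM $M$ that halts in at most $p(n)$ steps, where $n$ is the size of the input; thus $|\w(\doc,\tup)| \leq 2^{p(n)}$ for every \doc-tuple $\tup$, and each weight is representable in $p(n)$ bits. Also, membership $\tup \in \toSpanner{A}(\doc)$ is decidable in polynomial time: compute the canonical ref-word $\refWordFrom{\doc}{\tup}$ and test its membership in $\reflang(A)$ by the usual automaton simulation. These two observations will be reused in every construction below.

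For $\sump[\fvsa,\PW]$, split $\w$ into $\w^+(\doc,\tup) \eqdef \max(\w(\doc,\tup),0)$ and $\w^-(\doc,\tup) \eqdef \max(-\w(\doc,\tup),0)$; both are polynomial-time computable and return natural numbers, and $\spsumagg = S^+ - S^-$ with $S^\pm \eqdef \sum_{\tup \in \toSpanner{A}(\doc)} \w^\pm(\doc,\tup)$. I will show $S^+,S^- \in \sharpp$ by building an NTM $N^\pm$ on input $(A,\doc)$ that: (i) nondeterministically guesses a \doc-tuple $\tup$ using $O(|\vars(A)| \log |\doc|)$ bits; (ii) deterministically rejects if $\tup \notin \toSpanner{A}(\doc)$; (iii) deterministically computes $k \eqdef \w^\pm(\doc,\tup)$; (iv) nondeterministically guesses a binary string of length $p(n)$ and accepts iff the encoded integer lies in $\{1,\ldots,k\}$. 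The number of accepting paths of $N^\pm$ equals $S^\pm$, so $\spsumagg$ is the difference of two $\sharpp$ values and hence computable in $\fptosp$ with two oracle calls.

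For $\avgp[\fvsa,\PW]$, compute $S = \spsum(\toSpanner{A},\doc,\w)$ as above using two oracle calls and $C = \spAcountagg$ using one further $\sharpp$ call (e.g., by the construction above with $\w^\pm \equiv 1$, or directly via Theorem~\ref{thm:countResults}). Since $\toSpanner{A}(\doc) \neq \emptyset$ by assumption, $C \geq 1$, and $\spavg(\toSpanner{A},\doc,\w) = S/C$ is a rational number computable in polynomial time from $S$ and $C$. For $\quantp[\fvsa,\PW]$, I will binary search for the quantile over the integer interval $[-2^{p(n)},2^{p(n)}]$, which terminates in $O(p(n))$ iterations. At each step, given a candidate integer $r$, one $\sharpp$ oracle call computes $C_{\leq r} \eqdef |\{\tup \in \toSpanner{A}(\doc) \mid \w(\doc,\tup) \leq r\}|$ via an NTM that guesses $\tup$, checks membership in $\toSpanner{A}(\doc)$, and checks $\w(\doc,\tup) \leq r$; writing $q = a/b$, the test $C_{\leq r}/C \geq q$ reduces to the integer comparison $b \cdot C_{\leq r} \geq a \cdot C$. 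The output is the smallest integer $r$ passing the test; since $C_{\leq r}$ changes only at values in $\wCodom(\toSpanner{A},\doc,\w)$, that minimum integer automatically lies in $\wCodom(\toSpanner{A},\doc,\w)$ and coincides with $\spquantagg$.

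The main subtlety is step (iv) of the $\sharpp$ construction for \sump: weights may be exponentially large, so we cannot branch ``$k$ times'' explicitly. The trick is to fix the guess length at $p(n)$ (an a priori polynomial bound on the bit length of $k$) and then have the deterministic check select exactly $k$ of the $2^{p(n)}$ paths, which ensures that each contribution $\w^\pm(\doc,\tup)$ is faithfully encoded as a path count without blowing up the running time of the NTM. Everything else is bookkeeping on top of the polynomial-time membership test for $\toSpanner{A}(\doc)$ and the polynomial bound on the output length of $M$.
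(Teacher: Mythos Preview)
Your proposal is correct and follows essentially the same approach as the paper: split $\w$ into nonnegative parts $\w^+,\w^-$ and realize each partial sum in $\sharpp$, derive \avgp from \sump and \countp, and obtain \quantp by binary search over the (polynomially bounded) bit-range of the weights using the counts $C_{\leq r}$. You are in fact more careful than the paper on step~(iv): the paper simply says the machine ``branches into $\w(\doc,\tup)$ accepting branches'', whereas you make explicit that one guesses a fixed-length $p(n)$-bit string and accepts on exactly $k$ of them, which is the right way to handle potentially exponential weights while keeping the NTM polynomial-time.

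One small gap to patch: your justification that the smallest integer $r$ passing the test lies in $\wCodom(\toSpanner{A},\doc,\w)$ uses that $C_{\leq r-1} < C_{\leq r}$, which follows from $C_{\leq r-1}/C < q \leq C_{\leq r}/C$. This is fine for $q>0$, but for $q=0$ every integer in your search range passes the test, so the binary search returns the left endpoint $-2^{p(n)}$, which need not be an attained weight. Since $\spqquantagg{0}=\spminagg$, handle this boundary case separately by binary-searching for the smallest integer $r$ with $C_{\leq r}\geq 1$; the same one-line argument then shows this $r$ is an attained weight. (The paper's proof glosses over this point as well.)
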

\begin{proof}
  We first prove that $\sump[\fvsa,\PW]$ is in \sharpp if the weight function
  only assigns natural numbers. We will use this as an oracle for the general
  upper bound. Let $A$ be a \vset-automaton, $\doc \in \docs$ be a document and
  $\w \in \PW$ be a weight function that only assigns natural numbers. A
  counting Turing Machine $M$ for solving the problem in $\sharpp$ would have
  $\w(\doc,\tup)$ accepting runs for every tuple in $A(\doc)$. More precisely,
  $M$ guesses a \doc-tuple $\tup$ over $\vars(A)$ and checks whether $\tup \in
  \toSpanner{A}(\doc)$. If $\tup \in \toSpanner{A}(\doc)$ and $\w(\doc,\tup) >
  0$, then $M$ branches into $\w(\doc,\tup)$ accepting branches, which it can do
  because $\w$ is given in the input as a polynomial-time deterministic Turing
  Machine. Otherwise, $M$ rejects. Per construction, $M$ has exactly
  $\w(\doc,\tup)$ accepting branches for every tuple $\tup \in
  \toSpanner{A}(\doc)$ with $\w(\doc,\tup) > 0$. Thus, the number of accepting
  runs is exactly $\sum_{\tup \in \toSpanner{A}(\doc)} \w(\doc,\tup) =
  \spAsumagg$.

  We now continue by showing that $\sump[\fvsa,\PW]$ is in $\fptosp$ if the
  weight function only assigns integers. Let $A$ be a \vset-automaton, $\doc \in
  \docs$ be a document, and $\w \in \PW$ be a weight function, which only
  assigns integers.

  We define two weight functions $\w^{+}, \w^{-} \in \PW$, such that
  \[
    \spsum(A,
    \doc, \w) = \spsum(A, \doc, \w^{+}) - \spsum(A, \doc, \w^{-})\;.
  \]
  Formally, we define the following two weight functions:
  \begin{align*}
    \w^{+}(\doc, \tup) & \eqdef
    \begin{cases}
      \w(\doc, \tup)\hphantom{-} & \text{if $\w(\doc,\tup) \geq 0$}\text{, and}\\
      0 & \text{otherwise;}
    \end{cases}\\
    \w^{-}(\doc, \tup) & \eqdef
    \begin{cases}
      -\w(\doc, \tup) & \text{if $\w(\doc,\tup) < 0$, and}\\
      0 & \text{otherwise.}
    \end{cases}
  \end{align*}

  Therefore, $\spsum(\toSpanner{A}, \doc, \w) = \spsum(\toSpanner{A}, \doc,
  \w^{+}) - \spsum(\toSpanner{A}, \doc, \w^{-})$ and the answer to
  $\sump[\spannerClass,\PW]$ can be obtained by taking the difference of the
  answers of two calls to the $\sump[\spannerClass,\PW]$ $\sharpp$ oracle. The
  upper bound for $\avgp[\fvsa, \PW]$ is immediate from the upper bound of
  $\sump[\fvsa, \PW]$ and Theorem~\ref{thm:countResults}. For the upper bound of
  $\quantp[\fvsa,\PW]$ we define the weight function
  \[
    \w_{\leq k}(\doc,\tup) =
    \begin{cases}
      1 & \text{if } w(\doc,\tup) \leq k \text{, and} \\
      0 & \text{otherwise.}
    \end{cases}
  \]
  Recall that
  \[
    \spquantagg \eqdef \min\bigg\{r \in \wCodom(\spanner,\doc,\w) \ \bigg| \ 
    \frac{|\set{\tup\in\spanner(\doc)\mid \w(\doc,\tup)\leq r}|}
    {|\spanner(\doc)|} \geq q\bigg\}\;.
  \]
  And therefore 
  \[
    \spquantagg = \min\bigg\{r \in \wCodom(\spanner,\doc,\w) \ \bigg| \ 
    \frac{\spsum(\toSpanner{A},\doc,\w_{\leq k})}
    {\spcount(\toSpanner{A},\doc)} \geq q\bigg\}\;.
  \]
  Thus, the upper bound of $\quantp[\fvsa,\PW]$ can be obtained by
  performing binary search, using the upper bound of
  $\sump[\fvsa, \PW]$ and Theorem~\ref{thm:countResults}.
\end{proof}

 \section{Regular Weight Functions}
\label{reg}
We now turn to \REG and \UREG weight functions. As we have shown in
Proposition~\ref{prop:swInUreg}, every \SW weight function can be translated
into an equivalent $\UREG$ weight function. Furthermore, the weight functions
which were used for the lower bounds can be represented by unambiguous
weighted \vset-automata of constant size. Therefore, all lower bounds
for \SW also hold for \UREG.

\subsection{Compact DAG Representation}
As we show next, aggregation problems for regular weight functions can
often be reduced to problems about paths on weighted \emph{directed acyclic
  graphs (DAGs)}, where the weights come from the semiring of the weight
function. To this end, let $\sr$ be a semiring. A \emph{$\srd$-weighted DAG} is
a DAG $D = (N,E)$, where $N$ is a set of nodes, $E \subseteq N
\times \srd \times N$ is a finite set of weighted edges, and $\dags$ (resp.,
$\dagt$) is a unique node in $N$ without incoming (resp., outgoing) edges. We
define $\len(e) = \ell$, where $e = (v,\ell,v') \in E$. Furthermore, we define
paths $p$ in the obvious manner as sequences of edges and the length $\len(p)$
of $p$ as the product ($\srtimes$) of the lengths of its edges. More formally, a
path
\[
  p \eqdef n_1\ell_1n_2 \cdots \ell_{n-1}n_j
\]
is a sequence of nodes $n_i \in N$ with $1 \leq i \leq j$ and
$(n_i,\ell_i,n_{i+1},) \in E$, for all $1 \leq i < j$, and the length
\[
  \len(p) \eqdef \ell_1 \srtimes \cdots \srtimes \ell_{j-1}\;.
\]
We denote the set of all paths in $D$ from $\dags$ to $\dagt$ by $\dagpaths$.

Given a document $\doc$, a \vset-automaton $A$ and a regular weight
function $\w \in \regk$, we will construct a DAG $D$ which plays the role of
a compact representation of the materialized intermediate result. The DAG $D$ is
obtained by a product construction between $A$, $W$, and $\doc$, such that every
path from $\dags$ to $\dagt$ corresponds to an accepting run of $W$ that
represents a tuple in $\toSpanner{A}(\doc)$. If $A$ and $W$ are unambiguous this
correspondence is actually a bijection.

\begin{lemma}\label{lem:dagComp}Let $\K \in \{\numerical,\tropical\}$ be either the numerical or the tropical
  semiring. Let $\doc$ be a document, $A\in\fvsa$, and $W$ be the 
  weighted \vset-automaton representing $\w \in \regk$. We can compute, in
  polynomial time, a $\K$-weighted DAG $D$, such that there is a surjective
  mapping $m$ from paths $p \in \dagpaths$ in $D$ to tuples $\tup \in
  \toSpanner{A}(\doc)$. Furthermore,
  \begin{enumerate}[label=(\arabic*)]
  \item the mapping $m$ is a bijection, if $A$ and $W$ are unambiguous,
    and\label{dag:bijection}
  \item $\w(\doc,\tup) = \boplus\limits_{p\in\dagpaths, m(p)=\tup}\len(p)$, for
    every $\tup \in \toSpanner{A}(\doc)$, if $A\in\ufvsa$ or $\K =
    \tropical$.\label{dag:weight}
  \end{enumerate}
\end{lemma}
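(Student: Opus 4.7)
The plan is to obtain $D$ by a synchronized product of $A$, $W$, and $\doc$, unfolded by a position counter and a variable-operation counter to enforce acyclicity. I first preprocess: by Proposition~\ref{prop:epsilonRemoval} I assume $A$ and $W$ are $\varepsilon$-transition free, and (at only polynomial cost, as discussed for the variable order condition) that both satisfy it. Then the canonical ref-word $\refWordFrom{\doc}{\tup}$ is the unique ref-word in $\reflang(A)$ (and in $\reflang(W)$, whenever $\tup|_{\vars(W)}$ is in the support of $\repspnr{W}(\doc)$) over $\doc$ encoding~$\tup$. Let $V = \vars(A)$ and recall $\vars(W) \subseteq V$.

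The DAG $D$ has vertices $\dags$, $\dagt$, and quadruples $(q_A, q_W, i, j)$ with $0 \leq i \leq |\doc|$ and $0 \leq j \leq 2|V|$; the coordinate $i$ records how many document symbols have been read, and $j$ how many variable operations have occurred (at most $2|V|$ per accepting run, by functionality). From $\dags$, edges go to $(q_A, q_W, 0, 0)$ with weight $I_W(q_W)$ whenever $q_A, q_W$ are initial; to $\dagt$, edges come from $(q_A, q_W, |\doc|, 2|V|)$ with weight $F_W(q_W)$ whenever both are accepting. Internal edges come in two kinds: document edges $(q_A, q_W, i, j) \to (q_A', q_W', i+1, j)$ with weight $\delta_W(q_W, \sigma_{i+1}, q_W')$ whenever both automata transition on $\sigma_{i+1}$; and variable-operation edges $(q_A, q_W, i, j) \to (q_A', q_W', i, j+1)$ whenever $A$ transitions on some $o \in \varop{V}$, with $W$ either matching (weight $\delta_W(q_W, o, q_W')$) or staying in place with weight $\srone$ if $o \notin \varop{\vars(W)}$. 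Since edges strictly increase $(i, j)$ lexicographically, $D$ is acyclic; its size is polynomial in $|A|, |W|, |\doc|$.

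By construction, $\dags$-to-$\dagt$ paths in $D$ correspond to synchronized pairs $(\rho_A, \rho_W)$ of accepting runs of $A$ and $W$ on a common ref-word $\refWord$ with $\clr(\refWord) = \doc$, and I set $m(p) \eqdef \toTuple{\refWord}$. Any tuple $\tup \in \toSpanner{A}(\doc)$ with $\w(\doc,\tup) \neq \srzero$ is then in the image of $m$ by the definition of $\repspnr{W}$; for the remaining tuples (those with $\w(\doc, \tup) = \srzero$) I would add, via a suitable Boolean difference between $A$ and the Boolean support spanner of $W$ (regular spanners are closed under difference) together with Theorem~\ref{theo:closed-algebra}, a single $\srzero$-weighted chain per tuple. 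This preserves acyclicity and polynomial size, and since $\srzero$ is absorbent for $\srtimes$ it does not disturb the weight equation, making $m$ surjective. Part~\ref{dag:bijection} then follows: when both $A$ and $W$ are unambiguous, each $\tup$ determines the synchronized pair $(\rho_A, \rho_W)$ uniquely (one ref-word per tuple by the variable order condition, one run per ref-word by condition~(C2)), so $m$ is injective and hence bijective.

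For Part~\ref{dag:weight}, the product of edge weights along a path equals $\weight{\rho_W}$, since $A$-transitions contribute $\srone$ throughout. Hence
\[
  \boplus_{p:\, m(p) = \tup} \ell(p) \;=\; \boplus_{\rho_A \in R_A(\tup)} \Big(\boplus_{\rho_W \in R_W(\tup)} \weight{\rho_W}\Big) \;=\; \boplus_{\rho_A \in R_A(\tup)} \w(\doc,\tup),
\]
where $R_A(\tup)$ (resp.\ $R_W(\tup)$) denotes the accepting runs of $A$ (resp.\ $W$) realizing $\tup$ (resp.\ $\tup|_{\vars(W)}$), and the inner sum equals $\w(\doc,\tup)$ by definition of $\repspnr{W}$. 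Over the numerical semiring this evaluates to $|R_A(\tup)| \cdot \w(\doc,\tup)$, matching $\w(\doc,\tup)$ precisely when $A \in \ufvsa$; over the tropical semiring, $\srplus = \min$ absorbs equal summands, so the identity holds irrespective of $A$'s unambiguity. The main obstacle will be the bookkeeping for the $\srzero$-weight completion needed to make $m$ formally surjective while keeping the construction polynomial; the remaining properties then follow mechanically from the $(i, j)$-counter unfolding once $\varepsilon$-freeness and the variable order condition are in place.
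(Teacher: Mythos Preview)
Your overall approach is correct and matches the paper's: both build a synchronized product of $A$, $W$, and the document, and read off a weighted DAG whose source-to-sink paths correspond to joint accepting runs. The paper packages the product as an application of Theorem~\ref{theo:closed-algebra}: it interprets $A$ as a $\K$-weighted automaton $W_A$ with all transition weights $\srone$, forms $W_D \eqdef W \bowtie W_A \bowtie W_\doc$ (with $W_\doc$ a linear automaton for $\doc$), and extracts the DAG from $W_D$. Acyclicity then comes from $W_\doc$ being acyclic together with $\varepsilon$-freeness and functionality, rather than from explicit $(i,j)$-counters. Your unfolded construction is equivalent and arguably more transparent; your double-sum derivation of property~\ref{dag:weight} and your uniqueness argument for property~\ref{dag:bijection} are both clean and correct.

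The one genuine concern is your surjectivity patch. Boolean spanner difference is indeed closed, but the known constructions pass through complementation and can incur an exponential blowup, so ``a suitable Boolean difference between $A$ and the Boolean support spanner of $W$'' is not polynomial in general---this is exactly the obstacle you flag, and it is real. The paper sidesteps the issue entirely: it simply asserts surjectivity, implicitly relying on the requirement that weight functions take values in $\Q$. For $\K = \tropical$ this forces $W$ to have a finite-weight run for every relevant projected tuple (otherwise $\w(\doc,\tup) = \infty \notin \Q$), so surjectivity is automatic and no patch is needed. For $\K = \numerical$ a residual edge case remains when $W$ has no run at all on some $\tup|_{\vars(W)}$, but the paper does not address it and it does not interfere with the downstream applications of the lemma. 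My advice: drop the difference-based completion and argue, as the paper implicitly does, that the codomain $\Q$ of $\w$ already guarantees the needed runs of $W$; if you want to be more careful than the paper, simply note the numerical edge case rather than trying to repair it with a construction that may not be polynomial.
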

\begin{proof}
  Let $\doc \in \docs$, $A \in \fvsa$, and $W$ be the  weighted
  \vset-automaton representing $\w \in \regk$. By
  Proposition~\ref{prop:epsilonRemoval}, we can assume, \mbox{w.l.o.g.}, that
  all \vset-automata used in this proof do not contain
  $\varepsilon$-transitions.

  We begin by giving the construction of $D$. Let $W_A$ be the 
  weighted \vset-automaton obtained by interpreting $A$ as a $\K$-weighted
  \vset-automaton. More formally, every transition in $A$ is interpreted as a
  weighted transition with weight $\srone$ and every transition which is not in
  $A$ is interpreted as a transition with weight $\srzero$. Furthermore, let
  $W_\doc \eqdef \doc$ be the  weighted \vset-automaton with
  $\vars(W_\doc) = \emptyset$ that assigns the weight $\srone$ to the empty
  tuple on input $\doc$ and $\srzero$ to every tuple on input $\doc' \neq \doc$.
  By Theorem~\ref{theo:closed-algebra} the join of  weighted \vset-automata
  can be computed in polynomial time. Let
  \[
    W_D \eqdef W \join W_A \join W_\doc\;\;.
  \]

  Per definition of join for $\K$-relations, it holds that 
  \[
    \repspnrw{W_D}(\doc,\tup) = \repspnrw{W}(\doc,\projectTup{\tup}{\vars(W)})
    \srtimes \repspnrw{W_A}(\doc,\projectTup{\tup}{\vars(W_A)}) \srtimes
    \repspnrw{W_\doc}(\doc,\projectTup{\tup}{\vars(W_\doc)})\;\;.
  \]

  Let $A \in \ufvsa$ be unambiguous or $\K = \tropical$. In both cases, it holds
  that 
  \[
    \repspnrw{W_A}(\doc,\tup) =
    \begin{cases}
      \srone & \text{if } \tup \in \toSpanner{A}(\doc) \text{, and} \\
      \srzero & \text{otherwise.}
    \end{cases}
  \]
  Furthermore,
  \[
    \repspnrw{W_\doc}(\doc',\tup) =
    \begin{cases}
      \srone & \text{if } \vars(\tup) = \emptyset \text{ and } \doc' = \doc
      \text{, and} \\
      \srzero & \text{otherwise.}
    \end{cases}
  \]

  Therefore, if $A \in \ufvsa$ or $\K = \tropical$, it holds, for every tuple
  $\tup \in \toSpanner{A}(\doc)$. that
  \[
    \repspnrw{W_D}(\doc,\tup) =
    \repspnrw{W}(\doc,\projectTup{\tup}{\vars(W)}) \quad(\dagger)
  \]
  We will use this equality in the proof of condition \ref{dag:weight}.

  The DAG $D = (N_D,E_D)$ is obtained from $W_D =
  (\alphabet,V,Q,I,F,\delta)$ as follows. The set of nodes $N_D \eqdef \big(Q
  \times (\alphabet \cup \varop{V} \cup \emptyset)\big) \uplus \{\dags, \dagt\}$
  contains the nodes $\dags, \dagt$, plus a state $(q,\sigma)$ for each $q \in
  Q$ and $\sigma \in (\alphabet\cup\varop{V} \cup \emptyset)$, where $\sigma
  \neq \emptyset$ encodes the label of the last transition and $q$ the state.
  The set of edges is defined as follows:
  \begin{align*}
    E_D \eqdef & \{(\dags, \ell, (x,\emptyset))
               \mid I(x) = \ell \neq \infty\} \\
             & \uplus \{((x_1,\sigma_1),\ell,(x_2,\sigma_2)
               \mid \delta(x_1,\sigma_2,x_2) = \ell \neq \srzero
               \text{, where } \sigma_1 \in (\alphabet\cup\varop{V}\cup\emptyset)\} \\
             & \uplus \{((x,\sigma), \ell, \dagt)
               \mid F(x) = \ell \neq \infty
               \text{, where } \sigma \in (\alphabet\cup\varop{V}\cup\emptyset)\}\;.
  \end{align*}

  In the following we assume that $D$ is trimmed, that is, for every node $n \in
  N_D$ there is at least one path from $\dags$ to $\dagt$, which visits
  $n$.\footnote{Note that this condition can be enforced in linear time by two
    graph traversals (e.g. using breadth first search), one starting from
    $\dags$ to identify all states which can be reached from $\dags$ and one
    starting from $\dagt$ to identify all states which can reach $\dagt$. We
    remove all states which are not marked by both graph traversals.}

  We observe that the construction of $D$ only requires polynomial time. Note
  that there is a one-to-one correspondence between paths $p \in \dagpaths$ and
  accepting runs of $W_D$ on $\doc$. That is,
  \[
    p = \dags \cdot \ell_0\cdot (q_0,\emptyset) \cdot \ell_1 \cdot
    (q_1,\sigma_1) \cdots (q_n,\sigma_n) \cdot \ell_{n+1} \cdot \dagt 
  \]
  is a path from $\dags$ to $\dagt$ in $D$ if and only if
  \[
    \rn = q_0 \overset{\sigma_1}{\rightarrow} q_1
    \overset{\sigma_2}{\rightarrow} \cdots
    \overset{\sigma_{n}}{\rightarrow} q_n\;,
  \]
  with $I(q_0) = \ell_0$ and $F(q_n) = \ell_{n+1}$ is an accepting run of $W_D$
  on $\doc$. Furthermore, we observe that the weight of $p$ is exactly the
  weight assigned to the run $\rn$ by $W_D$, that is, $\len(p) = \weight\rn$.

  For the sake of contradiction, assume that $D$ is cyclic. Per assumption, all
  nodes $n \in N$ are on a path from $\dags$ to $\dagt$, thus, $D$ must have a
  path $p$ from $\dags$ to $\dagt$, which contains a cycle. Let $\rn$ be the run
  of $W_D$ corresponding to $p$. The automaton $W_\doc$ is acyclic. Observe that
  $W_D$ is functional as $W$, $W_A,$ and $W_\doc$ are functional. Thus,
  $\fromRun{\rn}$ is valid and therefore the cycle can not contain an edge
  labeled by a variable operation. Per assumption, all involved \vset-automata
  do not contain $\varepsilon$-transitions. Therefore, the cycle must only
  consist of edges, labeled by alphabet symbols. Let $\rn'$ be the run, obtained
  from $\rn$ by removing all cycles. Due to commutativity of $\srtimes,$ it
  follows that $\weight{\rn'} = \weight{\rn} \srtimes x$ for some $x \neq
  \srzero$. We observe that $\clr(\fromRun{\rn'}) \neq \doc$. Therefore, there
  is a run $\rn'$ of $W_D$ on $\clr(\fromRun{\rn'}) \neq \doc$ with weight
  $\weight{\rn'} \neq \srzero$, which is the desired contradiction to the
  observation that for all runs $\rn$ of $W_D$ it holds that $\weight{\rn} \neq
  \srzero$ if and only if $\clr(\fromRun{\rn}) = \doc$.

  We now define the mapping $m$. Let $p \in \dagpaths$ and let $\rn$ be the
  corresponding run of $W_D$. We define the mapping $m(p) \eqdef \toTuple{\rn}$.
  It follows directly that $m$ is surjective. If $A \in \ufvsa$ or $\K =
  \tropical$ and for $\tup \in \toSpanner{A}(\doc)$, we have that
  \begin{align*}
    \w(\doc,\tup)
    &=  \;\;\repspnrk{W}{\srd}(\doc,\projectTup{\tup}{\vars(W)}) 
    \eqnumber{\dagger} \;\;\repspnrk{W_D}{\srd}(\doc,\tup) \\
    & =  \boplus_{\rn\in \Rn{W_D}{\doc} \text{ and }\tup = \toTuple{\rn}} \weight{\rn} 
    = \boplus_{p\in \dagpaths, m(p) = \tup} \len(p)\;.
  \end{align*}
  The first and the third equalities follow from the definitions of $\REG$
  weight functions and $\srd$-annotators. The last equality follows from the
  definition of $D$. This concludes the proof of condition \ref{dag:weight}.
  
  It remains to show that condition \ref{dag:bijection} holds. Assume that $A
  \in \ufvsa$ and $W$ are unambiguous. Then, by
  Theorem~\ref{theo:closed-algebra}, $W_D$ is unambiguous.\footnote{Recall that
    $W_\doc$ is unambiguous.} Assume that there are two paths $p_1 \neq p_2$
  such that $p_1,p_2 \in \dagpaths$ with $m(p_1) = m(p_2)$. Let $\rn_1\neq
  \rn_2$ be the corresponding runs of $W_D$. Due to $m(p) = \toTuple{\rn}$, it
  must hold that $\rn_1$ and $\rn_2$ are two runs of $W_D$, encoding the same
  tuple $\tup$. Due to the unambiguity condition \ref{cond:unambig} in Section~\ref{sec:vset-automata}, both runs
  must encode a different ref-word, that is, $\fromRun{\rn_1} \neq
  \fromRun{\rn_2}$ however this implies that either $\fromRun{\rn_1}$ or
  $\fromRun{\rn_2}$ must violate the variable order condition, contradicting the
  unambiguity condition~\ref{cond:voc} in Section~\ref{sec:vset-automata}. Thus, $m$ must be a bijection.
\end{proof}

\subsection{\minp and \maxp Aggregation}
We will now study the computational complexity of \minp and \maxp aggregation.
We begin by giving the tractable cases which are based on
Lemma~\ref{lem:dagComp}. The weighted DAG from Lemma~\ref{lem:dagComp} allows us
to reduce \minp to the shortest path problem in DAGs. If the weight function is
unambiguous, \maxp can be reduced to the longest path problem in DAGs. Notice
that, although the longest path problem is intractable in general, it is
tractable for DAGs.
\begin{theorem}\label{thm:minmaxRegTractable}The problems $\minp[\fvsa,\regtrop]$, $\minp[\ufvsa,\uregnum]$, $\maxp[\fvsa, \uregtrop]$,
  and $\maxp[\ufvsa,\uregnum]$ are in \fp.
\end{theorem}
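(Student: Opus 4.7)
The plan is to apply Lemma~\ref{lem:dagComp} in each of the four cases, constructing in polynomial time a weighted DAG $D$ whose paths encode the extractions, and then to solve each aggregate as a shortest- or longest-path problem on $D$. Since all four such graph problems are polynomial-time solvable on DAGs with rational edge weights of polynomial bit-size, this will place every case in $\fp$.

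For the tropical cases ($\K = \tropical$), condition~\ref{dag:weight} of Lemma~\ref{lem:dagComp} is available without needing $A$ to be unambiguous and gives
\[
  \w(\doc,\tup) \;=\; \boplus_{p \in \dagpaths,\, m(p)=\tup}\ell(p) \;=\; \min_{p \in \dagpaths,\, m(p)=\tup}\ell(p),
\]
where $\ell(p)$ is the sum of edge weights along $p$. For $\minp[\fvsa,\regtrop]$ one concludes immediately that $\min_{\tup}\w(\doc,\tup) = \min_{p \in \dagpaths}\ell(p)$, i.e.\ the classical shortest-path problem in the DAG. For $\maxp[\fvsa,\uregtrop]$, since $W$ is unambiguous there is a single $W$-run per tuple, and since all $A$-transitions in the product construction $W_D = W \join W_A \join W_\doc$ carry tropical weight $\srone = 0$, every path $p$ with $m(p) = \tup$ has the same length $\ell(p) = \w(\doc,\tup)$. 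Hence $\max_{\tup}\w(\doc,\tup) = \max_{p \in \dagpaths}\ell(p)$, the longest-path problem in the DAG.

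For the numerical cases, both $A$ and $W$ are unambiguous, so by condition~\ref{dag:bijection} of Lemma~\ref{lem:dagComp} the mapping $m$ is a bijection and $\w(\doc,\tup) = \ell(m^{-1}(\tup))$ is the product of the edge weights along the unique path encoding $\tup$. Then $\minp[\ufvsa,\uregnum]$ and $\maxp[\ufvsa,\uregnum]$ reduce to computing the minimum and maximum product of edge weights over all $\dags$-to-$\dagt$ paths in $D$. I would solve these by a standard dynamic program over a topological ordering, keeping at each node \emph{both} the minimum and the maximum product from $\dags$ (they may swap after multiplying by a negative edge weight, and collapse to $0$ after a zero-weighted edge); the intermediate values stay of polynomial bit-size because at most polynomially many rationals of polynomial bit-size are multiplied. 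The only real subtlety in the whole argument is the $\maxp[\fvsa,\uregtrop]$ step, where condition~\ref{dag:bijection} is unavailable and one must explicitly use that ambiguity in $A$ contributes the neutral tropical weight $\srone = 0$ to every alternative path, so that max-over-tuples coincides with max-over-paths; the other three cases are direct invocations of Lemma~\ref{lem:dagComp} combined with classical DAG algorithms.
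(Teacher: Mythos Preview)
Your proposal is correct and follows essentially the same approach as the paper: construct the weighted DAG via Lemma~\ref{lem:dagComp} and reduce each case to a shortest- or longest-path computation on a DAG, tracking both the minimum and maximum running product in the numerical cases to handle sign changes. The only cosmetic difference is in the justification for $\maxp[\fvsa,\uregtrop]$: the paper argues by contradiction (two paths of different length would force two $W$-runs with different weights on the same tuple, violating unambiguity), whereas you argue constructively via the product structure of $W_D$ and the fact that the $A$-component contributes only the neutral tropical weight $\srone = 0$; both arguments establish the same fact that all paths mapping to a fixed tuple share the same length.
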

\begin{proof}
  Let $\doc$ be a document, $A\in\fvsa$, and $W$ be the  weighted
  \vset-automaton representing $\w \in \regtrop$ or $\w \in \uregnum$. Let $D$
  and $m$ be the DAG and the surjective mapping as guaranteed by
  Lemma~\ref{lem:dagComp}. In the following, we will reduce all four cases to
  finding the path with minimal (resp., maximal) length in $D$. Note that given
  a weighted DAG $D$, one can compute the path with minimal (resp., maximal)
  length in polynomial time, via dynamic programming, e.g. using the
  Bellman-Ford algorithm.\footnote{One has to be careful in the case of the
    numeric semiring as the lengths along the path are multiplied. Therefore one
    has to maintain the minimal as well as the maximal length between two nodes,
    as edges with negative length change the sign, resulting in minimal path's
    to be maximal and vice versa.} 

  We begin by giving the proofs for the numerical semiring. If $A \in \ufvsa$
  and $W \in \uregnum$, it follows directly from property~\ref{dag:bijection} of
  Lemma~\ref{lem:dagComp} that $m$ is a bijection. Therefore, for every tuple
  $\tup \in \toSpanner{A}(\doc)$, there is exactly one path $p \in \dagpaths$
  with $m(p)=\tup$. Thus, $\w(\doc,\tup) = \len(p)$, where $p \in \dagpaths$
  with $m(p) = \tup$. It follows directly that $\spAminagg$ and $\spAmaxagg$ can
  be computed from $D$ by searching for the path $p$ with minimal (respectively
  maximal) length.

  It remains to give the proofs for the tropical semiring. We begin by giving
  the proof for $\minp[\fvsa,\regtrop]$. Due to property~\ref{dag:weight} of
  Lemma~\ref{lem:dagComp},
  \[
    \spAminagg = \min\limits_{\tup \in
      \toSpanner{A}(d)}\;\min\limits_{{p\in\dagpaths, m(p)=\tup}} \len(p) =
    \min\limits_{p \in \dagpaths} \len(p)
  \]
  and therefore $\minp[\fvsa,\regtrop]$ again reduces to computing the path of
  minimal length in $D$.
  
  For \maxp, the situation is different, because the maximal weight of an output
  tuple is
  \[
    \spAmaxagg = \max\limits_{\tup \in
      \toSpanner{A}(d)}\;\min\limits_{{p\in\dagpaths, m(p)=\tup}} \len(p)\;.
  \]
  However, if $W$ is unambiguous, it must hold that $\len(p) = \len(p^\prime)$
  for all runs $p, p^\prime \in \dagpaths$ with $m(p) = m(p^\prime)$. Otherwise
  $W$ would be required to have at least two runs which accept the same tuple
  but assign different weights. Thus, $W$ would not be unambiguous. We can
  therefore conclude that,
  \[\spmaxagg = \max\limits_{\tup \in
      \toSpanner{A}(d)}\min\limits_{\{p \mid m(p) = \tup\}} \len(p) =
    \max\limits_{p \in \dagpaths} \len(p)\;.
  \]
  Again, we can reduce $\maxp[\fvsa, \uregtrop]$ to the max length problem on
  $D$.
\end{proof}

As we show now, the results of Theorem~\ref{thm:minmaxRegTractable} are close to
the tractability frontier: For instance, if we relax the unambiguity condition
in the weight function, the problem \maxp does not correspond to finding the
longest paths in DAGs and becomes intractable.

\begin{theorem}\label{thm:minmaxRegintractable}$\minp[\ufvsa,\regnum]$, $\maxp[\ufvsa,\regtrop]$, and $\maxp[\ufvsa,\regnum]$
  are \optp-hard.
\end{theorem}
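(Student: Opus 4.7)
The plan is to prove $\optp$-hardness by metric reductions from the \optp-complete problem \textsc{Max-Independent-Set} (\textsc{MaxIS}). Fix an input graph $G=(V,E)$ with $|V|=n$. All three reductions share the same unambiguous spanner: over the document $\doc = a_1\cdots a_n$, I build a \vset-automaton $A\in\ufvsa$ whose tuples are in bijection with the subsets $S_\tup\subseteq V$, by letting each variable $x_i$ capture either the empty span at position $i$ or the single symbol $a_i$. Such an $A$ is of polynomial size and can be constructed along the same lines as the automata in the proofs of Theorem~\ref{thm:sw-sum-cnf} and Lemma~\ref{lem:spannerWithKtuples}. With $A$ fixed, all combinatorial content is pushed into an ambiguous regular weight function $W$, which is the only place where ambiguity is exploited.

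For $\maxp[\ufvsa,\regnum]$, the $\numerical$-weighted \vset-automaton $W$ is designed so that on the ref-word encoding $\tup$ it has, for every vertex $v_i\in S_\tup$, one accepting run of weight $1$, provided that $S_\tup$ is independent. Concretely, $W$ nondeterministically guesses the certified vertex $v_i$ while in parallel a deterministic sub-automaton scans the ref-word and aborts the run as soon as some edge $(u,v)\in E$ has both $x_u$ and $x_v$ capturing non-empty spans. Summing over runs then gives $\w(\doc,\tup)=|S_\tup|$ when $S_\tup$ is independent and $0$ otherwise, so $\spmax(\toSpanner{A},\doc,\w)=\alpha(G)$. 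For $\maxp[\ufvsa,\regtrop]$, I instead exploit the tropical minimum: $W$ has (i)~one canonical run on each ref-word contributing weight $1$ per vertex in $S_\tup$, for a total of $|S_\tup|$, and (ii)~for each edge $(u,v)\in E$, one run of weight $-1$ that exists exactly when both endpoints lie in $S_\tup$. The tropical minimum is then $|S_\tup|$ on independent sets and $-1$ on non-independent ones, so the maximum over all tuples again equals $\alpha(G)$. For $\minp[\ufvsa,\regnum]$, I let each vertex $v_i\notin S_\tup$ contribute one run of weight $1$ and each edge $(u,v)\in E$ with both endpoints in $S_\tup$ contribute one run of weight $n+1$; the numerical sum then evaluates to $n-|S_\tup|$ when $S_\tup$ is independent and to a value strictly above $n$ otherwise, so $\spmin(\toSpanner{A},\doc,\w)=n-\alpha(G)$ and $\alpha(G)$ is recovered by subtracting from $n$, yielding a valid metric reduction.

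The hard part will be to ensure that $W$ has \emph{exactly} the intended set of accepting runs on each ref-word and no spurious ones, so that the count (for \regnum) or the minimum (for \regtrop) comes out right. This requires cleanly separating the nondeterministic ``certifying'' branches of $W$ from the deterministic ``verifying'' ones, and routing every verification (independence of $S_\tup$ in the \regnum case, or the existence of a single violating edge in the \regtrop case) entirely through conditions on the ref-word, so that $W$ remains functional and of polynomial size in $|G|$. The unambiguity of $A$ is essential here, since by Lemma~\ref{lem:dagComp} any residual ambiguity in $A$ would multiply with the runs of $W$ and distort both the intended count and the intended minimum.
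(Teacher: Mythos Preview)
Your reduction has a genuine gap at its source: \textsc{Max-Independent-Set} is \emph{not} known to be \optp-complete under metric reductions. In Krentel's classification it is complete for $\optp[O(\log n)]$, the subclass where the output has only logarithmically many bits, and Krentel shows that if such a problem were \optp-hard under metric reductions then $\mathrm{P}=\mathrm{NP}$. The point is that $\alpha(G)\le n$, so a single oracle call returns only $O(\log n)$ bits, whereas an \optp-complete function like \textsc{MSA} or \textsc{WSAT} can have polynomially many output bits; a metric reduction cannot recover those from a logarithmic answer. Hence, even if your three constructions are flawless, they only establish $\optp[O(\log n)]$-hardness, which is strictly weaker than the claim. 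The paper avoids this by reducing from \textsc{MSA} (for the tropical case) and \textsc{WSAT} (for the numerical cases), both of which are genuinely \optp-complete.

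There is also a technical issue in your construction for $\maxp[\ufvsa,\regnum]$. You want each certifying run of $W$ to simultaneously verify that \emph{all} edges are non-violating via a ``deterministic sub-automaton'' that aborts on the first violation. But independence is a conjunction over all edges, and a deterministic automaton checking it must remember, at position $j$, which earlier neighbours of $j$ lie in $S_\tup$; in the worst case this needs $2^{\Theta(n)}$ states, so $W$ is not polynomial in $|G|$. The paper's numerical construction sidesteps this by never testing the global property: it gives every tuple a base weight (the total clause weight) and then, for each clause separately, one branch that subtracts that clause's weight when the clause is unsatisfied. The numerical sum over branches then equals the weight of satisfied clauses. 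If you want to stay with \textsc{MaxIS}, the analogous fix is to add, for each edge $(u,v)$, a single branch of weight $-(n{+}1)$ that fires exactly when both endpoints are selected; then the sum is $|S_\tup|$ on independent sets and negative otherwise, and no run ever has to check more than one edge. But this still does not rescue the \optp-hardness claim, because of the first issue.
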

\begin{proof}
  We begin by giving the proofs for $\maxp[\ufvsa,\regtrop]$. We give a metric
  reduction\footnote{Recall that a metric reduction from $f$ to $g$ is a pair of
    polynomial-time computable functions $T_1, T_2$, where $T_1: \alphabet^* \to
    \alphabet^*$ and $T_2: \alphabet^* \times \nat \to \nat$, such that $f(x) =
    T_2(x,g(T_1(x)))$ for all $x \in \alphabet^*$. } from the \optp-complete
  problem Maximum Satisfying Assignment (MSA) \cite{Krentel88}, which is defined
  as follows. Let $\phi(x_1,\ldots,x_n)$ be a propositional formula in CNF and
  let $v = v_1\cdots v_n \in \B^n$ be a variable assignment of $\phi$.
  Furthermore, let $n_v \in \nat$ be the natural number encoded by $v$ in
  binary. MSA asks, given the CNF formula $\phi(x_1,\ldots,x_n)$, for the
  maximum $n_v\in \nat$ such that $v$ satisfies $\phi$, or $0$ if $\phi$ is not
  satisfiable. In the following, we denote by $\text{MSA}(\phi)$ the output of
  MSA on input $\phi$.
  
  Let $\phi(x_1,\ldots,x_n)$ be a Boolean formula in CNF. We use a similar
  construction as in the proofs of Theorem~\ref{thm:sw-sum-cnf} and Doleschal et
  al.~\cite[Theorem 7.6]{DoleschalKMP-lmcs22}, to encode the CNF
  formula $\phi$. Let $\doc = a^n$ be the document. We define
  \[
    A \eqdef ((\vop{x_1} \varepsilon\vcl{x_1} a) \lor
    (\vop{x_1}a\vcl{x_1})) \cdots ((\vop{x_n}\varepsilon\vcl{x_n} a) \lor
    (\vop{x_n}a\vcl{x_n}))\;.
  \]
  Notice that $A$ can be defined with a polynomial-time constructible $\ufvsa$.
  Observe that there is a one-to-one correspondence between tuples $\tup$ in
  $\toSpanner{A}(\doc)$ and variable assignments $\alpha_\tup$ for $\phi$: we
  can set $\alpha_{\tup}(x_i) = 1$ if and only if $\tup(x_i) =
  \spanFromTo{i}{i+1}$. We construct a weight function $\w \in \regtrop$ such
  that
  \[
    \w(\doc,\tup) =
    \begin{cases}
      n_{\alpha_\tup} & \text{if } \alpha_\tup \models \phi\\
      0 & \text{otherwise.}
    \end{cases}
  \]
  Recall that $n_{\alpha_\tup}$ is the natural number which is encoded by the
  variable assignment $\alpha_\tup$. It follows directly that $\text{MSA}(\phi)
  = \spAmaxagg$. Defining $T_2(x,y) \mapsto y$ gives the desired reduction.
 
  It remains to construct a weighted \vset-automaton $W$ which encodes $\w$. We
  define the  weighted \vset-automaton $W$ as the union of two
  automata. Let $V$ be the set of variables of $\phi$. The first automaton $W_A$
  is a copy of $A$, assigning weight $0$ to all edges, which are present in $A$.
  Furthermore, let $\delta$ assign weight $2^{i-1}$ to the $a$ labeled edge between
  opening and closing variable $x_i$ (that is, $\vop{x_i}$ and $\vcl{x_i}$). Let
  $I(q) = 0$ if $q$ is the start state of $A$ and $\infty$, otherwise.
  Analogously, let $F(q) = 0$ if $q$ is an accepting state of $A$ and $\infty$
  otherwise. It follows directly that $\repspnrw{W_A}(a^n,\tup) =
  n_{\alpha_\tup}$.

  The second automaton, $W'$ consists of $m$ disjoint branches, where each
  branch corresponds to a clause $C_i$ of $\phi$; we call these \emph{clause
    branches}. Each branch has exactly one run $\rn$ with weight $\srone$ for
  each tuple $\tup$ associated to an assignment $\alpha_\tup$ which does not
  satisfy the clause $C_i$.
  
  We now give a formal construction of $W'$. The set of states $Q \eqdef \{
  q_{i,j}^{a} \mid 1 \leq i \leq m, 1 \leq j \leq n, 1 \leq a \leq 5\}$ contains
  $5n$ states for each clause branch. Intuitively, $W'$ has a gadget, consisting
  of $5$ states, for each variable and each clause branch.
  Figure~\ref{fig:exGadgetAggregates} depicts the three types of gadgets we use
  here. Note that the weights of the drawn edges are all $0$. We use the left
  gadget if $x$ does not occur in the relevant clause and the middle (resp.,
  right) gadget if the literal $\lnot x$ (resp., $x$) occurs. Furthermore,
  within the same branch of $W'$, the last state of each gadget is the same
  state as the start state of the next variable, i.e., $q_{i,j}^{5} =
  q_{i,j+1}^{1}$ for all $1 \leq i \leq k, 1 \leq j < n$.
  
  \begin{figure}
    \resizebox{\linewidth}{!}{
    \begin{tikzpicture}[>=latex, ->]
      
      \tikzstyle{every node}=[font=\Large]
      
      \def\w{2}
      \def\h{.75}
      \def\wtwo{2}
      
      \node[state] (q1) at (0,0){};
      \node[state] (q2) at ($(q1) + (\w,0)$){};
      \node[state] (q3) at ($(q2) + (\w,\h)$){};
      \node[state] (q4) at ($(q2) + (\w,-\h)$){};
      \node[state] (q5) at ($(q4) + (\w,\h)$){};
      
      \node[state] (q11) at ($(q5) + (\wtwo,0)$){};
      \node[state] (q12) at ($(q11) + (\w,0)$){};
      \node[state] (q13) at ($(q12) + (\w,\h)$){};
      \node[state] (q14) at ($(q12) + (\w,-\h)$){};
      \node[state] (q15) at ($(q14) + (\w,\h)$){};
      
      \node[state] (q21) at ($(q15) + (\wtwo,0)$){};
      \node[state] (q22) at ($(q21) + (\w,0)$){};
      \node[state] (q23) at ($(q22) + (\w,\h)$){};
      \node[state] (q24) at ($(q22) + (\w,-\h)$){};
      \node[state] (q25) at ($(q24) + (\w,\h)$){};
      
      \draw (q1) edge node[above,sloped] {$\vop{x}$}(q2);
      \draw (q2) edge node[above,sloped] {$\vcl{x}$}(q3);
      \draw (q2) edge node[below,sloped] {$\sigma$}(q4);
      \draw (q3) edge node[above,sloped] {$\sigma$}(q5);
      \draw (q4) edge node[below,sloped] {$\vcl{x}$}(q5);
      
      \draw (q11) edge node[above,sloped] {$\vop{x}$}(q12);
      \draw (q12) edge node[above,sloped] {$\vcl{x}$}(q13);
      \draw (q13) edge node[above,sloped] {$\sigma$}(q15);
      \draw (q14) edge node[below,sloped] {$\vcl{x}$}(q15);
      
      \draw (q21) edge node[above,sloped] {$\vop{x}$}(q22);
      \draw (q22) edge node[above,sloped] {$\vcl{x}$}(q23);
      \draw (q22) edge node[below,sloped] {$\sigma$}(q24);
      \draw (q24) edge node[below,sloped] {$\vcl{x}$}(q25);
    \end{tikzpicture}
    }
    \caption{Example gadgets for variable $x$.}\label{fig:exGadgetAggregates}
  \end{figure}
  \begin{figure}
    \resizebox{\linewidth}{!}{
    \begin{tikzpicture}[>=latex]
      
      \tikzstyle{every node}=[font=\Large]
      
      \def\w{2}
      \def\h{.75}
      \def\htwo{1.5}
      \node[state] (q1111) at (0,0) {$q_{1,1}^{1}$};
      \node[state] (q1112) at ($(q1111) + (\w,0)$) {$q_{1,1}^{2}$};
      \node[state] (q1113) at ($(q1112) + (\w,\h)$) {$q_{1,1}^{3}$};
      \node[state] (q1114) at ($(q1112) + (\w,-\h)$) {$q_{1,1}^{4}$};
      \node[state] (q1115) at ($(q1114) + (\w,\h)$) {$q_{1,1}^{5}$};
      \draw (q1111) edge[->] node[above,sloped] {$\vop{x_1}$}(q1112);
      \draw (q1112) edge[->] node[above,sloped] {$\vcl{x_1}$}(q1113);
      \draw (q1112) edge[->] node[below,sloped] {$\sigma$}(q1114);
      \draw (q1114) edge[->] node[below,sloped] {$\vcl{x_1}$}(q1115);
      \draw [decorate,decoration={brace,amplitude=10pt}]
        ($(q1111) + (0,\htwo)$) -- node[above=10pt] {$x_1$} ($(q1115) + (0,\htwo)$);
      
      \node[state] (q1212) at ($(q1115) + (\w,0)$) {$q_{1,2}^{2}$};
      \node[state] (q1213) at ($(q1212) + (\w,\h)$) {$q_{1,2}^{3}$};
      \node[state] (q1214) at ($(q1212) + (\w,-\h)$) {$q_{1,2}^{4}$};
      \node[state] (q1215) at ($(q1214) + (\w,\h)$) {$q_{1,2}^{5}$};
      \draw (q1115) edge[->] node[above,sloped] {$\vop{x_2}$}(q1212);
      \draw (q1212) edge[->] node[above,sloped] {$\vcl{x_2}$}(q1213);
      \draw (q1212) edge[->] node[below,sloped] {$\sigma$}(q1214);
      \draw (q1214) edge[->] node[below,sloped] {$\vcl{x_2}$}(q1215);
      \draw [decorate,decoration={brace,amplitude=10pt}]
        ($(q1215) + (0,-\htwo)$) -- node[below=10pt] {$x_2$} ($(q1115) + (0,-\htwo)$);
      
      \node[state] (q1312) at ($(q1215) + (\w,0)$) {$q_{1,3}^{2}$};
      \node[state] (q1313) at ($(q1312) + (\w,\h)$) {$q_{1,3}^{3}$};
      \node[state] (q1314) at ($(q1312) + (\w,-\h)$) {$q_{1,3}^{4}$};
      \node[state] (q1315) at ($(q1314) + (\w,\h)$) {$q_{1,3}^{5}$};
      \draw (q1215) edge[->] node[above,sloped] {$\vop{x_3}$}(q1312);
      \draw (q1312) edge[->] node[above,sloped] {$\vcl{x_3}$}(q1313);
      \draw (q1312) edge[->] node[below,sloped] {$\sigma$}(q1314);
      \draw (q1313) edge[->] node[above,sloped] {$\sigma$}(q1315);
      \draw (q1314) edge[->] node[below,sloped] {$\vcl{x_3}$}(q1315);
      \draw [decorate,decoration={brace,amplitude=10pt}]
        ($(q1215) + (0,+\htwo)$) -- node[above=10pt] {$x_3 \lor \lnot x_3$} ($(q1315) + (0,+\htwo)$);
      
      \node[state] (q1412) at ($(q1315) + (\w,0)$) {$q_{1,4}^{2}$};
      \node[state] (q1413) at ($(q1412) + (\w,\h)$) {$q_{1,4}^{3}$};
      \node[state] (q1414) at ($(q1412) + (\w,-\h)$){$q_{1,4}^{4}$};
      \node[state] (q1415) at ($(q1414) + (\w,\h)$){$q_{1,4}^{5}$};
      \draw (q1315) edge[->] node[above,sloped] {$\vop{x_4}$}(q1412);
      \draw (q1412) edge[->] node[above,sloped] {$\vcl{x_4}$}(q1413);
      \draw (q1413) edge[->] node[above,sloped] {$\sigma$}(q1415);
      \draw (q1414) edge[->] node[below,sloped] {$\vcl{x_4}$}(q1415);
      \draw [decorate,decoration={brace,amplitude=10pt}]
        ($(q1415) + (0,-\htwo)$) -- node[below=10pt] {$\lnot x_4$} ($(q1315) + (0,-\htwo)$);
      \end{tikzpicture}
    }
    \caption{The clause branch of $W$ corresponding to $C_1$ and $x_1 = x_2 =
      1, x_4 = 0$.}\label{fig:exBranchAggregates}
  \end{figure}
  
  We illustrate the crucial part of the construction on an example. Let $\phi =
  (\lnot x_1 \lor \lnot x_2 \lor x_4) \land (x_2 \lor x_3 \lor x_4).$ The
  corresponding  weighted \vset-automaton $W'$ therefore has two
  disjoint branches, one for each clause of $\phi$.
  Figure~\ref{fig:exBranchAggregates} depicts the clause branch $C_1$ that
  corresponds to all assignments which do not satisfy $C_i$, that is, all
  assignments with $x_1 = x_2 = 1$ and $x_4 = 0$.

  Formally, the initial weight function is $I(q_{i,j}^{a}) = \srone$ if $j = 1
  = a$ and $I(q_{i,j}^{a}) = \srzero$ otherwise. The final weight function
  $F(q_{i,j}^{a}) = \srone$ if $j = n$ and $a = 5$ and $F(q_{i,j}^{a}) =
  \srzero$, otherwise. The transition function $\delta$ is defined as follows:
  \[
    \delta(q_{i,j}^{a}, o, q_{i,j}^{a'}) =
    \begin{cases}
      \srone  & a = 1, a' = 2, o = \vop{x_j} \\
      \srone  & a = 2, a' = 3, o = \vcl{x_j} \\
      \srone  & a = 2, a' = 4, \text{ and there is a variable assignment
        $\tau$ with} \\
        &\;\;\text{$\tau(x_j) = 1$ and $\tau \not\models C_i$} \\ 
      \srone  & a = 3, a' = 5, o = a, \text{ and there is a variable assignment
        $\tau$ with} \\
        &\;\;\text{$\tau(x_j) = 0$ and $\tau \not\models C_i$} \\
      \srone  & a = 4, a' = 5, o = \vcl{x_j} 
    \end{cases}
  \]
  All other transitions have weight $\srzero$.

  We claim that $W'$ represents $\w'$, where $\w'(\doc,\tup) = \srone$ if
  $\alpha_\tup \not\models \phi$ and $\w'(\doc,\tup) = \srzero$ otherwise. To
  this end, let $\tup \in \toSpanner{A}(\doc)$ be a tuple and let $\tau =
  \alpha_\tup$ be the variable assignment encoded by $\tup$. It is easy to see
  that there is an accepting run $\rn$ of $W'$ for $\refWord$ with weight
  $\weight{\rn} = \srone$, starting in $q_{i,0}^{a}$, if and only if $\tau$ does not
  satisfy clause $C_i$.

  As mentioned before, the  weighted \vset-automaton $W$ is the union
  of $W'$ and $W_A$. Recall that, over the tropical semiring, $\srzero = \infty,
  \srone = 0$, and the weight of a tuple $\tup$ is the minimal weight over all
  accepting runs which encode $\tup$. Thus, the weight function represented by
  $W$ is exactly $\w$, as claimed. This concludes the proof that
  $\maxp[\ufvsa,\regtrop]$ is \optp-hard.

  It remains to show that $\minp[\ufvsa,\regnum]$ and $\maxp[\ufvsa, \regnum]$
  are \optp-hard. We first show \optp-hardness for $\maxp[\ufvsa,\regnum]$.

  We give a metric reduction from the \optp-complete problem of weighted
  satisfiability (WSAT) \cite{Krentel88}, which is defined as follows. Let
  $\phi(x_1,\ldots,x_n)$ be a propositional formula in CNF with binary weights.
  WSAT asks, given the CNF formula $\phi(x_1,\ldots,x_n)$ with $m$ clauses and
  weights $w_1,\ldots,w_m$, for the maximal weight of an assignment, where the
  weight of an assignment is the sum of the weights of the satisfied clauses.

  Denote by $\text{WSAT}(\phi)$ the output of WSAT on input $\phi$. Let
  $\phi(x_1,\ldots,x_n)$ be a Boolean formula in CNF. Let $\doc, A, W$ be as
  defined before. However, the weights in $W$ are defined differently. That is,
  $W$ is the union of $W_A$ and $W'$, where $W_A$ is a copy of $A$, where all
  transitions have weight $\srone$. Furthermore, let $x$ be the sum of all
  clause weights and $F(q) = x$, if $q$ is an accepting state of $A$. The
  automaton $W'$ is defined exactly as before, however, accepting with final
  weight $F(q) = -w_i$ if $q$ is the final weight of the branch of clause $C_i$
  and $w_i$ is the weight of $C_i$. Observe that $\w(\doc,\tup) =
  \repspnrk{W}{\numerical}(\doc,\tup)$ is exactly the weighted sum of all
  clauses, which are satisfied by the valuation $\alpha_\tup$ encoded by $\tup$.
  It follows that $\spmax(\spanner, \doc, \w) = \text{WSAT}(\phi)$. Defining
  $T_2(x,y) \mapsto y$ concludes the proof for $\maxp[\ufvsa,\regnum]$.

  The proof for $\minp[\ufvsa,\regnum]$ is analogous, replacing the weight $x$
  with $-x$ and $-w_i$ with weight $w_i$. Therefore, $\spmin(\spanner, \doc, \w)
  = - \text{WSAT}(\phi)$. Defining $T_2(x,y) \mapsto -y$ concludes the proof.
\end{proof}

\subsection{\sump and \avgp Aggregation}
Since \sump and \avgp are already intractable for \fvsa spanners and \SW weight
functions (Theorems~\ref{thm:sw-sum-cnf},~\ref{thm:sw-sum-spanl},
and~\ref{thm:sw-quant-avg-hard}), they are intractable for \fvsa spanners and
\REG/\UREG weight functions as well. In a similar vein as in
Section~\ref{cwidth}, the problems become tractable if we have unambiguity.
However, in the case of the tropical semiring, we require unambiguity of
\emph{both} the spanner and the representation of the weight function. We begin
by showing that $\sump[\ufvsa,\regnum]$ and $\sump[\ufvsa,\uregtrop]$ are in
\fp. In both cases the problem can be reduced to computing the sum of the lengths of source-to-target paths in a DAG, by using Lemma~\ref{lem:dagComp}.
\begin{theorem}\label{thm:sumREGnumerical}$\sump[\ufvsa,\regnum]$ is in \fp.
\end{theorem}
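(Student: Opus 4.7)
The plan is to invoke Lemma~\ref{lem:dagComp} and reduce the computation of $\spAsumagg$ to summing the weights of all source-to-sink paths in a polynomially-constructible $\numerical$-weighted DAG. Given input $A \in \ufvsa$, $\doc \in \docs$, and $\w \in \regnum$ represented by a functional weighted \vset-automaton $W$ over $\numerical$, I would first build, in polynomial time, the DAG $D = (N,E)$ together with the surjective mapping $m\colon \dagpaths \to \toSpanner{A}(\doc)$ provided by the lemma.

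The key step is to show that $\spAsumagg = \sum_{p \in \dagpaths} \ell(p)$. Since $A \in \ufvsa$, property~\ref{dag:weight} of Lemma~\ref{lem:dagComp} gives
\[
    \w(\doc,\tup) \;=\; \sum_{p \in \dagpaths,\; m(p) = \tup} \ell(p)
\]
for every $\tup \in \toSpanner{A}(\doc)$, recalling that $\boplus$ in $\numerical$ is ordinary addition. Because $m$ is a function, its fibers $m^{-1}(\tup)$ partition $\dagpaths$, and by surjectivity every $\tup \in \toSpanner{A}(\doc)$ actually contributes; summing the identity above over all tuples therefore telescopes into
\[
    \spAsumagg \;=\; \sum_{\tup \in \toSpanner{A}(\doc)} \w(\doc,\tup) \;=\; \sum_{p \in \dagpaths} \ell(p).
\]

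The remaining step is a standard dynamic program in topological order of $D$. For each node $v \in N$, let $T(v) \in \Q$ denote the sum of $\ell(p)$ over all paths from $\dags$ to $v$; set $T(\dags) = 1$ and, for all other nodes, $T(v) = \sum_{(u,\ell,v) \in E} T(u) \cdot \ell$, finally returning $T(\dagt)$. This computes the desired sum in polynomial time with rationals of polynomial bit-size. The main subtle point I would emphasize in the write-up is the asymmetry of the two unambiguity hypotheses: $W$ need not be unambiguous, because property~\ref{dag:weight} already aggregates all $W$-runs that realize the same tuple into $\w(\doc,\tup)$, but unambiguity of $A$ is essential—otherwise distinct runs of $A$ on the same ref-word would create parallel paths in $D$ and over-count each tuple's contribution to the path sum.
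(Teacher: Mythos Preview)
Your proof is correct and follows essentially the same approach as the paper: both invoke Lemma~\ref{lem:dagComp} with $A \in \ufvsa$ to obtain $\spAsumagg = \sum_{p \in \dagpaths} \ell(p)$, then compute this path sum in polynomial time. The only cosmetic difference is that the paper exploits the fact that every source-to-sink path has exactly $|\doc|+1+2\cdot|\vars(A)|$ edges and computes the sum via a single matrix power, whereas your topological-order dynamic program works without that observation.
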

\begin{proof}
  Let $\doc\in\docs, A \in \ufvsa$, and $W$ be a  weighted
  \vset-automaton representing $\w \in \regnum$. Let $D = (N,E)$ and
  $m$ be as guaranteed by Lemma~\ref{lem:dagComp}. It follows that
  \[
    \spAsumagg = \sum\limits_{\tup \in
      \toSpanner{A}(\doc)}\;\sum\limits_{p\in\dagpaths, m(p)=\tup} \len(p) =
    \sum\limits_{p \in \dagpaths} \len(p)\;. 
  \]
  All paths $p \in \dagpaths$ consist of $|\doc|+1+2\cdot |\vars(A)|$ edges. We
  assume, \mbox{w.l.o.g.}, that $N = \{1,\ldots, n\}$, with $\dags = 1$ and
  $\dagt = n$. Therefore, $\spAsumagg$ can be computed by interpreting the edge
  relation $E$ as a $\rationals^{n \times n}$ matrix $M$ and computing the
  weight
  \[
    I \times M^{|\doc|+1+2\cdot |\vars(A)|} \times F^T\;,
  \]
  where $I = (1,0,\ldots,0)$ (resp., $F = (0,\ldots, 0,1)$) is the vector which
  assigns $0$ to all nodes but $1$ (resp., $n$), which is assigned the weight
  $1$. Recall that the numerical semiring has an efficient encoding. Therefore,
  $\spsumagg$ can indeed be computed in polynomial time.
\end{proof}

\begin{theorem}\label{thm:sumUregFP}$\sump[\ufvsa,\uregtrop]$ is in \fp.
\end{theorem}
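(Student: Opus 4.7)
The plan is to reduce $\spAsumagg$ to computing the total edge-weight of all source-to-sink paths in the DAG produced by Lemma~\ref{lem:dagComp}. Given $A\in\ufvsa$, $\doc$, and the $\uregtrop$ representation $W$ of $\w$, I would invoke Lemma~\ref{lem:dagComp} to construct in polynomial time a $\tropical$-weighted DAG $D$ with surjective mapping $m\colon\dagpaths\to\toSpanner{A}(\doc)$. Since both $A$ and $W$ are unambiguous, Lemma~\ref{lem:dagComp}\ref{dag:bijection} upgrades $m$ to a bijection, and Lemma~\ref{lem:dagComp}\ref{dag:weight} (applicable because $\K=\tropical$) then collapses to $\w(\doc,\tup)=\ell(p_\tup)$, where $p_\tup$ is the unique preimage of $\tup$ under $m$ and $\ell(p_\tup)$ is the ordinary sum of its edge weights (the tropical $\boplus$ over a singleton equals its unique argument, and tropical $\srtimes$ is ordinary $+$).

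Combining these observations yields
\[
\spAsumagg \;=\; \sum_{\tup\in\toSpanner{A}(\doc)}\w(\doc,\tup) \;=\; \sum_{p\in\dagpaths}\ell(p),
\]
reducing the problem to summing ordinary path lengths in a rationally weighted DAG. I would compute this by a standard dynamic program in reverse topological order, maintaining for each node $v$ a pair $(N(v),S(v))$ where $N(v)$ is the number of paths from $v$ to $\dagt$ and $S(v)$ is the sum of their lengths. The recurrences are $N(\dagt)=1$, $S(\dagt)=0$, and, for every other $v$,
\[
N(v) \;=\; \sum_{(v,a,u)\in E} N(u), \qquad S(v) \;=\; \sum_{(v,a,u)\in E}\bigl(a\cdot N(u) + S(u)\bigr),
\]
and $S(\dags)$ is returned as the value of $\spAsumagg$.

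For the polynomial-time bound, $D$ has size polynomial in $|A|$, $|W|$, and $|\doc|$; each $N(v)$ is bounded by $|\spans(\doc)|^{|\vars(A)|}$ by Corollary~\ref{cor:tuplesPerSpanner}, so fits in polynomially many bits, and $S(v)$ is in magnitude at most this count times the largest absolute edge weight in $W$, hence of polynomial bit-size as well. The DP therefore performs polynomially many arithmetic operations on polynomial-size rationals. The one subtle point, which is the heart of the argument, is the bijectivity of $m$: without unambiguity of \emph{both} $A$ and $W$, Lemma~\ref{lem:dagComp}\ref{dag:bijection} fails and the tropical $\boplus=\min$ in property~\ref{dag:weight} no longer coincides with the ordinary numerical sum required by $\spsum$, so the reduction to DAG-path-sum would break down—this is exactly why tractability here demands the stronger $\uregtrop$ assumption rather than merely $\regtrop$.
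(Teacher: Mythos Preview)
Your proposal is correct and follows essentially the same approach as the paper: both invoke Lemma~\ref{lem:dagComp} to obtain a bijection between tuples and source-to-sink paths, reduce $\spAsumagg$ to $\sum_{p\in\dagpaths}\ell(p)$, and then compute that sum by a polynomial-time dynamic program on the DAG. The only cosmetic difference is that the paper organizes the DP edge-centrically---computing for each edge $e$ the number $c(e)$ of paths through it and returning $\sum_{e\in E}\ell(e)\cdot c(e)$---whereas you organize it node-centrically with the $(N(v),S(v))$ pairs; these are equivalent formulations of the same computation.
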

\begin{proof}
  Let $D,m$ be the DAG and the bijection guaranteed by
  Lemma~\ref{lem:dagComp}. We have that
  \begin{align*}
    \spAsumagg
    & \eqnumber 1\; \sum\limits_{\tup \in \toSpanner{A}(\doc)} w(\doc,\tup) \\
    & \eqnumber 2\; \sum\limits_{\tup \in \toSpanner{A}(\doc)} \min\limits_{p\in\dagpaths,m(p)=\tup}\len(p) \\
    & \eqnumber 3\; \sum\limits_{p \in \dagpaths} \len(p)\;.
  \end{align*}
  The first equation follows from the definition of $\sump$. The second
  equation follows from property~\ref{dag:weight} of Lemma~\ref{lem:dagComp}. The
  third equation must hold due to $m$ being a bijection between tuples $\tup \in
  \toSpanner{A}$ and paths $p \in \dagpaths$.

  It remains to show that the sum of the lengths of source-to-target paths in a
  DAG $D = (N,E)$ can be computed in polynomial time. We begin by observing that
  given two nodes $x,y \in D$ the number of paths from $x$ to $y$ in $D$ can be
  computed in polynomial time via dynamic programming. Furthermore, given an
  edge $e = (x,y) \in E$ one can compute the number of paths from $\dags$ to
  $\dagt$ which use $e$ by multiplying the number of path's from $\dags$ to $x$
  with the number of paths from $y$ to $\dagt$. Therefore, the function $c:E \to
  \nat$ which, given an edge $e \in E$ assigns the number of paths using $e$
  can be computed in polynomial time. Recall that over the tropical semiring,
  $\srtimes = +$ and therefore $\len(p) = \sum\limits_{e \in p}\len(e)$. It
  therefore follows that
  \begin{align*}
    \spAsumagg
    & = \sum\limits_{p \in \dagpaths} \len(p) \\
    & = \sum\limits_{p \in \dagpaths} \sum\limits_{e \in p} \len(e) \\
    & = \sum\limits_{e \in E} (\len(e) \times c(e))\;.
  \end{align*}
  Therefore, \sump can be computed by representing the weights $\len(e)$ as
  a vector $I$ and the counts $c(e)$ as a vector $F$. Thus, $\spAsumagg = I
  \times F^T$, which can be computed in polynomial time, as $\regtrop$ has an
  efficient encoding.
\end{proof}

We observe that \fp upper bounds for \avgp follows directly from the
corresponding upper bound for \sump and the \fp upper bound for \countp
(Theorem~\ref{thm:countResults}).

\begin{corollary}\label{cor:avgregfp}$\avgp[\ufvsa,\regnum]$ and $\avgp[\ufvsa,\uregtrop]$ are in \fp.
\end{corollary}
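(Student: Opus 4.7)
The plan is to exploit directly the definition $\spavgagg = \spsumagg / \spcountagg$ and invoke the tractability results already established for the numerator and denominator separately. Given input consisting of an $A \in \ufvsa$, a document $\doc$, and a weight function $\w$ represented either in $\regnum$ or in $\uregtrop$, I would first compute $\spsum(\toSpanner{A},\doc,\w)$ in polynomial time using Theorem~\ref{thm:sumREGnumerical} for the numerical case or Theorem~\ref{thm:sumUregFP} for the tropical case. Next, I would compute $\spcount(\toSpanner{A},\doc)$ in polynomial time by Theorem~\ref{thm:countResults}, which applies because $A \in \ufvsa$. Finally, the algorithm outputs the quotient of these two rationals.

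Since the aggregate $\spavgagg$ is only defined under the standing assumption that $\toSpanner{A}(\doc) \neq \emptyset$, the denominator is strictly positive, so the division is well defined; moreover, dividing two rationals produced in polynomial time yields an output of size polynomial in the input. Consequently, the overall procedure runs in polynomial time and witnesses membership of both $\avgp[\ufvsa,\regnum]$ and $\avgp[\ufvsa,\uregtrop]$ in \fp. There is no real obstacle here: the corollary is essentially a compositional consequence of three prior \fp upper bounds, and the only thing to verify is that the rational arithmetic for the final division does not blow up the bit-length, which it does not because both the count and the sum are already polynomial-size rationals.
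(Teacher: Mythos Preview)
Your proposal is correct and matches the paper's own argument essentially verbatim: the paper simply observes that the \fp upper bound for \avgp follows directly from the corresponding \sump upper bounds (Theorems~\ref{thm:sumREGnumerical} and~\ref{thm:sumUregFP}) together with the \fp upper bound for \countp from Theorem~\ref{thm:countResults}. Your additional remarks about the denominator being positive and the bit-length of the quotient staying polynomial are sound but more detailed than what the paper provides.
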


If we relax the restriction that weight functions are given as unambiguous
automata, \sump and \avgp become \sharpp-hard again.

\begin{theorem}\label{thm:sumTropicalHard}$\sump[\ufvsa,\regtrop]$ and $\avgp[\ufvsa,\regtrop]$ are
  \sharpp-hard. 
\end{theorem}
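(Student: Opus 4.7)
The plan is to reduce the \sharpp-hard counting problem \#SAT to both aggregates. Given a CNF formula $\phi=C_1\land\cdots\land C_m$ over variables $x_1,\dots,x_n$, I will reuse the document $\doc:=a^n$ and the unambiguous functional VSet-automaton $A\in\ufvsa$ from the proof of Theorem~\ref{thm:minmaxRegintractable} whose output $\toSpanner{A}(\doc)$ consists of exactly $2^n$ tuples, one per truth assignment $\alpha_\tup$. The hardness will come not from ambiguity of the spanner but from ambiguity of the weight automaton interpreted over the tropical (min,+) semiring.

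The core idea is to build a functional $\regtrop$-automaton $W$ realizing the weight function
\[
  w(\doc,\tup)=\begin{cases} 0 & \text{if } \alpha_\tup\models\phi,\\ -1 & \text{otherwise,}\end{cases}
\]
by exploiting that $\srplus=\min$: an assignment violates $\phi$ iff it violates \emph{some} clause, so disjunctive branches naturally express non-satisfaction. I would define $W:=W_0\cup W_1\cup\cdots\cup W_m$, where $W_0$ is a copy of $A$ with every initial, transition, and final weight set to $\srone=0$ (it contributes one run of weight $0$ for each of the $2^n$ tuples), and each $W_i$ is the clause-branch automaton from the proof of Theorem~\ref{thm:sw-sum-cnf} (without the auxiliary $x$ variable) that extracts exactly the tuples whose assignment falsifies $C_i$, reinterpreted over $\regtrop$ by setting a single distinguished transition to weight $-1$ and the rest to $0$. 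By Theorem~\ref{theo:closed-algebra} the result is a functional weighted VSet-automaton, and since every branch has the same variable set $\{x_1,\dots,x_n\}$ functionality is preserved. For a satisfying tuple only the $W_0$ run exists, giving minimum weight $0$; for a non-satisfying tuple at least one $W_i$ contributes a run of weight $-1$, whose minimum with $0$ is $-1$.

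From this it follows that
\[
  \spsum(\toSpanner{A},\doc,w)\;=\;-\bigl(2^n-|\mathrm{SAT}(\phi)|\bigr),
\]
so $|\mathrm{SAT}(\phi)|=2^n+\spsum(\toSpanner{A},\doc,w)$, giving the required reduction for $\sump[\ufvsa,\regtrop]$. For \avgp I will use that $A$ is unambiguous, so by Theorem~\ref{thm:countResults} $\spcount(\toSpanner{A},\doc)=2^n$, hence $\spsum=2^n\cdot\spavg$ and $|\mathrm{SAT}(\phi)|=2^n\bigl(1+\spavg(\toSpanner{A},\doc,w)\bigr)$; the same construction therefore proves $\avgp[\ufvsa,\regtrop]$ is \sharpp-hard.

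The main thing to verify carefully is that $W$ is functional and that the intended minima are actually realized: one has to check that the $W_i$ clause branches never accidentally produce runs on satisfying tuples (ensured by the non-satisfaction automaton of Theorem~\ref{thm:sw-sum-cnf}), and that the $W_0$ branch always gives a run of weight $0$ so that the base value $0$ is never improved for satisfying assignments (ensured by unambiguity of $A$). Everything else is routine, and crucially this hardness already arises with weights in $\{-1,0\}$, showing that it is the ambiguity of the tropical weight automaton, and not magnitudes or negativity, that drives the complexity.
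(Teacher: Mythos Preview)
Your approach is essentially the paper's: the same $A\in\ufvsa$ and document $\doc=a^n$ from Theorem~\ref{thm:minmaxRegintractable}, together with a $\regtrop$ weight automaton built as a union of a ``default'' copy of $A$ and one branch per clause, exploiting $\srplus=\min$ so that a non-satisfying assignment picks up the smaller branch weight.

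The only difference is the choice of constants. The paper puts weight $c$ on the default branch (with $c=1$ for \sump and $c=2^n$ for \avgp) and weight $0$ on each clause branch, so a tuple has weight $c$ iff its assignment satisfies $\phi$; this gives $\spsum=\#\mathrm{CNF}(\phi)$ and $\spavg=\#\mathrm{CNF}(\phi)$ directly, i.e., \emph{parsimonious} reductions---which is the paper's declared default notion for \sharpp-hardness. Your $\{-1,0\}$ choice is perfectly correct for hardness but needs the post-processing $|\mathrm{SAT}(\phi)|=2^n+\spsum$ (respectively $2^n(1+\spavg)$), so as written it is not parsimonious. Replacing your weights by $1$ on the $W_0$ branch and $0$ on each $W_i$ gives exactly the paper's parsimonious version with no other change; this also slightly sharpens your closing remark, since then only nonnegative weights $\{0,1\}$ are needed.
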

\begin{proof}
  We begin by giving a parsimonious reduction from the \sharpp-complete problem of
  $\#$CNF. To this end, let $c = 1$ in the case of $\sump$ and $c = 2^n$ in the
  case of \avgp.

  Let $\phi(x_1,\ldots,x_n)$ be a propositional formula in conjunctive normal
  form. Let $A,\doc$ be as constructed in the proof of
  Theorem~\ref{thm:minmaxRegintractable} and let $\w$ be the weight function
  such that $\w(\doc,\tup) = c$ if the corresponding assignment $\alpha_\tup$
  satisfies $\phi$ and $\w(\doc,\tup) = 0$ otherwise. Therefore, with $c \eqdef 1$
  it follows directly that $\#\text{CNF}(\phi) = \spAsumagg$, which shows that
  the problem is $\sharpp$-hard. For \avgp let $c \eqdef 2^n$. It follows that
  $\#\text{CNF}(\phi) = x = \frac{x \cdot 2^n}{2^n} = \frac{x \cdot c}{2^n} =
  \spAavgagg$, implying that $\avgp[\ufvsa,\regtrop]$ is also \sharpp-hard.
  
  It remains to show that there is a weighed automaton $W$ representing $\w \in
  \regtrop$. As in the proof of Theorem~\ref{thm:minmaxRegintractable}, $W$ is
  the union of two weighted \vset-automata $W_A$ and $W'$, where $W_A$ is a copy
  of $A$, assigning weight $0$ to all initial states and transitions
  of $A$ and weight $c$ to all final states. Furthermore, $W'$ is as defined,
  that is
  \[
    \repspnrk{W'}{\tropical}(a^n,\tup) =
    \begin{cases}
      0 & \text{if } \alpha_\tup \not\models \phi\\
      \infty & \text{otherwise.}
    \end{cases}
  \]
  It follows directly that $W$ encodes the weight function $\w$, concluding the
  proof.
\end{proof}

Finally, we show that \sump and \avgp for \regtrop weight functions are in
$\fptosp$.
\begin{theorem}\label{thm:regUpperBounds}$\sump[\fvsa,\REG]$ and $\avgp[\fvsa,\REG]$ are in
  $\fptosp$.
\end{theorem}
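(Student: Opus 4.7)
The plan is to reduce both problems to the \PW case of Theorem~\ref{thm:fpospUpperBounds} via a scaling argument that converts rational weights into integer weights while keeping the scaling factor of polynomial bit size. Let $A \in \fvsa$, $\doc \in \docs$, and let $W$ be the functional weighted \vset-automaton representing $\w \in \REG$. First I would define $M$ as the least common multiple of the denominators of all rational weights appearing in the initial, final, and transition functions of $W$. Since each such weight has a polynomial-size encoding and $W$ has polynomially many such weights, $\|M\|$ is polynomial in the input.

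Next I would show that a suitable polynomial-bit integer $N$ can be chosen so that $N \cdot \w(\doc,\tup) \in \ints$ for every $\tup$. For $W \in \regtrop$, the weight $\w(\doc,\tup)$ is the minimum over accepting runs of $W$ for $\tup$ of a sum of transition weights; each such sum has denominator dividing $M$, so $N \eqdef M$ suffices. For $W \in \regnum$, $\w(\doc,\tup)$ is a sum of products of at most $k$ transition weights, where $k = |\doc| + 1 + 2|\vars(A)|$ bounds the length of any run of $W$ on a ref-word over $\doc$; hence every such product (and thus the whole sum) has denominator dividing $M^k$, and setting $N \eqdef M^k$ works, with $\|N\| = k \cdot \|M\|$ still polynomial. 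In either case, the scaled weight function $\w'(\doc,\tup) \eqdef N \cdot \w(\doc,\tup)$ is integer-valued, and it is polynomial-time computable because $\w$ is (per the observation following Proposition~\ref{prop:swInUreg}) and $N$ can be precomputed in polynomial time; hence $\w' \in \PW$.

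Applying Theorem~\ref{thm:fpospUpperBounds} to $\w'$ shows that $\sump(\toSpanner{A}, \doc, \w')$ is computable in \fptosp, and since $\sump(\toSpanner{A}, \doc, \w) = \sump(\toSpanner{A}, \doc, \w')/N$ with $N$ a polynomial-bit integer, the unscaled sum is computable in \fptosp as well. For $\avgp[\fvsa,\REG]$, I would use $\avgp(\toSpanner{A}, \doc, \w) = \sump(\toSpanner{A}, \doc, \w)/\spcount(\toSpanner{A}, \doc)$; the numerator is handled as above and the denominator is computable in \spanl, and thus by a single \sharpp oracle call, by Theorem~\ref{thm:countResults}. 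The main technical point is controlling the bit size of the scaling factor $N$; in the numerical case this crucially relies on the polynomial upper bound on the length of any run of $W$, so that $M^k$ remains of polynomial size and the scaled integer sum produced by the \sharpp oracle fits in polynomial bits.
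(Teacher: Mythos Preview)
Your proposal is correct and takes essentially the same approach as the paper: both use the LCM scaling argument, taking $N = M$ for the tropical semiring and $N = M^{|\doc|+1+2|\vars(A)|}$ for the numerical semiring to clear denominators, then invoke a $\sharpp$ oracle on the resulting integer-valued weight function and divide by $N$. The only organizational difference is that you directly invoke Theorem~\ref{thm:fpospUpperBounds} for integer-valued $\PW$ functions (which already handles the positive/negative split internally), whereas the paper reproves the natural-number $\sharpp$ upper bound inline and then explicitly splits into positive and negative numerators before dividing by the common denominator power.
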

\begin{proof}
  We will begin by showing that $\sump[\fvsa,\REG]$ is in \fptosp if all weights
  assigned by $\w$ are natural numbers. We will use this as an oracle for the
  general upper bound. Let $A$ be a \vset-automaton, $\doc \in \docs$ be a
  document and $\w \in \REG$ be a weight function, which only assigns natural
  numbers and is represented by a weighted \vset-automaton $W$. A
  counting Turing Machine $M$ for solving the problem in $\sharpp$ would have
  $\w(\doc,\tup)$ accepting runs for every tuple in $A(\doc)$. More precisely,
  $M$ guesses a \doc-tuple $\tup$ over $\vars(A)$ and can check whether $\tup
  \in \toSpanner{A}(\doc)$ and $\w(\doc,\tup) > 0$. If so, $M$ branches into
  $\w(\doc,\tup)$ accepting branches. Otherwise, $M$ rejects. Per construction,
  $M$ has exactly $\w(\doc,\tup)$ accepting branches for every tuple $\tup \in
  \toSpanner{A}(\doc)$ with $\w(\doc,\tup) > 0$. Thus, the number of accepting
  runs is exactly $\sum_{\tup \in \toSpanner{A}(\doc)} \w(\doc,\tup) =
  \spAsumagg$.
  
  Now, let $\w \in \REG$ be a weight function, represented by the 
  weighted \vset-automaton $W$. We can assume, \mbox{w.l.o.g.}, that all
  rationals in $W$ have the denominator $d_{\text{lcm}}$.\footnote{This can be
    achieved by computing the least common multiple of all denominators
    $d_{\text{lcm}}$ in $W$ and expanding all fractions $\frac{a}{b}$ by
    $\frac{b}{d_{\text{lcm}}}$.} We recall that $\w(\doc,\tup) =
  \repspnr{W}(\doc,\projectTup{\tup}{\vars(W)})$. Thus, $w(\doc,\tup)$ is the
  product of $|\doc|+1+2*|\vars(A)|$ rationals, where each factor has the
  denominator $d_{\text{lcm}}$. Therefore,
  $\repspnr{W}(\doc,\projectTup{\tup}{\vars(W)})$ must have the denominator
  $d_{\text{lcm}}^{|\doc|+1+2|\vars(A)|}$\footnote{For the tropical semiring the
    denominator is actually $d_{\text{lcm}}$, as the multiplicative operation is
    $+$, which does not increase the denominator if both summands have the same
    denominator.}, which has an encoding length linear in $W$ and $\doc$. Thus,
  $\sump[\fvsa,\REG]$ can be computed by two calls to a $\sump[\fvsa,\REG]$-oracle. The first call only considers positive numerators, whereas the second
  call only considers negative numerators. Then, $\sump[\fvsa,\REG]$ is the
  difference of the results of both oracle calls, divided by
  $d_{\text{lcm}}^{\;|\doc|+1+2*|\vars(A)|}$.
  
  The upper bound for $\avgp[\fvsa, \regtrop]$ is immediate from the upper
  bound of $\sump[\fvsa, \regtrop]$ and Theorem~\ref{thm:countResults}.
\end{proof}

\subsection{Quantile Aggregation}
The situation for \quantp is different from the other aggregation problems,
since it remains hard, even when both the spanner and weight function are
unambiguous. The reason is that the problem reduces to counting the number of
paths in a weighted DAG that are shorter than a given target weight, which is
\sharpp-complete due to Mihal{\'{a}}k et al.~\cite{MihalakSW16}.
\begin{theorem}\label{thm:quantUregHard}$\quantp[\ufvsa,\UREG]$ is \sharpp-hard under Turing reductions, for every
  $0 < q < 1$.
\end{theorem}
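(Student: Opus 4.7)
The plan is to give a Turing reduction from the $\sharpp$-complete problem of Mihal{\'a}k et al.~\cite{MihalakSW16}: given a weighted DAG $G=(V,E)$ with nonnegative integer edge weights $\ell$, source $s$, sink $t$, and threshold $L\in\nat$, compute the number $c$ of $s$-$t$ paths of total weight at most $L$.

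First I would construct, in polynomial time, a document $\doc$, an unambiguous functional \vset-automaton $A\in\ufvsa$, and an unambiguous regular weight function $w\in\uregtrop$ (represented by an unambiguous tropical weighted \vset-automaton $W$) such that there is a bijection between $\toSpanner{A}(\doc)$ and the $s$-$t$ paths in $G$, with the weight of each tuple equal to the total edge weight of the corresponding path. This is essentially the reverse of Lemma~\ref{lem:dagComp}: after polynomial-time padding with zero-weight edges we may assume $G$ is layered with all $s$-$t$ paths having the same number $k$ of edges. Let $\doc$ be a fixed string of $k$ distinct layer-separator symbols, and let $A$ use one capture variable per layer to record which edge is traversed at that layer. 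The weight function $W$ tropically sums the $\ell$-values of captured edges. Unambiguity of both $A$ and $W$ follows because each path induces a unique run and, under the variable-order condition, a unique ref-word.

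Next I would compute $N=|\toSpanner{A}(\doc)|$ in polynomial time using Theorem~\ref{thm:countResults}. For each integer $k$ in a suitable polynomial range I would then build an instance $(A_k,w_k,\doc_k)$ with $|\toSpanner{A_k}(\doc_k)|=N+k$, in which the original $N$ tuples retain their weights and the $k$ added ``padding'' tuples all have weight $L+1$. Lemma~\ref{lem:spannerWithKtuples} supplies an unambiguous spanner with exactly $k$ tuples, and a disjoint union with $A$ (using a fresh separator symbol to keep the ref-word languages disjoint) preserves unambiguity by Theorem~\ref{theo:closed-algebra}. By construction,
\[
\qquantl{q}(\toSpanner{A_k},\doc_k,w_k)\;\leq\;L \quad\iff\quad c\;\geq\; q(N+k)\;.
\]
Hence a binary search over $k$, using $O(\log(N+L))$ oracle calls to $\quantp[\ufvsa,\UREG]$, pinpoints the critical threshold $k^*$, from which $c$ is recovered by a single arithmetic step.

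The main obstacle is Step~1: designing the encoding so that \emph{both} $A$ and $W$ are genuinely unambiguous. The layered, one-variable-per-layer construction is tailored precisely so that distinct paths produce distinct variable-order ref-words, and so that the $\srtimes$-product of transition weights along the run equals the edge-weight sum of the path. Once this clean one-to-one correspondence is in place, the padding and binary-search steps follow the same template as the proof of Theorem~\ref{thm:sw-quant-avg-hard}.
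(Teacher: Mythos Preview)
Your high-level plan---reduce a count-below-threshold problem to the quantile via padding tuples and binary search---is exactly the template the paper uses. The difference is the source of the count-below-threshold instance: you encode general DAG $s$--$t$ path counting directly, whereas the paper goes through $\#$Partition (for $\uregtrop$) and $\#$Product-Partition (for $\uregnum$) in Lemmata~\ref{lem:countlk} and~\ref{lem:countlOneNum}. The paper's choice is not accidental: these partition problems have a built-in \emph{binary} structure (each number goes into $N_1$ or $N_2$), which maps cleanly onto the two-choice gadget $x_i\{\varepsilon\}\cdot a$ versus $x_i\{a\}$ over the document $a^n$. That is what makes unambiguity of both $A$ and $W$ automatic.

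Your Step~1, by contrast, has a genuine gap as written. You say the document is ``a fixed string of $k$ distinct layer-separator symbols'' and that variable $x_i$ ``records which edge is traversed at that layer.'' But a layer in a general DAG can have out-degree up to $|V|$, and a single position in a length-$k$ document does not give the variable enough distinct span values to encode that choice; moreover the automaton must also track the current vertex so that consecutive edge choices are consistent. The construction is salvageable---give each layer its own document segment long enough to index the layer's edges, and let the automaton state carry the current vertex---but that is a nontrivial encoding you have not spelled out, and it is precisely where unambiguity of both $A$ and $W$ must be argued carefully.

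There is also a smaller gap in the binary search. Padding only with high-weight tuples gives $\qquantl{q}\le L \iff c\ge q(N+k)$; when $c<qN$ this fails already at $k=0$ and you learn only that $c<qN$, not the value of $c$. The paper avoids this by padding on \emph{both} sides so that the total count is a fixed multiple of the denominator of $q$, making the threshold crossing land on an integer that determines $c$ exactly (see the $r$ and $c\cdot(b-1)-r$ split in the proof of Theorem~\ref{thm:quantUregHard}). You should do the same, or at least add a symmetric low-weight padding branch.
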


At the core of the quantile problem is the problem of counting up to a threshold
$k \neq \infty$:
\[
  \spcount_{<k}(\spanner, \doc, \w) \eqdef |\{\tup \in P(\doc) \mid
  \w(\doc,\tup) \leq k\}|.
\]
The problems $\spcount_{>k}(\spanner, \doc, \w)$ and $\spcount_{=k}(\spanner,
\doc, \w)$ are defined analogously. The decision problem
$\countp_{<k}[\spannerClass,\cW]$ is defined analogously to
$\sump[\spannerClass,\cW]$. We begin by showing that
$\countp_{<k}[\ufvsa,\uregnum]$ and $\countp_{<k}[\ufvsa,\uregtrop]$
\sharpp-hard under Turing reductions. To this end, we reduce from $\#$Partition
and $\#$-Product-Partition.

Given a set $N = \{n_1,\ldots,n_n\}$ of natural numbers. Two sets $N_1,N_2$ are
a partition of $N$ if $N_1 \cup N_2 = N$ and $N_1 \cap N_2 = \emptyset$.
Furthermore, a partition is perfect, if the sums of the natural numbers in both
sets are equal. Given such a set $N = \{n_1,\ldots,n_n\}$, the $\#$Partition
problem asks for the number of perfect partitions.

Analogously, a partition $N_1,N_2$ is called a perfect product partition, if the
products of the natural numbers in both sets are equal. Furthermore, the
Product-Partition Problem asks whether there is a perfect product partition and
the problem $\#$Product-Partition asks for the number of perfect product
partitions.
\begin{proposition}$\#$Partition and $\#$Product-Partition are \sharpp-complete under Turing
  reductions. 
\end{proposition}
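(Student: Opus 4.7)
The plan is to establish upper and lower bounds separately for the two problems. For membership in $\sharpp$, a single nondeterministic polynomial-time Turing machine will guess a subset $N_1 \subseteq N$ and accept when $N_1$ and $N \setminus N_1$ have equal sums (respectively equal products); the number of accepting computations is then exactly twice the number of perfect (product) partitions, the factor of two coming from the symmetry $N_1 \leftrightarrow N \setminus N_1$, so a single oracle call plus a division yields the count and hence a Turing reduction.

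For $\sharpp$-hardness of $\#$Partition, I would give a parsimonious reduction from the canonical $\sharpp$-complete problem $\#$Subset-Sum. Given $(S, k)$ with total $T \eqdef \sum_{s \in S} s$, the padded instance $S' \eqdef S \cup \{2T - k,\, T + k\}$ has total $4T$, and any perfect partition of $S'$ is forced to separate the two padding elements (otherwise one side would sum to at most $T$ or at least $3T$, both incompatible with the required $2T$ per side). A short case analysis will show that the perfect partitions of $S'$ are in bijection with subsets $X \subseteq S$ satisfying $\sum_{s \in X} s = k$, via $X \mapsto \{X \cup \{2T-k\},\, (S \setminus X) \cup \{T+k\}\}$, and since parsimonious reductions are Turing reductions, the desired hardness follows.

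For $\sharpp$-hardness of $\#$Product-Partition, my plan is to reduce from $\#$Subset-Product, whose $\sharpp$-completeness I would derive from the standard NP-hardness reduction of Garey and Johnson from 3-Dimensional Matching to Subset-Product: that reduction builds numbers as products of distinct small primes, each appearing to a constant power, so it is parsimonious and transfers $\sharpp$-hardness to the counting version. The additive padding trick then carries over multiplicatively, replacing the additive offsets $2T - k$ and $T + k$ by analogous multiplicative factors (expressed with a small fresh prime) chosen so that any perfect product partition separates the new factors and, on each side, recovers a subset of the original numbers with the desired product.

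The main obstacle will be keeping the constructed numbers of polynomial bit-length in the multiplicative case: the naive analogue of the additive reduction, $a \mapsto p^a$, blows up doubly exponentially when $a$ is written in binary, so the additive proof does not transfer by a change of operation alone. Using products of many small primes (rather than large powers of a single prime) sidesteps the blowup because each prime appears only to a constant power; this is precisely why the Garey-Johnson style construction is the right starting point, and verifying that the multiplicative padding retains polynomial size will be the delicate step.
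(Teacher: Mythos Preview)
Your proposal is broadly sound but takes a different route from the paper, and the difference matters for one of the two problems.

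For membership, note a small wrinkle: your nondeterministic machine counts ordered pairs $(N_1,N_2)$, so it overcounts by a factor of two. Dividing by two afterwards is fine for Turing reductions, but if you want actual membership in $\sharpp$ (which is what ``$\sharpp$-complete under Turing reductions'' usually requires), have the machine guess only subsets containing a fixed distinguished element, say $n_1$; this restores a bijection with unordered partitions.

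For $\#$Partition, the paper simply cites Mihal{\'a}k et al.\ for $\sharpp$-completeness, whereas you give an explicit parsimonious reduction from $\#$Subset-Sum via the standard additive padding $S' = S \cup \{2T-k,\, T+k\}$. Your argument is correct (modulo the usual care that the two padding numbers are distinct from each other and from elements of $S$, which is handled by first scaling); it is more self-contained than the paper's proof and costs nothing extra.

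For $\#$Product-Partition, the comparison is more interesting. The paper avoids the intermediate step through Subset-Product entirely: it cites a reduction of Ng et al.\ that goes \emph{directly} from Exact Cover by 3-Sets (X3C) to Product-Partition and observes that this reduction is weakly parsimonious (each X3C solution yields exactly two product partitions). Combined with the $\sharpp$-hardness of $\#$X3C (Hunt et al.), this immediately gives a Turing reduction. Your route instead passes through $\#$Subset-Product and then needs a \emph{multiplicative} analogue of the additive padding trick. You correctly flag this as the delicate step, and it is genuinely nontrivial: one must choose the two padding factors so that (i) they are integers of polynomial bit-length, (ii) any perfect product partition is forced to separate them, and (iii) the resulting correspondence with subset-product solutions is a bijection. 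This can be made to work (using a fresh prime and the fact that the Garey--Johnson numbers are squarefree products of small primes), but it is work that the paper's direct citation sidesteps. If you want to keep your self-contained style, it would be cleaner to imitate the Ng et al.\ reduction and go straight from X3C to Product-Partition rather than detouring through Subset-Product.
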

\begin{proof}
  Mihal{\'{a}}k et al.~\cite[Theorem 1]{MihalakSW16} show that $\#$Partition is
  \sharpp-complete.
  
  The \sharpp-completeness of $\#$Product-Partition follows by a reduction of
  Ng et al.~\cite[Theorem 1]{NgBCK10}, who give a reduction from Exact Cover by
  3-sets (X3C) to Product-Partition. We note that this reduction is weakly
  parsimonious, as defined by Hunt et al.~\cite[Definition 2.5]{HuntMRS98}. That
  is, for every solution of an X3C instance, there are exactly $2$ solutions for
  the constructed Product-Partition instance. Furthermore, Hunt et
  al.~\cite[implicit in Theorem 3.8]{HuntMRS98} show that $\#$X3C is
  $\sharpp$-hard. Therefore, the reduction of Ng et al.~\cite[Theorem
  1]{NgBCK10} can be used to give a Turing reduction from $\#$X3C to
  $\#$Product-Partition, which implies that $\#$Product-Partition is also
  $\sharpp$-hard under Turing reductions. It is easy to see that
  $\#$Product-Partition is in \sharpp.
\end{proof}

\begin{lemma}\label{lem:countlk}Let $k \in \rationals$. Then $\countp_{<k}[\ufvsa,\uregtrop]$ is
  \sharpp-hard under Turing reductions.
\end{lemma}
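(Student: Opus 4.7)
The plan is to reduce from \#Partition, which is \sharpp-complete under Turing reductions. Given a multiset $N = \{n_1,\ldots,n_m\}$ of natural numbers with sum $S$, a perfect partition corresponds to a subset $T \subseteq \{1,\ldots,m\}$ with $\sum_{i\in T} n_i = S/2$, and each partition is counted twice this way (once for $T$, once for its complement). So to compute $\#\text{Partition}(N)$ it suffices to compute $N_\text{eq} \df |\{T \mid \sum_{i\in T} n_i = S/2\}|$ and divide by $2$, and $N_\text{eq}$ is the difference of two threshold counts.

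Concretely, I first set $\doc \df a^m$ and describe an unambiguous functional \vset-automaton $A \in \ufvsa$ over variables $\{x_1,\ldots,x_m\}$ by the regex formula
\[
  A \df \bigl(\vop{x_1} \varepsilon \vcl{x_1}\cdot a \;\lor\; \vop{x_1}\cdot a\cdot \vcl{x_1}\bigr) \cdots \bigl(\vop{x_m} \varepsilon \vcl{x_m}\cdot a \;\lor\; \vop{x_m}\cdot a\cdot \vcl{x_m}\bigr)\;.
\]
Every subset $T \subseteq \{1,\ldots,m\}$ is encoded by the tuple $\tup_T$ with $\tup_T(x_i) = \spanFromTo{i}{i+1}$ if $i \in T$ and $\tup_T(x_i) = \spanFromTo{i}{i}$ otherwise. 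This yields a bijection between subsets of $\{1,\ldots,m\}$ and $\toSpanner{A}(\doc)$, and by fixing the linear order $\prec$ with $\vop{x_i}\prec\vcl{x_i}\prec\vop{x_{i+1}}$, both the variable order condition \ref{cond:voc} and uniqueness of runs \ref{cond:unambig} are easily verified, so $A \in \ufvsa$.

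Next, I construct an unambiguous functional \tropical-weighted \vset-automaton $W \in \uregtrop$ with the same underlying structure as $A$: the branch for $x_i$ that captures the empty span carries total weight $0$, while the branch that captures $a$ carries total weight $n_i$ (e.g., by placing weight $n_i$ on the transition labeled $a$ inside $\vop{x_i}\cdots\vcl{x_i}$ and $0$ everywhere else, including initial and final weights). Since $\srtimes = +$ in \tropical, the unique accepting run for $\tup_T$ has weight $\sum_{i\in T} n_i$, so $\w(\doc,\tup_T) = \sum_{i\in T} n_i$. Unambiguity of $W$ follows from the same argument as for $A$, and functionality is immediate since $\vars(\toTuple{\refWord}) = \{x_1,\ldots,x_m\}$ for every $\refWord \in \reflang(W)$.

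Finally, querying the $\countp_{<k}[\ufvsa,\uregtrop]$ oracle on $(A,\doc,W,S/2)$ and on $(A,\doc,W,S/2-1)$ (assuming the $n_i$ are integers, which is the standard setting for \#Partition) yields the values $|\{T\mid \sum_{i\in T}n_i \leq S/2\}|$ and $|\{T\mid \sum_{i\in T}n_i \leq S/2-1\}|$; their difference is $N_\text{eq}$, and $N_\text{eq}/2 = \#\text{Partition}(N)$. The main obstacle is ensuring that both $A$ and $W$ simultaneously satisfy \ref{cond:voc} and \ref{cond:unambig}: the two branches inside each disjunct produce distinct ref-word patterns ($\vop{x_i}\vcl{x_i}a$ vs.\ $\vop{x_i}a\vcl{x_i}$), so their ref-word languages are disjoint and the union construction of Theorem~\ref{theo:closed-algebra} preserves unambiguity, which makes the verification straightforward once the order $\prec$ is chosen carefully.
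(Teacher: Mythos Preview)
Your reduction from \#Partition is the right idea and matches the paper's choice of source problem, but there is a genuine gap in how you use the oracle. The lemma fixes $k$ in advance: for every $k \in \rationals$, the problem $\countp_{<k}[\ufvsa,\uregtrop]$ (with that single fixed threshold) is \sharpp-hard. Your oracle calls pass the threshold as a fourth argument, querying at $S/2$ and at $S/2-1$, which depend on the instance. That is not an oracle for $\countp_{<k}$; it is an oracle for a different problem where the threshold is part of the input.

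The fix is easy and keeps your two-call strategy intact: shift the weight function instead of the threshold. Since addition in \tropical is $\srtimes$, placing the constant $k - S/2$ (respectively $k - S/2 + 1$) on the final weight of $W$ produces two unambiguous weighted automata $W_1, W_2 \in \uregtrop$ with $\w_j(\doc,\tup_T) = \sum_{i\in T} n_i + (k - S/2 + j - 1)$, so that $\w_j(\doc,\tup_T) \le k$ iff $\sum_{i\in T} n_i \le S/2 - (j-1)$. Two calls to the fixed-$k$ oracle on $(A,\doc,W_1)$ and $(A,\doc,W_2)$ then give you $N_{\text{eq}}$ as before. The paper takes a slightly different route: it encodes the signed difference $\sum_{i\in N_1} n_i - \sum_{i\in N_2} n_i$ (using weights $n_i$ and $-n_i$) and adds $k$ via the final weight, so that perfect partitions get weight exactly $k$; then the symmetry $(N_1,N_2)\leftrightarrow(N_2,N_1)$ gives $\spcount_{<k}=\spcount_{>k}$ and hence $\spcount_{=k} = 2^{n+1} - 2\,\spcount_{<k}$ from a single oracle call. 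Both arguments are valid Turing reductions once your threshold issue is repaired; the paper's symmetry trick is a bit cleaner but not essentially different.
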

\begin{proof}
  We use the same idea as Mihal{\'{a}}k et al.~\cite[Theorem 1]{MihalakSW16} to
  encode $\#$Partition. Let $N = \{n_1,\ldots,n_n\}$ be an instance of
  $\#$Partition. Let $\doc = a^n$. We construct $A$ and $W$ such that every
  tuple $\tup \in \toSpanner{A}(\doc)$ corresponds to a partition of $N$.
  Furthermore, $\w(\doc,\tup) = k$ if and only if the partition encoded by
  $\tup$ is perfect.
  
  More formally, $A \eqdef (\alphabet, V, Q, q_0, Q_F, \delta)$, where $\alphabet
  \eqdef \{a\}, V \eqdef \{x_1,\ldots, x_n\}, Q \eqdef \{ q_i^j \mid 1 \leq i
  \leq n, 1 \leq j \leq 5\}$, where $q_i^5 = q_{i+1}^1$ for all $1 \leq i < n$,
  $q_0 \eqdef q_1^1, Q_F \eqdef \{ q_n^5 \}$, and for $1 \leq i \leq n$,
  $\delta$ is defined as follows:
  \[
    \delta(q_i^j, \sigma) \df
    \begin{cases}
      \{q_i^2\} & \text{ if } 1 \leq i \leq n,  \sigma = \vop{x_i}, \text{ and }
      j = 1\\
      \{q_i^3\} & \text{ if } 1 \leq i \leq n, \sigma = \vcl{x_i}, \text{ and }
      j = 2\\
      \{q_i^4\} & \text{ if } 1 \leq i \leq n, \sigma = a, \text{ and }
      j = 2\\
      \{q_i^5\} & \text{ if } 1 \leq i \leq n, \sigma = a, \text{ and }
      j = 3\\
      \{q_i^5\} & \text{ if } 1 \leq i \leq n, \sigma = \vcl{x_i}, \text{ and }
      j = 4 \;.
    \end{cases}
  \]
  Recall, that $q_i^5 = q_{i+1}^1$ for all $1 \leq i < n$. 

  Furthermore, we define the  weighted \vset-automaton $W$ encoding
  $\w$ the same way as $A$. That is, all transitions labeled by a variable
  operation $x \in \varop V$ are assigned weight $\srone$, $\delta(q_i^3,a,q_i^5) =
  n_i$ and $\delta(q_i^2,a,q_i^4) = -n_i$, the initial- and final weight
  functions:
  \begin{align*}
    I(q) &\df
    \begin{cases}
      \srone & \text{if } q = q_0\\
      \srzero & \text{otherwise}\;;
    \end{cases}\\
    F(q) &\df
    \begin{cases}
      k & \text{if } q \in Q_F\\
      \srzero & \text{otherwise}\;.
    \end{cases}
  \end{align*}
  We observe that every tuple $\tup \in \toSpanner{A}(\doc)$ encodes a partition
  of $N$, that is, $n_i \in N_1$ if $\tup(x_i) = \spanFromTo{i}{i}$
  and $n_i \in N_2$ if $\tup(x_i) = \spanFromTo{i}{i+1}$.
  Furthermore, for every tuple $\tup \in \toSpanner{A}(\doc)$, the weight
  $\w(\doc,\tup)$ is exactly $k$ plus the difference of the sum of all elements
  in $N_1$ and the sum of all elements in $N_2$. We make some observations about
  $A,\doc,$ and $\w$.
  \begin{enumerate}[label=(\arabic*)]
  \item The number of perfect partitions is exactly $\spcount_{=
      k}(\toSpanner{A}, \doc, \w)\;;$
  \item $\spcount_{<k}(\toSpanner{A},\doc,\w) = \spcount_{>k}(\toSpanner{A},\doc,\w)\;;$
  \item $\spcount(\toSpanner{A},\doc) = 2 \cdot\spcount_{<k}(\toSpanner{A},\doc,\w) + \spcount_{=
      k}(\toSpanner{A},\doc,\w)\;;$
  \item $\spcount(\toSpanner{A},\doc) = 2^{n+1}\;;$
  \item $\spcount_{= k}(\toSpanner{A},\doc,\w) = 2^{n+1} - 2\cdot\spcount_{<k}(\toSpanner{A},\doc,\w)\;.$
  \end{enumerate}
  Due to Observations $(1)$ and $(5)$ it follows that the number of
  perfect partitions can be computed by a single call to a
  $\spcount_{<k}(\toSpanner{A},\doc,\w)$-oracle. 

  It remains to argue that the observations $(1) - (5)$ hold. Observation $(1)$
  follows directly from the previous observation that the weight of each tuple
  is $k$ plus the difference of the sum of all elements in $N_1$ and the sum of
  all elements in $N_2$. Observation $(2)$ follows from the fact that the
  partition problem is symmetric, that is for every partition $N_1,N_2$ of $N$
  there is also a partition $N_2,N_1$. Observation $(3)$ follows from $(2)$, and
  $(4)$ from the fact that there are $2^n$ subsets of $N$ and therefore $2 \cdot
  2^n$ possible partitions. The last observation $(5)$ follows from $(3)$ and
  $(4)$. This concludes the proof.
\end{proof}

Along the same lines we show that $\countp_{<1}[\ufvsa,\uregnum]$ is
\sharpp-hard under Turing reductions. Note that we do not show hardness for
$\countp_{<k}[\ufvsa,\uregnum]$, but only for the case $k = 1$.\footnote{Recall
  that, in the proof for the tropical semiring, we add $k$ to all accepting runs
  by having $F(q) = k$, if $q\in Q_F$. This is not possible over the numerical
  semiring, as the multiplicative operation is the numerical multiplication
  $\cdot$ and not the numerical addition $+$.}

\begin{lemma}\label{lem:countlOneNum}$\countp_{<1}[\ufvsa,\uregnum]$ is \sharpp-hard under Turing reductions.
\end{lemma}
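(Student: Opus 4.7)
The plan is to follow the proof of Lemma~\ref{lem:countlk} closely, but (i) reduce from $\#$Product-Partition instead of $\#$Partition, and (ii) replace additive tropical weights with multiplicative numerical weights. The reason the lemma is stated only for $k = 1$ is precisely the point where the two semirings diverge: $1$ is the multiplicative identity of \numericalSR, so ``product of $N_1$ equals product of $N_2$'' corresponds naturally to $\w(\doc,\tup) = 1$, whereas the tropical construction could freely shift every run by a constant $k$ via the final weights $F(q) = k$, a trick unavailable over $\numerical$ within $\uregnum$.

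First, given an instance $N = \{n_1, \ldots, n_n\}$ of $\#$Product-Partition, I would take the document $\doc = a^n$ and reuse the same unambiguous functional \vset-automaton $A$ built in the proof of Lemma~\ref{lem:countlk}, so that the tuples $\tup \in \toSpanner{A}(\doc)$ are in bijection with the ordered partitions $(N_1, N_2)$ of $N$ via $n_i \in N_1 \Leftrightarrow \tup(x_i) = \spanFromTo{i}{i}$ and $n_i \in N_2 \Leftrightarrow \tup(x_i) = \spanFromTo{i}{i+1}$.

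Second, I would define an unambiguous functional $\numerical$-weighted \vset-automaton $W$ sharing the state space and transition support of $A$, but with weights $\delta(q_i^3, a, q_i^5) = n_i$ on the edge taken when $\tup(x_i)$ is the empty span, and $\delta(q_i^2, a, q_i^4) = \tfrac{1}{n_i}$ on the edge taken when $\tup(x_i)$ has length $1$. All other nonzero transitions, and both the initial and final weight functions, are set to $\srone = 1$. A direct computation then gives
\[
  \w(\doc, \tup) \;=\; \prod_{n_i \in N_1} n_i \;\cdot\; \prod_{n_i \in N_2} \tfrac{1}{n_i} \;=\; \frac{\prod N_1}{\prod N_2},
\]
so $\w(\doc,\tup) = 1$ iff $(N_1,N_2)$ is a perfect product partition, $\w(\doc,\tup) < 1$ iff $\prod N_1 < \prod N_2$, and $\w(\doc,\tup) > 1$ otherwise. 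Since $A$ is unambiguous and $W$ shares its underlying state and transition skeleton (in particular, conditions~\ref{cond:voc} and~\ref{cond:unambig} are inherited), $W \in \uregnum$.

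Finally, I would close the reduction by the same symmetry argument as in Lemma~\ref{lem:countlk}: the involution $(N_1, N_2) \mapsto (N_2, N_1)$ induces a bijection between tuples with $\w(\doc,\tup) < 1$ and tuples with $\w(\doc,\tup) > 1$, so $\spcount_{<1}(\toSpanner{A}, \doc, \w) = \spcount_{>1}(\toSpanner{A}, \doc, \w)$. Letting $T \eqdef \spcount(\toSpanner{A}, \doc)$, a quantity that depends only on $n$ and is trivially computable, we obtain
\[
  \spcount_{=1}(\toSpanner{A}, \doc, \w) \;=\; T \;-\; 2 \cdot \spcount_{<1}(\toSpanner{A}, \doc, \w).
\]
The left-hand side is exactly the number of perfect product partitions of $N$, so a single oracle call to $\countp_{<1}[\ufvsa, \uregnum]$ suffices, and \sharpp-hardness under Turing reductions follows. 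The main conceptual obstacle is not the construction itself, which mirrors the tropical case, but recognising that the restriction to $k = 1$ is genuine: without an additive shift mechanism over $\numerical$, the same argument does not produce hardness for arbitrary thresholds $k$.
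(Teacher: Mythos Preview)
Your proposal is correct and follows essentially the same approach as the paper: both reduce from $\#$Product-Partition, reuse the automaton $A$ and document $\doc = a^n$ from Lemma~\ref{lem:countlk}, replace the tropical edge weights $n_i, -n_i$ by the numerical weights $n_i, \tfrac{1}{n_i}$, and close with the same symmetry argument to recover $\spcount_{=1}$ from a single $\spcount_{<1}$ oracle call. Your explicit remark that the restriction to $k=1$ stems from the absence of an additive shift in $\numerical$ matches the paper's own footnoted justification.
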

 \begin{proof}
   Let $N$ be an instance of $\#$Product-Partition. We construct $A,\doc,\w$ and
   $W$, as constructed in the proof of Lemma~\ref{lem:countlk}. However in $W$,
   $\delta(q_i^3,a,q_i^5) = n_i$ and $\delta(q_i^2,a,q_i^4) = \frac{1}{n_i}$.
   Observe, that $\w(\doc,\tup)$ is exactly the product of all elements in $N_1$
   divided by the product of all elements in $N_2$, where $n_i \in N_1$ if and
   only if $\tup(x_i) = \spanFromTo{i}{i}$ and $n_i \in N_2$ if and only if
   $\tup(x_i) = \spanFromTo{i}{i+1}$. Therefore, the number of perfect product
   partitions is exactly the number of tuples $\tup \in \toSpanner{A}(\doc)$
   with $\w(\doc,\tup) = 1$. Using the same argument as in the proof of
   Lemma~\ref{lem:countlk}, it follows that
   \[
     \#\text{Product-Partition} = 2^{n+1} - 2 \cdot
     \spcount_{<1}(\toSpanner{A},\doc,\w)\;, 
   \]
   and thus, $\#$Product-Partition can be computed by a single
   $\countp_{<1}[\ufvsa,\uregnum]$-oracle call.
\end{proof}

The following corollary follows directly from Lemmas~\ref{lem:countlk}
and~\ref{lem:countlOneNum}.

\begin{corollary}\label{cor:countlOne}$\countp_{<1}[\ufvsa,\UREG]$ is \sharpp-hard under Turing reductions.
\end{corollary}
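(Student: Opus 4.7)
The plan is to observe that this corollary is essentially a class-inclusion consequence of the two preceding lemmata. Recall from the preliminaries that $\UREG \eqdef \uregtrop \cup \uregnum$, so both $\uregtrop \subseteq \UREG$ and $\uregnum \subseteq \UREG$. Consequently, every instance of $\countp_{<1}[\ufvsa,\uregtrop]$ (or of $\countp_{<1}[\ufvsa,\uregnum]$) is, verbatim, an instance of $\countp_{<1}[\ufvsa,\UREG]$ via the identity reduction.

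First I would instantiate Lemma~\ref{lem:countlk} with $k = 1$, which gives that $\countp_{<1}[\ufvsa,\uregtrop]$ is already \sharpp-hard under Turing reductions. The identity reduction, which simply reinterprets the input weighted \vset-automaton over $\tropical$ as an element of the larger class $\UREG$, is trivially computable in polynomial time and preserves the answer, so any Turing reduction witnessing \sharpp-hardness for the sub-problem also witnesses \sharpp-hardness for the super-problem. (Alternatively, one could invoke Lemma~\ref{lem:countlOneNum} via the inclusion $\uregnum \subseteq \UREG$; either lemma on its own suffices.)

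There is no real obstacle here: the corollary is a one-line consequence of the definitions and the prior lemmata, and its main purpose is to package the two semiring-specific hardness results into a single statement about the combined class \UREG that will be convenient to cite when deriving Theorem~\ref{thm:quantUregHard}. No new construction, no new reduction, and no new combinatorial argument is required.
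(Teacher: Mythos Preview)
Your proposal is correct and matches the paper's approach: the paper simply states that the corollary follows directly from Lemmata~\ref{lem:countlk} and~\ref{lem:countlOneNum}, which is precisely the class-inclusion argument you spell out.
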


We are finally ready to give the proof of Theorem~\ref{thm:quantUregHard}.
\begin{proof}[Proof of Theorem~\ref{thm:quantUregHard}]
  We show that $\spcount_{<1}(\toSpanner{A}, \doc, \w)$ can be computed in
  polynomial time, using a $\quantp[\ufvsa,\UREG]$-oracle therefore, concluding
  that the problem $\quantp[\ufvsa,\UREG]$ is also \sharpp-hard under Turing
  reductions.

  Let $A\in \ufvsa$, $\doc \in \docs$, and $\w\in\UREG$ represented by an
  unambiguous  weighted \vset-automaton $W$. Furthermore, let $0 < q <
  1$, such that $q = \frac{a}{b}$. Due to Theorem~\ref{thm:countResults}, $c
  \eqdef \spAcountagg$ can be computed in polynomial time. Let $0 \leq r \leq c
  \cdot (b-1)$. By Lemma~\ref{lem:spannerWithKtuples}\footnote{For instance with
    $v = \vars(A)\cdot b$.}, there are \vset-automata $A_r,A_r^\prime \in
  \ufvsa$ and documents $\doc_r,\doc_r^\prime$, such that
  $\spcount(\toSpanner{A_r},\doc_r) = r$ and
  $\spcount(\toSpanner{A_r^\prime},\doc_r^\prime) = c \cdot (b-1) - r$. Let
  $W_r$ (resp., $W_r^\prime$) be $A_r$ (resp., $A_r^\prime$), interpreted as
  unambiguous  weighted \vset-automaton, where all transitions of
  $A_r$ (resp., $A_R^\prime$) have weight $\srone$, the initial weight function
  assigns weight $\srone$ to the initial state of $A_r$ (resp., $A_r^\prime$),
  and the final weight function assigns weight $0$ (resp., $1$)\footnote{Note
    that we use $0$ and $1$ instead of $\srzero$ and $\srone$ on purpose. The
    reason is that we want to assign the same weights for both semirings.} to
  all accepting states of $A_r$ (resp., $A_r^\prime$). Slightly overloading
  notation, we define
  \[
    A^\prime \eqdef (A \cdot \doc_r \cdot \doc_r^\prime) \lor (\doc \cdot A_r \cdot
    \doc_r^\prime) \lor (\doc \cdot \doc_r \cdot A_r^\prime)\;
  \]
  and
  \[
    W^\prime \eqdef (W \cdot \doc_r \cdot \doc_r^\prime) \lor (\doc \cdot W_r \cdot
    \doc_r^\prime) \lor (\doc \cdot \doc_r \cdot W_r^\prime)\;
  \]
  It is straightforward to verify that both, $A^\prime$ and $W^\prime$ are
  unambiguous. Let $\doc^\prime = \doc \cdot \doc_r \cdot \doc_r^\prime$ and let
  $\w^\prime$ (resp, $\w_r, \w_r^\prime$) be the weight function, represented by
  $W^\prime$ (resp., $W_r,W_r^\prime$). It follows from the definition that
  \begin{align*}
    \spcount(\toSpanner{A^\prime},\doc^\prime)
    & = \spcount(\toSpanner{A},\doc) + \spcount(\toSpanner{A_r},\doc_r) + 
      \spcount(\toSpanner{A_r^\prime},\doc_r^\prime) \\
    & = c + r + (c\cdot(b-1)-r) = c\cdot b\;.
  \end{align*}
  Furthermore, recalling that $\w(\doc,\tup) = 0$ for all tuples $\tup \in
  \toSpanner{A_r}(\doc_r)$ and $\w(\doc,\tup) = 1$ for all tuples $\tup \in
  \toSpanner{A_r^\prime}(\doc_r^\prime)$, we have that
  \begin{align*}
    & \spcount_{<1}(\toSpanner{A^\prime}, \doc^\prime,\w^\prime)\\ 
    & = \spcount_{<1}(\toSpanner{A},\doc, \w) 
      + \spcount_{<1}(\toSpanner{A_r},\doc_r, \w_r)
      + \spcount_{<1}(\toSpanner{A_r^\prime},\doc_r^\prime, \w_r^\prime) \\
    & = \spcount_{<1}(\toSpanner{A},\doc, \w) + r + 0\;.
  \end{align*}
  
  Using binary search, we compute $r_{min}$ as the smallest $r$ with
  $\qquantl{q}(\toSpanner{A^\prime},\doc^\prime,w^\prime) < 1$. Thus,
  \[
    \frac{\spcount_{<1}(\toSpanner{A^\prime},\doc^\prime,w^\prime)}
    {\spcount(\toSpanner{A^\prime},\doc^\prime)}
    = \frac{\spcount_{<1}(\toSpanner{A},\doc, \w) + r_{min}}{c\cdot b} \geq q\;.
  \]
  For the sake of contradiction, assume that $\frac{\spcount_{<1}(\toSpanner{A},\doc, \w) +
    r_{min}}{c\cdot b} > q = \frac{c\cdot a}{c \cdot b}$. It follows that,
  $\spcount_{<1}(\toSpanner{A},\doc, \w) + r_{min} > c \cdot a$ and therefore, as all
  involved numbers are natural numbers, $\spcount_{<1}(\toSpanner{A},\doc, \w) + r_{min} - 1
  \geq c \cdot a$. Thus, $\frac{\spcount_{<1}(\toSpanner{A},\doc, \w) + (r_{min} - 1)
  }{c\cdot b} \geq q$, leading to the desired contradiction, as $r_{min}$ was
  assumed to be minimal.
  
  We have that $\frac{\spcount_{<1}(\toSpanner{A},\doc, \w) + r_{min}}{c\cdot b} =
  q = \frac{c \cdot a}{c \cdot b}$. It follows that
  \[
    \spcount_{<1}(\toSpanner{A},\doc, \w) = c \cdot a - r_{min}\;,
  \]
  which concludes the proof.
\end{proof}

 \section{Aggregate Approximation}
\label{approx}
Now that we have a detailed understanding on the complexity of computing exact
aggregates, we want to see in which cases the result can be approximated. We
only consider the situation where the exact problems are intractable and want to
understand when the considered aggregation problems can be approximated by fully
polynomial-time randomized approximation schemes (\fpras), and when the existence of
such an \fpras would contradict commonly believed conjectures, like $\rp \neq \np$
and the conjecture that the polynomial hierarchy does not collapse.

Based on the results for the computation of exact aggregates, we can already
give some insights into the possibility of approximation. That is,
Zuckerman~\cite{Zuckerman96} shows that $\#$SAT can not be approximated by an
\fpras unless $\np = \rp$. Furthermore, as shown by Dyer et
al.~\cite{DyerGGJ04}, this characterization extends to all problems which are
\sharpp-complete under parsimonious reductions. Therefore, due to
Theorems~\ref{thm:sw-sum-cnf}, and~\ref{thm:sumTropicalHard}, we have the
following corollary.

\begin{corollary}\label{cor:sumavgNoFpras}Unless $\np = \rp$, the problems
  $\sump[\fvsa,\SW]$, $\sump[\ufvsa,\regtrop]$, and $\avgp[\ufvsa,\regtrop]$ do not have an \fpras.
\end{corollary}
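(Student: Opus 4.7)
The plan is to invoke the standard fact that parsimonious reductions preserve approximability of counting problems, combined with Zuckerman's celebrated hardness of approximation for $\#$SAT. More precisely, Zuckerman~\cite{Zuckerman96} proved that $\#$SAT (equivalently, $\#$CNF) does not admit an \fpras unless $\np = \rp$, and Dyer et al.~\cite{DyerGGJ04} observed that this hardness transfers along parsimonious reductions: if $f$ reduces parsimoniously to $g$, then any \fpras for $g$ immediately yields an \fpras for $f$, because a $(1\pm\delta)$-approximation of $g(h(x))$ is literally a $(1\pm\delta)$-approximation of $f(x)$.

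With this in hand, I would simply revisit the three hardness proofs already established earlier in the article and verify that each of them is parsimonious. For $\sump[\fvsa,\SW]$, the proof of Theorem~\ref{thm:sw-sum-cnf} constructs, from a CNF formula $\phi$ with $c$ satisfying assignments, a pair $(A,\doc)$ with a fixed weight function $\w$ such that $\spsum(\toSpanner{A},\doc,\w)=c$; this is exactly a parsimonious reduction from $\#$CNF. For $\sump[\ufvsa,\regtrop]$ and $\avgp[\ufvsa,\regtrop]$, the proof of Theorem~\ref{thm:sumTropicalHard} is explicitly declared to be a parsimonious reduction from $\#$CNF (with the scaling constant $c = 2^n$ in the average case chosen precisely so that the computed average equals the number of satisfying assignments).

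Putting these two ingredients together, an \fpras for any of the three problems would compose with the parsimonious reductions above to yield an \fpras for $\#$CNF, and hence force $\np = \rp$ by Zuckerman's theorem. The only delicate point — and the one worth double-checking — is that the target quantities are indeed counts (nonnegative integers): for $\sump[\fvsa,\SW]$ the weight function $\w$ takes values in $\{-1,0,1\}$, so the resulting sum may a priori be negative or zero, but in that construction the sum equals the number of satisfying assignments and is therefore a nonnegative integer, allowing the multiplicative approximation guarantee of the \fpras to be applied in the usual way. No further work is required beyond citing the two external results and pointing at the two theorems whose proofs furnish the needed parsimonious reductions.
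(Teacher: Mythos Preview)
Your proposal is correct and follows essentially the same approach as the paper: invoke Zuckerman's inapproximability of $\#$SAT and the Dyer et al.\ transfer along parsimonious reductions, then point to the parsimonious reductions established in Theorems~\ref{thm:sw-sum-cnf} and~\ref{thm:sumTropicalHard}. Your additional remark about verifying that the target values are nonnegative integers is a reasonable sanity check, but the paper simply omits it since the parsimonious reductions guarantee the output equals the number of satisfying assignments.
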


Arenas et at.~\cite[Corollary 3.3]{ArenasCJR19} showed that every function in
\spanl admits an \fpras. Therefore, due to Theorem~\ref{thm:sw-sum-spanl},
we have the following corollary.

\begin{corollary}\label{cor:sumSpanl}$\sump[\fvsa,\SW_{\nat}]$ has an \fpras.
\end{corollary}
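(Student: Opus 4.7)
The plan is essentially to chain two existing results. First, I would invoke Theorem~\ref{thm:sw-sum-spanl}, which establishes that $\sump[\fvsa,\SW_{\nat}]$ is \spanl-complete; in particular, the problem lies in \spanl. I would then apply the result of Arenas et al.~\cite[Corollary 3.3]{ArenasCJR19}, cited in the statement just before the corollary, which asserts that every function in \spanl admits an \fpras. Composing these two facts immediately yields the desired \fpras for $\sump[\fvsa,\SW_{\nat}]$.

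Concretely, I would first recall the \NL Turing machine $M$ constructed in the proof of Theorem~\ref{thm:sw-sum-spanl}: on input $(A,\doc,\w)$, $M$ nondeterministically guesses a $\doc$-tuple $\tup$, verifies $\tup \in \toSpanner{A}(\doc)$ in logarithmic space, computes the natural number weight $\w(\doc,\tup)$ (feasible in \NL because $\w \in \SW_{\nat}$ is a finite $\nat$-relation over a constant number of variables), and then branches into exactly $\w(\doc,\tup)$ accepting paths. Thus $|M(A,\doc,\w)| = \spAsumagg$, witnessing membership in \spanl. Then I would simply hand this witness to the generic \spanl-to-\fpras construction of Arenas et al., obtaining an \fpras for $\sump[\fvsa,\SW_{\nat}]$.

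I do not expect any real obstacle here, since the corollary is a two-line consequence of results already in the literature and already cited in the paper. The only point worth a brief remark is that the reduction to \spanl genuinely requires the restriction to natural-number weights (so that ``branch into $\w(\doc,\tup)$ accepting paths'' makes sense and the output of $M$ is a natural number); extending the \fpras to positive rational weights would be a separate argument via clearing denominators, which is outside what this corollary claims.
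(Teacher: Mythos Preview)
Your proposal is correct and matches the paper's own argument exactly: the paper derives the corollary by combining Theorem~\ref{thm:sw-sum-spanl} (membership in \spanl) with the Arenas et al.\ result that every function in \spanl admits an \fpras. Your additional recap of the \NL machine construction is accurate but not needed for the corollary itself.
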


In the remainder of this section, we will revisit the other intractable cases of
spanner aggregation and study whether or not approximation is possible.

\subsection{Approximation is Hard at First Sight}
We begin with some inapproximability results. For instance, as we show now, the
existence of an \fpras for the problems \minp, \maxp with \regnum weight
functions would imply a collapse of the polynomial hierarchy, even when spanners
are unambiguous. Furthermore, for \maxp and \regtrop weight functions the same
result holds.

\begin{theorem}\label{thm:minapx}$\minp[\ufvsa, \regnum]$ and $\maxp[\ufvsa, \regnum]$ do not have an \fpras, unless the polynomial hierarchy collapses to the
  second level.
\end{theorem}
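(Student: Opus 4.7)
The plan is to show that an \fpras for either problem would yield a $\bpp$ decision procedure for SAT. Since $\np \subseteq \bpp$ forces $\ph \subseteq \bpp \subseteq \sigmatwop$ by the Sipser--G\'acs--Lautemann theorem (\bpp being closed under complement and low for itself), this collapses the polynomial hierarchy to its second level.

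The starting point is the metric reduction from the \optp-complete problem WSAT constructed in the proof of Theorem~\ref{thm:minmaxRegintractable}: given a CNF formula $\phi$ with $m$ clauses and weights $w_1, \ldots, w_m \in \nat$, one builds in polynomial time an instance $(A, \doc, \w)$ with $A \in \ufvsa$ and $\w \in \regnum$ satisfying $\spAmaxagg = \text{WSAT}(\phi)$ (and, symmetrically, $\spAminagg = -\text{WSAT}(\phi)$ for the \minp reduction). I would specialize to uniform clause weights $w_1 = \cdots = w_m = 1$, under which $\text{WSAT}(\phi)$ equals the MAX-SAT value of $\phi$: exactly $m$ when $\phi$ is satisfiable and at most $m-1$ otherwise.

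Suppose $\maxp[\ufvsa,\regnum]$ had an \fpras $\mathcal{A}$. Run $\mathcal{A}$ on the reduced instance with accuracy parameter $\delta \eqdef \tfrac{1}{2m}$; the runtime remains polynomial in $|\phi|$. By the \fpras guarantee, with probability at least $\tfrac{3}{4}$ the returned estimate lies in $[f(1-\delta),\, f(1+\delta)]$ for $f = \spAmaxagg$. In the satisfiable case ($f=m$) this lower bound equals $m - \tfrac{1}{2}$, while in the unsatisfiable case ($f \leq m-1$) the upper bound is at most $(m-1)(1+\tfrac{1}{2m}) = m - \tfrac{1}{2} - \tfrac{1}{2m} < m - \tfrac{1}{2}$. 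Thresholding the output at $m - \tfrac{1}{2}$ therefore decides SAT with bounded one-sided error, yielding $\np \subseteq \bpp$ and the promised collapse. The \minp case is identical after sign-flipping, with threshold $-m + \tfrac{1}{2}$.

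The main thing to get right is the calibration of $\delta$: the additive gap of $1$ between the satisfiable and unsatisfiable values under uniform clause weights must dominate the relative error of the \fpras, and choosing $\delta = \Theta(1/m)$ achieves this while keeping the runtime polynomial. The remaining checks are routine, since the metric reductions of Theorem~\ref{thm:minmaxRegintractable} preserve exact values, so no further approximation loss is incurred when transporting $\mathcal{A}$ back to WSAT.
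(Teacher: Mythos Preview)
Your argument is correct, but it differs from the paper's own proof in an instructive way. The paper does not reuse the WSAT reduction from Theorem~\ref{thm:minmaxRegintractable}; instead it reinterprets the automaton $W'$ from the MSA reduction over the numerical semiring, obtaining an instance in which $\w(\doc,\tup)$ equals the number of clauses \emph{not} satisfied by the assignment encoded by $\tup$. Consequently $\spAminagg = 0$ exactly when $\phi$ is satisfiable and $\spAminagg \geq 1$ otherwise (and symmetrically for \maxp after a sign flip). Because one side of the dichotomy is exactly zero, any \fpras with a fixed constant $\delta$ (the paper takes $\delta = 0.4$) must output $0$ in the satisfiable case and something bounded away from $0$ in the unsatisfiable case, immediately yielding the \bpp decision procedure. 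Your route, by contrast, keeps the WSAT reduction intact and separates the values $m$ and $m-1$ by choosing $\delta = \tfrac{1}{2m}$; this is still polynomial in $1/\delta$ and works, but requires the extra calibration step and the arithmetic you carried out. The paper's approach buys simplicity (a constant $\delta$ and the ``zero trap'' of multiplicative approximation), while yours buys reuse: you invoke the existing reduction verbatim without building a new weight automaton.
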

\begin{proof}
  Assume there is an \fpras for $\minp[\ufvsa, \regnum]$. We will show that such
  an \fpras implies that the \np-complete problem SAT is in \bpp, which implies
  that the polynomial hierarchy collapses to the second level.\footnote{$\np
    \subseteq \bpp$ implies that $\ph \subseteq \bpp$ (cf.\
    Zachos~\cite{Zachos88}) and as $\bpp \subseteq (\pitwop\cap\sigmatwop)$
    (cf.\ Lautemann~\cite{Lautemann83}) the polynomial hierarchy collapses on
    the second level. Furthermore, as $\bpp$ is closed under complement, $\conp
    \subseteq \bpp$ implies that $\np \subseteq \bpp$ resulting in the same
    collapse of the polynomial hierarchy.}

  Let $\phi(x_1,\ldots,x_n)$ be a Boolean formula, given in CNF, and let $A,
  \doc,$ and $W'$ be as defined in the proof for $\maxp[\ufvsa,\regtrop]$ of
  Theorem~\ref{thm:minmaxRegintractable}, where $W'$ is interpreted as a
  weighted \vset-automaton over the numerical semiring. Observe that, due to
  $\srone = 1$ and $\srzero = 0$, it follows that
  $\repspnrk{W'}{\numerical}(\doc,\tup) \geq 1$ if the valuation $\alpha_\tup$
  encoded by $\tup$ does not satisfy at least one clause of $\phi$ and $0$
  otherwise. Let $\w$ be the weight function encoded by $W'$.

  For the sake of contradiction, assume that there is an \fpras for
  $\minp[\ufvsa, \regnum]$ and let $\delta = 0.4$. Assume that $\phi$ is
  satisfiable, thus $\spAminagg = 0$. Then the \fpras must return $0$ with
  probability at least $\frac{3}{4}.$ On the other hand, if $\phi$ is not
  satisfiable, the \fpras must return a value $x \geq (1-\delta) \cdot 1 = 0.6$
  with probability at least $\frac{3}{4}$. Consider the algorithm which calls
  the \fpras and accepts if the approximation is 0, and rejects otherwise. This
  algorithm is a \bpp algorithm for SAT, resulting in the desired contradiction.

  The proof for $\maxp[\ufvsa,\regnum]$ is analogous. The only difference is
  that the final weight function of $W'$ is multiplied by $-1$, that is, $W'$
  assigns weight $-x$ to each tuple, encoding a valuation $\alpha$ which
  does not satisfy $x$ clauses of $\phi$.  
\end{proof}

\begin{theorem}\label{thm:maxapx}$\maxp[\ufvsa, \regtrop]$ cannot be approximated by an \fpras, unless the
  polynomial hierarchy collapses to the second level.
\end{theorem}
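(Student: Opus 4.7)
My plan is to adapt the argument from the proof of Theorem~\ref{thm:minapx}, replacing numerical weights by tropical weights and aiming for a $0/1$ gap between satisfiable and unsatisfiable CNF instances. Given a CNF formula $\phi(x_1,\ldots,x_n)$, I would reuse the encoding from the proof of Theorem~\ref{thm:minmaxRegintractable} for $\maxp[\ufvsa,\regtrop]$: take $\doc = a^n$ and the \ufvsa $A$ whose output tuples are in one-to-one correspondence with truth assignments $\alpha_\tup$ of $\phi$.

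The central step is to construct a weight function $\w \in \regtrop$ with
\[
  \w(\doc,\tup) \;=\;
  \begin{cases}
    1 & \text{if } \alpha_\tup \models \phi,\\
    0 & \text{otherwise.}
  \end{cases}
\]
I represent $\w$ as $W = W_A \cup W'$, where $W_A$ is a copy of $A$ with initial weight $0$ on the start state, transition weights all $0$, and final weight $1$ on every accepting state. Since $\srtimes = +$ in the tropical semiring, $W_A$ assigns weight $1$ to every tuple in $\toSpanner{A}(\doc)$. The automaton $W'$ is the clause-by-clause construction already built in the proof of Theorem~\ref{thm:minmaxRegintractable}: it admits a run of weight $\srone = 0$ on $\tup$ precisely when $\alpha_\tup$ falsifies at least one clause of $\phi$, and no accepting run otherwise. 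Because the union of $\K$-annotators corresponds to $\srplus = \min$ in the tropical semiring, $W$ assigns weight $\min(1,0) = 0$ to tuples whose assignment falsifies $\phi$ and weight $\min(1,\infty) = 1$ to tuples whose assignment satisfies $\phi$. Hence $\spAmaxagg = 1$ if $\phi$ is satisfiable and $\spAmaxagg = 0$ otherwise.

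Suppose now, toward a contradiction, that $\maxp[\ufvsa,\regtrop]$ admits an \fpras $\mathcal{A}$. Fix $\delta = \tfrac{1}{2}$ and run $\mathcal{A}$ on $(A,\doc,W,\delta)$. If $\phi$ is unsatisfiable, the \fpras definition forces $\mathcal{A}(x,\delta) = 0$ with probability at least $\tfrac{3}{4}$, since $|f(x)| = 0$; if $\phi$ is satisfiable, $\mathcal{A}$ must output a value in $[\tfrac{1}{2}, \tfrac{3}{2}]$ with probability at least $\tfrac{3}{4}$. Declaring $\phi$ satisfiable exactly when the output of $\mathcal{A}$ is nonzero then gives a \bpp decision procedure for SAT, so $\np \subseteq \bpp$, and by the same Zachos/Lautemann argument invoked in the proof of Theorem~\ref{thm:minapx}, the polynomial hierarchy collapses to the second level.

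I do not anticipate a significant technical obstacle here, since the delicate clause-branch construction of $W'$ has already been carried out in Theorem~\ref{thm:minmaxRegintractable}. The only subtle point is to place the nonzero contribution of $W_A$ on its final weight rather than on a transition, so that the tropical minimum of $W_A$ and $W'$ realizes the intended $0/1$ gap; once that is done, the remainder of the proof is just a rerun of the approximation-hardness template from Theorem~\ref{thm:minapx}.
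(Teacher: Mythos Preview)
Your proposal is correct and follows essentially the same approach as the paper: reuse the $\ufvsa$ $A$, document $a^n$, and clause-branch automaton $W'$ from Theorem~\ref{thm:minmaxRegintractable}, obtain a $0$-versus-positive gap in $\spAmaxagg$ according to whether $\phi$ is satisfiable, and conclude that an \fpras would yield a \bpp algorithm for SAT. The only difference is cosmetic: the paper reuses the weight function from the MSA reduction verbatim (so satisfying assignments get weight $n_{\alpha_\tup}\geq 1$, requiring the harmless assumption that the all-false assignment does not satisfy $\phi$), whereas you modify $W_A$ to have final weight~$1$, yielding an exact $0/1$ gap and avoiding that assumption.
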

\begin{proof}
  Let $\phi(x_1,\ldots,x_n)$ be a Boolean formula, given in CNF. We assume,
  \mbox{w.l.o.g.}, that the valuation which assigns false to all variables does
  not satisfy $\phi$. Let $A, \doc,$ and $\w$ be as defined in the proof for
  $\maxp[\ufvsa,\regtrop]$ in the proof of
  Theorem~\ref{thm:minmaxRegintractable}. Thus, $\spAmaxagg \geq 1$ if $\phi$ is
  satisfiable and $\spAmaxagg = 0$ if $\phi$ is not satisfiable.

  For the sake of contradiction, assume that there is an \fpras for
  $\maxp[\ufvsa, \regtrop]$ and let $\delta = 0.4$. Assume that $\phi$ is
  satisfiable, thus $\spAmaxagg \geq 1$. Then the \fpras must return a value $x
  \geq (1-\delta) \cdot 1 = 0.6$ with probability at least $\frac{3}{4}.$ On
  the other hand, if $\phi$ is not satisfiable, the \fpras must return $0$ with
  probability at least $\frac{3}{4}.$ Therefore, we can obtain a \bpp algorithm
  for SAT as follows. The algorithm first calls the \fpras, accepts if the
  approximation is bigger than 0, and rejects otherwise.
\end{proof}

Concerning \sump and \avgp the only case which is not resolved by
Corollary~\ref{cor:sumavgNoFpras} is the case of $\avgp[\fvsa,\SW]$. We show now
that, under reasonable complexity assumptions, this problem can also not be
approximated by an \fpras.

\begin{theorem}\label{thm:negSumAvg}$\avgp[\fvsa,\SW]$ cannot be approximated by an \fpras, unless the polynomial
  hierarchy collapses to the second level.
\end{theorem}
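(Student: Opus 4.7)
The plan is to reduce SAT to approximating $\avgp[\fvsa,\SW]$, following the same recipe used for Theorems~\ref{thm:minapx} and~\ref{thm:maxapx}: construct a gap instance where the true aggregate is exactly $0$ when the formula is unsatisfiable and strictly positive otherwise, then exploit the fact that an FPRAS with multiplicative error must preserve zero exactly.

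Given a CNF formula $\phi$ with $n$ variables, I would re-use the construction from the proof of Theorem~\ref{thm:sw-sum-cnf} verbatim: let $\doc \eqdef a^n \cdot - \cdot 1$, let $A \eqdef A_1 \cup A_{-1}$ with $A_1, A_{-1}$ as defined there, and let $\w \in \SW$ be the $+1/-1/0$ single-variable weight function from that proof. Because tuples of $A_1$ have $\doc_{\tup(x)} = 1$ while tuples of $A_{-1}$ have $\doc_{\tup(x)} = -1$, the two sub-spanners produce disjoint tuple sets, so
\[
\spsum(\toSpanner{A},\doc,\w) \;=\; c \qquad \text{and} \qquad \spcount(\toSpanner{A},\doc) \;=\; 2^{n+1} - c\;,
\]
where $c$ denotes the number of satisfying assignments of $\phi$. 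Hence $\spavg(\toSpanner{A},\doc,\w) = c/(2^{n+1} - c)$, which equals $0$ when $\phi$ is unsatisfiable and is at least $1/(2^{n+1}-1) > 0$ when $\phi$ is satisfiable.

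Now suppose, for contradiction, that an FPRAS $\mathcal{A}$ exists for $\avgp[\fvsa,\SW]$; invoke it with a fixed $\delta < 1$ (say $\delta = 1/2$). By the FPRAS definition, when $f(x) = 0$ the inequality $|\mathcal{A}(x,\delta) - f(x)| \leq \delta \cdot |f(x)|$ forces $\mathcal{A}(x,\delta) = 0$ with probability at least $3/4$, while when $f(x) > 0$ the output lies in the interval $[(1-\delta)f(x),(1+\delta)f(x)]$, which is strictly positive, with probability at least $3/4$. Therefore the procedure "accept iff $\mathcal{A}$'s output is strictly positive" decides SAT with one-sided error bounded by $1/4$, placing SAT in \bpp. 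By Zachos' result this yields $\np \subseteq \bpp$, which combined with Lautemann's theorem $\bpp \subseteq \sigmatwop \cap \pitwop$ collapses the polynomial hierarchy to the second level, giving the desired contradiction.

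The only subtle step is the observation that the FPRAS definition, through its multiplicative error bound, mandates an \emph{exact} zero output on inputs with value zero; this is what makes the exponentially small gap between $0$ and $1/(2^{n+1}-1)$ harmless. Everything else is immediate from the already-established construction of Theorem~\ref{thm:sw-sum-cnf} and the standard $\np \subseteq \bpp$ collapse, so I expect no real obstacles beyond checking that the two sub-spanners contribute disjoint tuple sets (which follows from their $x$-spans being at different positions).
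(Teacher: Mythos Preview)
Your proposal is correct and follows essentially the same argument as the paper: both reuse the construction of Theorem~\ref{thm:sw-sum-cnf} to obtain an average that is $0$ exactly when $\phi$ is unsatisfiable and strictly positive otherwise, then observe that an \fpras must output~$0$ on zero-valued inputs, yielding a \bpp algorithm for SAT and the second-level collapse. The only cosmetic difference is that you explicitly compute $\spcount(\toSpanner{A},\doc)=2^{n+1}-c$, whereas the paper leaves this denominator symbolic.
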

\begin{proof}
  We will show that such an \fpras implies that the \np-complete problem SAT is
  in \bpp, which implies that the polynomial hierarchy collapses to the second
  level.

  To this end, let $A,\doc$ and $\w$ be as constructed in the proof of
  Theorem~\ref{thm:sw-sum-cnf}. Recall that given a propositional formula
  $\phi$ in CNF, we have that $\spAsumagg = c$, where $c$ is the number of
  satisfying assignments of $\phi$.

  Assume there is an \fpras for $\avgp[\fvsa,\SW]$ and let $\delta = 0.5$. Assume
  that $\phi$ is not satisfiable. Then the \fpras on input $A, \doc, \w$ must
  return $0$ with probability at least $\frac{3}{4}.$ On the other hand, if
  $\phi$ is satisfiable, thus $c > 0$, the \fpras must return a value $x \geq
  (1-\delta) * \spavg(\toSpanner{A},\doc,\w) = \frac{1}{2} \cdot
  \frac{c}{\spAcountagg} > 0$, with probability at least $\frac{3}{4}$.
  Therefore, the algorithm which first approximates $\spavg(\toSpanner{A}, \doc,
  \w)$ with $\delta = 0.5$, rejects if the approximation is $0$ and accepts
  otherwise is a \bpp algorithm for SAT, implying that $\np \subseteq \bpp$,
  which implies that the polynomial hierarchy collapses to the second level.
\end{proof}

We now turn to the quantile problem. It turns out that this problem is difficult
to approximate even if the weight functions only return 0 or 1.

\begin{theorem}\label{thm:noFprasQuant}Let $0 < q < 1$. Then, $\quantp[\fvsa,\SW]$ cannot be approximated by an
  \fpras, unless the polynomial hierarchy collapses to the second level.
\end{theorem}
\begin{proof}
  We will show that an \fpras for $\quantp[\fvsa, \SW]$ implies a \bpp algorithm
  for SAT. To this end, let $\phi$ be a propositional formula $\phi$ in CNF.
  Assume that $q = \frac{1}{2}$ and let $A$ and $\doc$ be as constructed in the
  proof of Theorem~\ref{thm:sw-sum-cnf}. However, let $\w$ be the weight
  function which is represented by the $\rationals$-Relation $R$ over $\{x\}$
  with
  \[
    R(\doc) \eqdef
    \begin{cases}
      1 & \text{if } \doc = 1\\
      0 & \text{otherwise.}
    \end{cases} 
  \]
  Recall from the construction of $A$ and $\doc$ that $A$ is the union of two
  automata $A_1,A_{-1}$, such that $\spcount(\toSpanner{A_1},\doc) = 2^n$ and
  $\spcount(\toSpanner{A_{-1}},\doc) = s$, where $s$ is the number of
  non-satisfying assignments for $\phi$, furthermore, $\tup \in
  \toSpanner{A_1}(\doc)$ if and only if $\doc_{\tup(x)} = 1$ and $\tup \in
  \toSpanner{A_{-1}}(\doc)$ if and only if $\doc_{\tup(x)} = -1$. We observe
  that $R(-1) = 0$ and therefore, for every $\tup \in \toSpanner{A}(\doc)$ we
  have that
  \[
    \w(\doc,\tup) =
    \begin{cases}
      1 &\text{if } \tup \in \toSpanner{A_1}(\doc)\\
      0 &\text{if } \tup \in \toSpanner{A_{-1}}(\doc)\;.
    \end{cases}
  \]
  Thus, $\spAqquantagg{\frac{1}{2}} = 0$ if and only if $\phi$ is not
  satisfiable.

  Assuming there is an \fpras for $\quantp[\fvsa,\SW]$, one can decide SAT with a
  probability of $\frac{3}{4}$ by approximating $\spAquantagg$ with $\delta =
  0.5$, rejecting if the approximation is $0$ and accepting otherwise. This,
  however, implies that $\np \subseteq \bpp$, which implies a collapse of the
  polynomial hierarchy on the second level.

  The general case for $0 < q < 1$ follows by slightly adopting the previous
  construction. That is, assume that $q = \frac{a}{b}$. Due to
  $0 \leq q \leq 1$, it must hold that $1 \leq a < b$. We construct a
  \vset-automaton $A^\prime$ and a document $\doc^\prime$ as follows. Let
  $\sigma \notin \alphabet$ be a new alphabet symbol. The document $\doc^\prime$
  consists of $b$ copies of $\doc$, separated by $\sigma$ and $A^\prime$
  consists of $a$ copies of $A_{-1}$ and $b-a$ copies of $A_1$.  More formally,
  \[
    \doc^\prime \eqdef (\doc \cdot \sigma)^{b}\;.
  \]
  Furthermore, slightly abusing notation, we define
  \[
    A^\prime \eqdef (A_{-1} \cdot \sigma)^a\cdot (A_1
    \cdot \sigma)^{b-a}\;.
  \]
  We observe that on input document $\doc^\prime$, the automaton $A^\prime$
  accepts exactly $2^n\cdot (b-a)$ tuples $\tup$ with $\w(\doc^\prime,\tup) = 1$
  and $s \cdot a$ tuples with weight $0$. Therefore, $\spqquantagg{\frac{a}{b}}
  = 0$ if and only if
  \[
    \frac{s \cdot a}{2^n \cdot (b-a) + s \cdot a} \geq
    \frac{a}{b}\;.
  \]
  Solving this equation for $s$, it holds that $\spqquantagg{\frac{a}{b}} = 0$
  if and only if $s = 2^n$ and therefore $\spqquantagg{\frac{a}{b}} = 0$ if and
  only if $\phi$ is not satisfiable.

  The rest of the proof is analogous to the case that $q = \frac{1}{2}$.
\end{proof}

When the spanners are unambiguous, the simplest intractable case for \quantp is
the one with \UREG weight functions (see Table~\ref{tab:overview}). Again, we
can show that approximation is hard.
\begin{theorem}\label{thm:noUregFprasQuant}Let $0 < q < 1$. Then, $\quantp[\ufvsa,\uregtrop]$ cannot be approximated by
  an \fpras, unless the polynomial hierarchy collapses on the second level.
\end{theorem}
\begin{proof}
  We show that an \fpras for $\quantp[\ufvsa,\uregtrop]$ implies a \bpp
  algorithm for the \np-complete Partition problem. To this end, let $S = \{s_1,
  \ldots, s_n\}$ be a set of natural numbers. Furthermore, let $A, \doc, \w$ be
  constructed from $S$ as in the proof of Lemma~\ref{lem:countlk} with $k =
  0$.

  Per construction of $A,\doc$ and $\w$, every tuple $\tup \in
  \toSpanner{A}(\doc)$ corresponds to a partition of S, such that the partition
  is perfect if and only if $\w(\doc,\tup) = 0$. Furthermore, due to the
  partition problem being symmetrical, for every tuple $\tup \in
  \toSpanner{A}(\doc)$ with $\w(\doc,\tup) = k$ there is a tuple $\tup^\prime
  \in \toSpanner{A}(\doc)$ with $\w(\doc,\tup) = -k$. Thus,
  $\spAqquantagg{\frac{1}{2}} = 1$ if and only if there is a tuple $\tup \in
  \toSpanner{A}(\doc)$ with $\w(\doc,\tup) = 0$.

  Let $q = \frac{1}{2}$. Assuming there is an \fpras for
  $\quantp[\ufvsa,\uregtrop]$, one can decide Partition with a probability of
  $\frac{3}{4}$ by approximating $\spquant(\toSpanner{A}, \doc, \w)$ with
  $\delta = 0.5$, accepting if the approximation is $0$ rejecting otherwise.
  This implies that the algorithm accepts if and only if there is a perfect
  partition and therefore, $\np \subseteq \bpp$, which implies a collapse of the
  polynomial hierarchy on the second level.

  For the general case, assume that $q = \frac{a}{b}$. We observe that due to $0
  < q < 1$, it must hold that $a < b$. By Observation $(4)$ in the proof of
  Lemma~\ref{lem:countlk}, $\spcount(\toSpanner{A},\doc) = 2^{n+1}$. As in the
  proof of Theorem~\ref{thm:quantUregHard}, we construct a \vset-automaton
  $A^\prime$, a document $\doc^\prime$ and a weight function $\w^\prime$,
  represented by the weighted automaton $W^\prime \in \uregtrop$ , such that
  $\spquant(A^\prime,\doc^\prime,\w^\prime) = 0$ if and only if $S$ has a
  perfect partition. By Lemma~\ref{lem:spannerWithKtuples}, there are
  \vset-automata $A_{-1},A_1 \in \ufvsa$ and documents $\doc_{-1},\doc_1 \in
  \docs$ such that $\spcount(\toSpanner{A_{-1}},\doc_{-1}) = (a-1) \cdot 2^n$
  and $\spcount(\toSpanner{A_{1}},\doc_{1}) = (b-a-1) \cdot 2^n$. Let $W_{-1}$
  (resp., $W_1$) be the same as $A_{-1}$ (resp., $A_1$) interpreted as weighted
  automaton over the tropical semiring, such that all transitions are assigned
  weight $0$ and the final weight function assigns weight $-1$ (resp., $1$) to
  all accepting states. Let $\w_{-1}$ (resp., $\w_1$) be the weight function,
  represented by $W_{-1}$ (resp., $W_1$) Thus, $\w_{-1}(\doc_{-1},\tup) = -1$ if and
  only if $\tup \in \toSpanner{A_{-1}}(\doc_{-1})$ and $\w_{1}(\doc_{1},\tup) = 1$ if and
  only if $\tup \in \toSpanner{A_{1}}(\doc_{1})$. Let $\sigma$ be a new alphabet
  symbol. We construct
  $A^\prime,\doc^\prime,$ and $W^\prime$ as follows.
  \begin{align*}
    \doc^\prime &= \doc_{-1} \cdot \sigma \cdot \doc \cdot \sigma \cdot \doc_{1}\\
    A^\prime &= (A_{-1} \cdot \sigma \cdot \doc \cdot \sigma \cdot \doc_1) \lor (\doc_{-1} \cdot \sigma \cdot A \cdot \sigma \cdot \doc_1) \lor (\doc_{-1} \cdot \sigma \cdot \doc \cdot \sigma \cdot A_1)\\
    W^\prime &= (W_{-1} \cdot \sigma \cdot \doc \cdot \sigma \cdot \doc_1) \lor (\doc_{-1} \cdot \sigma \cdot W \cdot \sigma \cdot \doc_1) \lor (\doc_{-1} \cdot \sigma \cdot \doc \cdot \sigma \cdot W_1)\;.
  \end{align*}
  Furthermore, let $\w^\prime$ be the weight function, represented by
  $W^\prime$. It follows that
  \begin{align*}
    \spcount_{< 0}(\toSpanner{A^\prime},\doc^\prime,\w^\prime)
    &= (a-1)\cdot 2^n + \spcount_{< 0}(\toSpanner{A},\doc,\w)\\
    \spcount_{\leq 0}(\toSpanner{A^\prime},\doc^\prime,\w^\prime)
    &= (a-1)\cdot 2^n + \spcount_{\leq 0}(\toSpanner{A},\doc,\w)\\
    \spcount(\toSpanner{A^\prime},\doc^\prime) 
    &= (a-1)\cdot 2^n + 2 \cdot 2^n + (b-a-1)\cdot 2^n = b \cdot 2^n\;.
  \end{align*}
  We make a case distinction on $S$. If $S$ has a perfect partition,
  $\spcount_{<0}(\toSpanner{A},\doc,\w) < 2^n$ and $\spcount_{\leq
    0}(\toSpanner{A},\doc,\w) \geq 2^n$. Thus,
  $\spquant(A^\prime,\doc^\prime,\w^\prime) = 0$. Otherwise, if $S$ has no 
  perfect partition, $\spcount_{<0}(\toSpanner{A},\doc,\w) = 2^n$ and therefore
  $\spquant(A^\prime,\doc^\prime,\w^\prime) < 0$. Therefore,
  $\spquant(A^\prime,\doc^\prime,\w^\prime) = 0$ if and only if $S$ has a
  perfect partition. This concludes the proof.
\end{proof}

We note that the case of approximating $\quantp[\ufvsa,\uregnum]$ does not
follow analogous to the proof for $\quantp[\ufvsa,\uregtrop]$. The main reason
is the fact that $\#$Partition can be encoded into a weight function automaton
$\w_\tropical \in \uregtrop$, such that perfect partitions correspond to tuples
with weight $0$, whereas $\#$Product-Partition is encoded into a weight
function $\w_\numerical \in \uregnum$, such that perfect product partitions
correspond to tuples with weight $1$. Furthermore, all weights assigned by
$\w_\tropical$ are integers, whereas $\w_\numerical$ assigns rational numbers.
Therefore it is not obvious whether or not $\quantp[\ufvsa,\uregnum]$ can
be approximated by an \fpras. This case is left open for future research.

\subsection{When an FPRAS is Possible}
We show that Theorem~\ref{thm:negSumAvg} is very much on the intractability
frontier: it shows that approximation is intractable if weight functions can
assign $1$ and $-1$. On the other hand, if the weight functions are restricted
to \emph{nonnegative} numbers, then approximating \sump and \avgp is possible
with an \fpras.

\begin{theorem}\label{thm:fprasSumAvg}$\sump[\fvsa,\SW_{\Q_+}]$ and $\avgp[\fvsa,\SW_{\Q_+}]$ can be
  approximated by an \fpras.
\end{theorem}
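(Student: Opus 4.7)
The plan is to reduce $\sump[\fvsa,\SW_{\Q_+}]$ to polynomially many calls of the \fpras for $\countp[\fvsa]$ guaranteed by Theorem~\ref{thm:countResults}. Because the weight function $\w \in \SW_{\Q_+}$ depends only on a constant-size set $X \subseteq \svars$ with $|X| \leq c$, the projected set $T \eqdef \pi_X(\toSpanner{A})(\doc)$ has only polynomially many elements by Corollary~\ref{cor:tuplesPerSpanner}, and it can be materialized in polynomial time exactly as in the proof of Theorem~\ref{thm:minmaxfp}. For each $\tupu \in T$ I would build, in polynomial time, a functional \vset-automaton $A_\tupu \eqdef A \join A'_\tupu$ whose extractions on $\doc$ are precisely the $\tup \in \toSpanner{A}(\doc)$ with $\pi_X \tup = \tupu$, where $A'_\tupu$ is the small (unambiguous) \vset-automaton that uniquely extracts $\tupu$ on $\doc$; that this join is polynomial-size and functional follows from Theorem~\ref{theo:closed-algebra}. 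This yields the identity
\[
  \spsum(\toSpanner{A},\doc,\w) \;=\; \sum_{\tupu\in T}\,\w(\doc,\tupu)\cdot c_\tupu, \qquad c_\tupu \eqdef \spcount(\toSpanner{A_\tupu},\doc).
\]

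Next, for each $\tupu \in T$ I would invoke the \countp \fpras with approximation parameter $\delta$, boosted to success probability $1-\tfrac{1}{4|T|}$ by the standard median-of-$O(\log|T|)$ trick, and take a union bound so that every $c_\tupu$ is simultaneously approximated within a multiplicative $(1\pm\delta)$ factor with probability at least $\tfrac34$. Let $\tilde c_\tupu$ denote these estimates, and set $\tilde S \eqdef \sum_{\tupu\in T} \w(\doc,\tupu)\cdot \tilde c_\tupu$. The crucial step is the error analysis: because all $\w(\doc,\tupu)\geq 0$ and all $c_\tupu\geq 0$, the triangle inequality tightens to
\[
  |\tilde S - \spsum(\toSpanner{A},\doc,\w)| \;\leq\; \sum_{\tupu\in T}\w(\doc,\tupu)\cdot|\tilde c_\tupu - c_\tupu| \;\leq\; \delta \sum_{\tupu\in T}\w(\doc,\tupu)\, c_\tupu \;=\; \delta\cdot \spsum(\toSpanner{A},\doc,\w),
\]
so $\tilde S$ is an \fpras estimate, and the whole procedure runs in time polynomial in $|A|,|\doc|,|\w|$ and $1/\delta$.

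For $\avgp[\fvsa,\SW_{\Q_+}]$ I would additionally approximate $C \eqdef \spcount(\toSpanner{A},\doc)$ by $\tilde C$ using the \countp \fpras with parameter $\delta' \eqdef \delta/3$, and output $\tilde S/\tilde C$ with $\tilde S$ also computed using parameter $\delta'$. A short calculation shows that $\tfrac{1+\delta'}{1-\delta'} \leq 1+\delta$ and $\tfrac{1-\delta'}{1+\delta'}\geq 1-\delta$ for this choice, so $\tilde S/\tilde C$ approximates $\spavg(\toSpanner{A},\doc,\w)$ within a $(1\pm\delta)$ factor, as required.

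The only real obstacle is the error analysis above, which makes clear why \emph{nonnegativity} of the weights is essential: with weights of arbitrary sign the bound on $|\tilde S - \spsum|$ would only be $\delta\cdot\sum_{\tupu} \w(\doc,\tupu)\, c_\tupu$ in absolute value, a quantity that can be arbitrarily larger than $\delta\cdot|\spsum|$ when positive and negative contributions nearly cancel. This mirrors the hardness obstruction in Theorem~\ref{thm:negSumAvg}, and non-negativity is exactly what closes the gap between the signed and absolute bounds so that the \fpras property survives the aggregation.
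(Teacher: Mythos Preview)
Your proof is correct, but it takes a genuinely different route from the paper for \sump. The paper first shows that $\sump[\fvsa,\SW_{\nat}]$ is in \spanl (Theorem~\ref{thm:sw-sum-spanl}) and then invokes the black-box result of Arenas et al.\ that every \spanl function admits an \fpras (Corollary~\ref{cor:sumSpanl}); the extension from $\nat$ to $\Q_+$ is handled by clearing denominators via the lcm. You, by contrast, bypass the \spanl machinery entirely: you exploit the constant-width structure of \SW to materialize the polynomial-size projection $T = \pi_X(\toSpanner{A})(\doc)$, build one functional automaton $A_\tupu$ per projected tuple, and reduce directly to polynomially many instances of the \countp \fpras. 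The nonnegativity of the weights then enters exactly where you use it, namely in bounding $|\tilde S - S|$ by $\delta S$ rather than by $\delta \sum_\tupu \w(\doc,\tupu)\,c_\tupu$ in absolute-value form. Your argument is more self-contained and makes the role of $\Q_+$ transparent; the paper's is shorter once the \spanl result is on the table. For \avgp, your treatment and the paper's are essentially identical (both take the quotient of the two \fpras estimates with parameter $\delta/3$ and verify $\frac{1\pm\delta/3}{1\mp\delta/3}$ lies in $[1-\delta,1+\delta]$).
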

\begin{proof}
  From Corollary~\ref{cor:sumSpanl} and Theorem~\ref{thm:countResults} we conclude that there is an
  \fpras for each of the problems $\sump[\fvsa,\SW_{\nat}]$ and $\spcount[\fvsa]$. We
  will use these \fpras to give an \fpras for $\sump[\fvsa,\SW_{\Q^+}]$ and
  $\avgp[\fvsa,\SW_{\Q^+}]$.
  
  In the following, we will denote an \fpras approximation with error rate
  $\delta$ of the problem $\spAcountagg$ (resp., $\spAsumagg$ and $\spAavgagg$)
  by $\spAcountaggapx$ (resp., $\spAsumaggapx$ and $\spAavgaggapx$).

  We begin by showing that $\sump[\fvsa,\SW_{\Q^+}]$ admits an \fpras. Let
  $A\in\fvsa$ be a \vset-automaton, $\doc \in \docs$ be a document, and $\w \in
  \SW_{\Q^+}$ be a weight function. Recall that every weight $x \in \Q^+$ is
  encoded by its numerator and its denominator. Let $D$ be the set of
  denominators used by $\w$ and let $\text{lcm}$ be the least common multiple of
  all elements in $D$. We note that, as argued in the proof of
  Theorem~\ref{thm:regUpperBounds}, $\text{lcm}$ can be computed in polynomial
  time. Let $\w_\nat(\doc,\tup) = \w(\doc,\tup) \cdot \text{lcm}$. Per
  definition of $\text{lcm}$, $\w_\nat \in \SW_\nat$ only assigns natural
  numbers. Furthermore, $\w(\doc,\tup) = \frac{\w_\nat(\doc,\tup)}{\text{lcm}}$.
  It follows that $\spAsumaggapx \eqdef
  \frac{\spsum(\toSpanner{A},\doc,\w_\nat,\delta)}{\text{lcm}}$ is an
  $\delta$-approximation of $\spsumagg$ with success probability $\frac{3}{4}$,
  concluding this part of the proof.
  
  It remains to show that $\avgp[\fvsa,\SW_{\Q_+}]$ admits an \fpras. We show
  that the algorithm which, with success rate $(\frac{3}{4})^{0.5}$, calculates
  a $\frac{\delta}{3}$-approximations for $\spcount$ and $\spsum$, and then
  returns the quotient of the results, is an \fpras for the problem
  $\avgp[\fvsa,\SW_{\Q_+}]$. We note that the probability that both
  approximations are successful is $(\frac{3}{4})^{0.5} \cdot
  (\frac{3}{4})^{0.5} = \frac{3}{4}$.
  
  It remains to show that the quotient of both results, $\spAavgaggapx \eqdef
  \frac{\spsum(\toSpanner{A},\doc,\w,\frac{\delta}{3})}{\spcount(\toSpanner{A},
    \doc, \frac{\delta}{3})}$, is indeed a $\delta$-approximation of
  $\spAavgagg$. Formally, we have to show that
  \[
    (1-\delta) \cdot \spavgagg \leq \spAavgaggapx \leq (1+\delta)\cdot
    \spAavgagg\;.
  \]
  We begin with the first inequality:
  \begin{align*}
    \spAavgaggapx
    &= \frac{\spsum(\toSpanner{A},\doc,\w,\frac{\delta}{3})}{\spcount(\toSpanner{A},\doc,\frac{\delta}{3})} 
    \geq \frac{(1-\frac{\delta}{3})\cdot \spAsumagg}{(1+\frac{\delta}{3})\cdot\spAcountagg} \\
    &= \frac{1-\frac{\delta}{3}}{1+\frac{\delta}{3}} \cdot \frac{\spAsumagg}{\spAcountagg} 
    \geq (1-\delta) \cdot \spAavgagg\;.
  \end{align*}
  It is straightforward to verify that
  $\frac{1-\frac{\delta}{3}}{1+\frac{\delta}{3}} \geq (1-\delta)$ holds for every
  $0 \leq \delta \leq 1$.
  The second inequality follows analogously:
  \begin{align*}
    \spAavgaggapx
    &= \frac{\spsum(\toSpanner{A},\doc,\w,\frac{\delta}{3})}{\spcount(\toSpanner{A},\doc,\frac{\delta}{3})}  
    \leq \frac{(1+\frac{\delta}{3})\cdot \spAsumagg}{(1-\frac{\delta}{3})\cdot\spAcountagg} \\
    &= \frac{1+\frac{\delta}{3}}{1-\frac{\delta}{3}} \cdot \frac{\spAsumagg}{\spAcountagg}
    \leq (1+\delta) \cdot \spAavgagg\;.
  \end{align*}
  Again, it is straightforward to verify that
  $\frac{1+\frac{\delta}{3}}{1-\frac{\delta}{3}} \leq (1+\delta)$ holds for every
  $0 \leq \delta \leq 1$.
\end{proof}

Our second positive result is about approximating quantiles \emph{in a
  positional manner}. Let \doc be a document, \spanner be a
document spanner, \w be a weight function and $0 \leq q \leq 1$ with $q \in
\rationals$. Then, for $\delta > 0$, we say that $k \in \rationals$ is a
positional $\delta$-approximation of $\spquantagg$ if there is a $q^\prime \in
\rationals,$ with $q-\delta \leq q^\prime \leq q + \delta$ and $k =
\spqquantagg{q^\prime}$.\footnote{The idea of positional quantile approximations
  was originally introduced by Manku et al.~\cite{MankuRL98} in the context of
  quantile computations with limited memory.}

\begin{lemma}[Hoeffding's Inequality]Let $X_1,\ldots,X_n$ be independent random variables with $0 \leq X_i \leq 1$
  for $1 \leq i \leq n$. Let $X = \alphabet_{i=1}^n X_i$ and let $\mathrm{EX}$
  denote the expectation of $X$. Then, for any $\lambda > 0$,
  $\propab{X-\mathrm{EX} \geq \lambda} \leq e^{\frac{-2\lambda^2}{n}}$.
\end{lemma}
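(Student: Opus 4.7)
The plan is to use the standard Chernoff--Cram\'er exponential moment method, specialized through Hoeffding's lemma for bounded random variables. The argument has three clean ingredients that I would assemble in order.

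First, I would apply Markov's inequality to the exponentiated deviation. For any $t > 0$, using monotonicity of $x \mapsto e^{tx}$,
\[
\propab{X - \mathrm{EX} \geq \lambda} \;=\; \propab{e^{t(X - \mathrm{EX})} \geq e^{t\lambda}} \;\leq\; e^{-t\lambda}\,\mathrm{E}\!\left[e^{t(X-\mathrm{EX})}\right].
\]
Since the $X_i$ are independent, the centered variables $Y_i \eqdef X_i - \mathrm{E}X_i$ are also independent, so the moment generating function factors:
\[
\mathrm{E}\!\left[e^{t(X-\mathrm{EX})}\right] \;=\; \prod_{i=1}^{n} \mathrm{E}\!\left[e^{tY_i}\right].
\]

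Second, I would establish Hoeffding's lemma as the key technical step: for any random variable $Y$ with $\mathrm{E}Y = 0$ and $a \leq Y \leq b$ almost surely, $\mathrm{E}[e^{tY}] \leq \exp(t^2(b-a)^2/8)$. This is the main obstacle in the proof. The standard way is to use convexity of $e^{ty}$ to write $e^{ty} \leq \tfrac{b-y}{b-a} e^{ta} + \tfrac{y-a}{b-a} e^{tb}$ for $y \in [a,b]$, take expectations (using $\mathrm{E}Y = 0$), and then show that the resulting function $\varphi(t) \eqdef \log\!\big(\tfrac{b}{b-a}e^{ta} - \tfrac{a}{b-a}e^{tb}\big)$ satisfies $\varphi(0) = \varphi'(0) = 0$ and $\varphi''(t) \leq (b-a)^2/4$, whence a second-order Taylor expansion gives $\varphi(t) \leq t^2(b-a)^2/8$. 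In our setting $Y_i = X_i - \mathrm{E}X_i$ lies in an interval of length at most $1$ (since $0 \leq X_i \leq 1$), so $\mathrm{E}[e^{tY_i}] \leq e^{t^2/8}$.

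Third, I would combine the pieces and optimize. Plugging the per-variable bound into the product,
\[
\propab{X - \mathrm{EX} \geq \lambda} \;\leq\; e^{-t\lambda}\,\prod_{i=1}^n e^{t^2/8} \;=\; \exp\!\left(-t\lambda + \tfrac{nt^2}{8}\right).
\]
This bound holds for every $t > 0$, so I would minimize the exponent over $t$. Setting the derivative to zero gives $t^\ast = 4\lambda/n$, and substituting yields exponent $-2\lambda^2/n$, so
\[
\propab{X - \mathrm{EX} \geq \lambda} \;\leq\; e^{-2\lambda^2/n},
\]
as claimed. The only genuine work is Hoeffding's lemma itself; the rest is Markov plus a one-variable optimization.
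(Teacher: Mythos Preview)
Your proof is correct and is the standard Chernoff--Hoeffding argument. However, the paper does not actually prove this lemma: it is stated as a named, well-known result (Hoeffding's Inequality) and used as a black box in the subsequent theorem on positional quantile approximation, with no proof supplied. So there is nothing in the paper to compare your argument against; you have simply filled in a proof the authors chose to omit as classical.
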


\begin{algorithm}[t]
  \DontPrintSemicolon
	\KwIn{$A \in \fvsa, \doc \in \docs, \w \in \PW, 0 \leq q \leq 1, 0 \leq \delta
    \leq 1$}
  \KwOut{A positional $\delta$-approximation of $\spAqquantagg{q}$ with success
    rate $\frac{3}{4}$.}
  $W \gets \multiset{\cdot}$\;
  \For{$1 \leq i \leq 4\cdot \lceil \frac{\ln(16)}{2\delta^2} \rceil$}{
    $\tup \gets$ Sample($A,\doc,\frac{\delta}{3}$)\;
    Add $\w(\doc,\tup)$ to $W$\;
  }
  \If{$|W| < \lceil \frac{\ln(16)}{2\delta^2} \rceil$}{
    \textbf{Fail}\Comment*[r]{\textrm{Sample size too
        small}}\label{alg:posQuantilapxFail}
  }
  \textbf{Return} \qquantl{q}(W)\;
  \caption{PositionalQuantileApprox($A,\doc,\w,q,\delta$)}\label{alg:posQuantilapx}
\end{algorithm}

\begin{theorem}\label{thm:epsQuantApproximation}Let $0 \leq q \leq 1$. There is a probabilistic
  algorithm that calculates a positional $\delta$-approximation of
  $\quantp[\fvsa,\PW]$ with success probability at least $\frac{3}{4}$.
  Furthermore, the run time of the algorithm is polynomial in the input and
  $\frac{1}{\delta}.$
\end{theorem}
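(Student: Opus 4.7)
The plan is to show that Procedure~\ref{alg:posQuantilapx} computes a positional $\delta$-approximation of the $q$-quantile with probability at least $3/4$, in time polynomial in the input and $1/\delta$. The argument decomposes into three independent parts: instantiating an approximately uniform sampler for tuples in $\toSpanner{A}(\doc)$, applying a concentration inequality to argue that the empirical $q$-quantile coincides with a true quantile at a position in $[q-\delta,q+\delta]$, and controlling the number of successful samples.

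For the sampler $\textsf{Sample}(A,\doc,\delta/3)$, I would reduce to the FPRAS of Arenas et al.~\cite{ArenasCJR19} for almost uniformly sampling length-$n$ words from an NFA. Concretely, I would build in polynomial time an NFA recognising exactly the ref-words $\refWord \in \reflang(A)$ with $\clr(\refWord)=\doc$ that satisfy the variable order condition; by Proposition~\ref{prop:epsilonRemoval} we may assume $A$ has no $\varepsilon$-transitions, so all such ref-words have length $|\doc|+2|\vars(A)|$. By Observation~\ref{obs:canonicalRefWord} there is then a bijection between these ref-words and the tuples in $\toSpanner{A}(\doc)$. Invoking the Arenas et al.\ sampler at that length and mapping back via $\toTuple{\cdot}$ yields a $(\delta/3)$-almost uniform sampler over $\toSpanner{A}(\doc)$, running in time polynomial in $|A|$, $|\doc|$, and $1/\delta$.

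Given the sampler, I would fix $n \eqdef \lceil \ln(16)/(2\delta^2)\rceil$ and compare the empirical cumulative distribution $\hat{p}_r \eqdef |\{\tup \in W : \w(\doc,\tup) \leq r\}|/|W|$ against the true one $p_r \eqdef |\{\tup \in \toSpanner{A}(\doc) : \w(\doc,\tup) \leq r\}|/|\toSpanner{A}(\doc)|$. Because the sampler is $(\delta/3)$-almost uniform, the expectation of $\hat{p}_r$ is within $\delta/3$ of $p_r$, and Hoeffding's inequality (applied at the true quantile value and the empirical output value, with a union bound over these $O(1)$ thresholds) bounds the remaining deviation by $2\delta/3$ with probability at least $15/16$. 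From $|\hat{p}_r - p_r| \leq \delta$ at both relevant thresholds, the definition of quantile then gives that the returned value equals $\spqquantagg{q'}$ for some $q' \in [q-\delta,q+\delta]$.

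Finally, to handle the internal failure probability of $\textsf{Sample}$, I would use a Chernoff bound over the $4n$ independent invocations, arguing that the multiset $W$ contains at least $n$ entries with probability at least $15/16$, so that line~\ref{alg:posQuantilapxFail} is rarely triggered. A union bound over the two failure events yields the claimed overall success probability of at least $3/4$. The hard part will be engineering the exact constants so that the two failure probabilities combined stay below $1/4$, while keeping $n$ and the per-sample cost (one $\textsf{Sample}$ call plus one evaluation of $\w \in \PW$) polynomial in $|A|$, $|\doc|$, and $1/\delta$; the polynomial running time of the Arenas et al.\ sampler together with the polynomial-time assumption on $\w$ then delivers the required overall complexity bound.
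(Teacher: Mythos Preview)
Your overall strategy matches the paper's: sample tuples via the Arenas et al.\ generator, take the empirical $q$-quantile of the resulting weight multiset, and combine a concentration bound with a bound on the number of successful samples. Two points are worth noting, though.

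First, the paper uses a slightly different (and simpler) interface to the Arenas et al.\ result: rather than a $(\delta/3)$-almost uniform sampler, it invokes \cite[Corollary~4.1]{ArenasCJR19} as a Las Vegas procedure that either fails (with probability at most~$1/2$, and reports this) or returns an \emph{exactly} uniform tuple from $\toSpanner{A}(\doc)$. Conditioned on success, there is no bias term at all, so the $\delta/3$ budget you reserve for sampler bias is unnecessary; the whole $\delta$ can go into the Hoeffding deviation. This streamlines the constants and removes one layer from the argument.

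Second, your Hoeffding step as stated has a small gap: you propose to apply it ``at the true quantile value and the empirical output value.'' The latter is data-dependent, so a union bound over ``$O(1)$ thresholds'' does not directly cover it. The paper instead fixes the two deterministic thresholds $w_{q-\delta} \eqdef \spAqquantagg{(q-\delta)}$ and $w_{q+\delta} \eqdef \spAqquantagg{(q+\delta)}$ and applies Hoeffding to the events $|\{x\in W: x<w_{q-\delta}\}|\geq q\cdot s$ and $|\{x\in W: x>w_{q+\delta}\}|\geq (1-q)\cdot s$. When both events fail, the empirical $q$-quantile of $W$ is forced to lie in $[w_{q-\delta},w_{q+\delta}]$, which is exactly what a positional $\delta$-approximation requires. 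This is an easy fix, but you should make the thresholds fixed rather than sample-dependent.
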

\begin{proof}

  Let $A \in \fvsa$ be a functional \vset-automaton and $\doc \in \docs$ be a
  document. Arenas et al.~\cite[Corollary 4.1]{ArenasCJR19} showed that given a
  functional \vset-automaton, one can sample tuples $\tup \in
  \toSpanner{A}(\doc)$ uniformly at random with success probability $\geq
  \frac{1}{2}$.\footnote{We note that the sampling algorithm by Arenas et
    al.~\cite[Corollary 4.1]{ArenasCJR19} detects and reports failures.} We will
  use this sampling algorithm to first create a sample of the assigned weights
  and then return the $\qquantl{q}$ of this sample. The algorithm is depicted in
  Algorithm~\ref{alg:posQuantilapx}.

  We note that this algorithm has two points of failure. On one hand, it can
  happen that less then $s \eqdef \lceil \frac{\ln(16)}{2\delta^2} \rceil$ calls
  to the sampling algorithm of Arenas et al.~\cite{ArenasCJR19} are successful.
  On the other hand, it can happen that the returned quantile is no positional
  $\delta$-approximation of the quantile. We show that both of these points of
  failure have a probability of less than $\frac{1}{8}$. Thus, the probability
  that the whole algorithm is successful is $\frac{7}{8} \cdot \frac{7}{8} >
  \frac{3}{4}$. We will first show that Line~\ref{alg:posQuantilapxFail} is
  reached with probability less than $\frac{1}{8}$.

  The success probability of each call to the sampling algorithm of Arenas et
  al.~\cite{ArenasCJR19} is at least $\frac{1}{2}$. Thus, the expected number of
  samples, generated by $4s$ consecutive calls to the algorithm is at least
  $2s$. Using Hoeffding's Inequality, the probability that $4s$
  consecutive calls to the sampling algorithm yield less than $s$ samples is
  less than $e^{-s}$ and therefore less than $\frac{1}{8}$ for every $s \geq
  3$.\footnote{Obviously, we can call the sampling algorithm 16 times for $s =
    1$ and $s = 2$ to ensure a failure rate of less than $\frac{1}{8}$.}

  It remains to show that a total of $s$ samples is enough to guarantee that the
  $\qquantl{q}$ of $W$ is a positional $\delta$-approximation of $\spAquantagg$
  with probability at least $\frac{7}{8}$.
  
  Let $w_{q-\delta} = \spAqquantagg{(q-\delta)}$ and $w_{q+\delta} =
  \spAqquantagg{(q+\delta)}$. Furthermore, let $W_{q-\delta} = \multiset{x \in W
    \mid x < w_{q-\delta}}$ and $W_{q+\delta} = \multiset{x \in W \mid x >
    w_{q+\delta}}$. We say that a sample is bad, if either $|W_{q-\delta}| \geq
  q \cdot s$ or $|W_{q+\delta}| \geq (1-q) \cdot s$. We will first show that the
  probability that $|W_{q-\delta}| \geq q \cdot s$ is at most
  $e^{-2\delta^2\cdot s}$. For each element $x \in W$ the probability that $x
  \in W_{q-\delta}$ is at most $(q-\delta)$. Thus, the expected size of
  $W_{q-\delta}$ is $(q-\delta) \cdot s$. Using Hoeffding's Inequality, with
  $\lambda = \delta\cdot s$ the probability that $|W_{q-\delta}| \geq q\cdot s$
  is at most $e^{-2\delta^2\cdot s}$. On the other hand, the for each element $x
  \in W$ the probability that $x \in W_{q+\delta}$ is at most $(1-(q+\delta)) =
  1-q-\delta$. Thus, the expected size of $W_{q+\delta}$ is $(1-q-\delta) \cdot
  s$. Again, using Hoeffding's Inequality, with $\lambda = \delta \cdot s$ the
  probability that $|W_{q+\delta}| \geq (1-q) \cdot s$ is at most
  $e^{-2\delta^2\cdot s}$. Therefore, the probability for a bad sample is at
  most $2\cdot e^{-2\delta^2\cdot s}$. Due to $s = \lceil
  \frac{\ln(32)}{2\delta^2}\rceil$, the probability of a bad sample is at most
  $\frac{1}{8}$, concluding the proof.
\end{proof}

 \section{Conclusions}
\label{conclusions}
We investigated the computational complexity of common aggregate functions over
regular document spanners given as regex formulas and \vset-automata. While each
of the studied aggregate functions is intractable in the general case, there are
polynomial-time algorithms under certain general assumptions. These include the
assumption that the numerical value of the tuples is determined by a constant
number of variables, or that the spanner is represented as an (unambiguous)
\vset-automaton. Moreover, we established quite general tractability results
when randomized approximations (FPRAS) are possible. The upper bounds that we
obtained for general (functional) \vset-automata immediately generalize to
aggregate functions over queries that involve relational-algebra operators and
string-equality conditions on top of spanners, whenever these inner queries can
be \emph{efficiently} compiled into a single
\vset-automaton~\cite{FreydenbergerKP18, PeterfreundFKK19}. Moreover, these
upper bounds immediately generalize to allow for \emph{grouping} (i.e., the
GROUP BY operator) by computing the tuples of the grouping variables and
applying the algorithms to each group separately.

We identified several interesting cases where the computation of
$\alpha(\spanner(\doc))$ can avoid the materialization of the exponentially
large set $\spanner(\doc)$, where, $\doc$ is the document, $\spanner$ is the
spanner, and $\alpha$ is the aggregate function. Notably, this is the case (1)
for \minp with general \vset-spanners and weight functions in \regtrop, \UREG,
and \SW, (2) for \maxp with general \vset-spanners and weight functions in \UREG
and \SW, (3) for \sump and \avgp with \ufvsa-spanners and weight functions in
\regnum, \UREG and \SW, and (4) for \quantp with \ufvsa-spanners and \SW weight
functions.

Yet, several basic questions are left for future investigation. A natural next
step would be to seek additional useful assumptions that cast the aggregate
queries tractable: Can monotonicity properties of the numerical functions lead
to efficient algorithms in cases that are otherwise intractable? What are the
regex formulas that can be efficiently translated into unambiguous
\vset-automata (and, hence, allow to leverage the algorithms for such
\vset-automata)? Another important direction is to generalize the results in a
more abstract framework, such as the \emph{Functional Aggregate Queries}
(FAQ)~\cite{KhamisNR16}, in order to provide a uniform explanation of our
findings and encompass general families of aggregate functions rather than
specific ones. Finally, the practical side of our work remains to be studied:
How do we make our algorithms efficient in practice? How effective is the
sampling approach in terms of the balancing between accuracy and execution cost?
Can we accurately compute estimators of aggregate functions over (joins of)
spanners within the setting of \emph{online aggregation}~\cite{HaasH99,
  LiWYZ16}?

Some of our tractability results reduce the aggregation problems to path problems in DAGs. Since these DAGs are not prohibitively large, we believe that this approach may already be a valid basis for a concrete implementation. Testing empirically whether this is actually the case is, however, a topic for future work.

\section*{Acknowledgment}
The authors are grateful to Noa Bratman for participating in the initial efforts on the research reported in this manuscript. Furthermore, we thank the anonymous reviewers of ICDT 2021 and LMCS for many helpful
remarks.
This work was supported by the German-Israeli Foundation for Scientific Research and Development (GIF), grant I-1502-407.6/2019.
The work of Johannes Doleschal and Wim Martens was also supported by the Deutsche Forschungsgemeinschaft (DFG), grant 369116833.  
The work of Benny Kimelfeld was also supported by the Israel Science Foundation (ISF), grants 1295/15 and 768/19, and the DFG project 412400621 (DIP program).

\bibliographystyle{alphaurl}
\bibliography{references}

\end{document}